\newcolumntype{C}{@{}>{\centering\arraybackslash}p{.333\linewidth}@{}}
\def\Snospace~{\S{}}
\newcommand{\header}[1]{\medskip\noindent\textbf{#1}}
\newcommand{\remarks}{\medskip\noindent\emph{Remarks}}
\newcommand{\E}{\mathbb{E}}
\newcommand{\argmax}{\mathop{\arg\max}}
\newcommand{\NbrO}[1]{\Gamma_{\!\text{out}}(#1)}
\newcommand{\NbrI}[1]{\Gamma_{\!\text{in}}(#1)}
\newcommand{\Dp}[2]{\Delta p_{#1}^{#2}}
\newcommand{\Dh}[2]{\Delta h_{#1}^{#2}}
\newcommand{\Dhp}[2]{\Delta\hat{p}_{#1}^{#2}}
\newcommand{\Dhh}[2]{\Delta\hat{h}_{#1}^{#2}}
\newcommand{\CW}{\mathcal{W}}
\newcommand{\FAP}{F_{\text{AP}}}
\newcommand{\FHT}{F_{\text{HT}}}
\newcommand{\hFAP}{\hat{F}_{\text{AP}}}
\newcommand{\hFHT}{\hat{F}_{\text{HT}}}
\newcommand{\dAP}{\delta_{\text{AP}}}
\newcommand{\dHT}{\delta_{\text{HT}}}
\newcommand{\hdAP}{\hat{\delta}_{\text{AP}}}
\newcommand{\hdHT}{\hat{\delta}_{\text{HT}}}
\newtheorem{definition}{Definition}
\newtheorem{problem}{Problem}
\newtheorem{theorem}{Theorem}
\journal{Information Sciences}
\begin{document}

\begin{frontmatter}

\title{Optimizing Node Discovery on Networks: Problem Definitions, Fast
    Algorithms, and Observations}

\author[cuhk]{Junzhou Zhao}
\author[xjtu]{Pinghui Wang}
\author[cuhk]{John C.S. Lui}
\address[cuhk]{The Chinese University of Hong Kong}
\address[xjtu]{Xi'an Jiaotong University}

\begin{abstract}
Many people dream to become famous, YouTube video makers also wish their videos to
have a large audience, and product retailers always hope to expose their products
to customers as many as possible.
Do these seemingly different phenomena share a common structure?
We find that fame, popularity, or exposure, could be modeled as a node's {\em
  discoverability} on some properly defined network, and all of the previously
mentioned phenomena can be commonly stated as {\em a target node wants to be
  discovered easily by the other nodes in the network}.
In this work, we explicitly define a node's {\em discoverability} in a network,
and formulate a general {\em node discoverability optimization} problem, where the
goal is to create a budgeted set of incoming edges to the target node so as to
optimize the target node's discoverability in the network.
Although the optimization problem is proven to be NP-hard, we find that the
defined discoverability measures have good properties that enable us to use a
greedy algorithm to find provably near-optimal solutions.
The computational complexity of a greedy algorithm is dominated by the time cost
of an {\em oracle call}, i.e., calculating the marginal gain of a node.
To scale up the oracle call over large networks, we propose an {\em
  estimation-and-refinement} approach, that provides a good trade-off between
estimation accuracy and computational efficiency.
Experiments conducted on real-world networks demonstrate that our method is
thousands of times faster than an exact method using dynamic programming, thereby
allowing us to solve the node discoverability optimization problem on large
networks.

\end{abstract}

\begin{keyword}

submodular/supermodular set function\sep
greedy algorithm\sep
MCMC simulation\sep
random walk

\end{keyword}

\end{frontmatter}

\section{Introduction}
\label{sec:introduction}

We consider a general problem of adding a budgeted set of new edges to a graph,
that each new edge connects an existing node in the graph to a newly introduced
{\em target node}, so that the target node can be {\em discovered} easily by
existing nodes in the new graph.
We refer to this problem as the target node {\em discoverability optimization
  problem} on networks.

\header{Motivation.}
The problem of optimizing node discoverability on networks appears in a wide range
of applications.
For example, a YouTube video maker may wish her videos to have a large audience
and click traffic.
In YouTube, each video is related to a set of recommended videos, and the majority
of YouTube videos are discovered and watched by viewers following related
videos~\cite{Zhou2010}.
Hence, if a video maker could make her video related to a set of properly chosen
videos (i.e., make her video appear in each chosen video's related video list),
her video may have more chance to be discovered and watched.
This task is known as the {\em related video optimization
  problem}~\cite{related_video}, and in practice, a video maker can make her video
related to some other videos by writing proper descriptions, choosing the right
title, adding proper meta-data and keywords~\cite{grow_audience}.
In this application, we can build a {\em video network}, where a node represents a
video, and a directed edge represents one video relating to another.
Then, making a target video related to a set of existing videos is equivalent to
adding a set of edges from existing nodes to the target node in the video network.
Therefore, the related video optimization problem is actually a target node
discoverability optimization problem.

As another application, let us consider the advertising service provided by many
retail websites, such as Amazon~\cite{amazon} and Taobao~\cite{taobao}.
A major concern of product sellers is that whether customers could easily discover
their products on these retail websites~\cite{Antikacioglu2015}.
One important factor that affects the discoverability of an item on a retail
website is {\em what other items' detail pages display this item}.
For example, on Amazon, a seller's product could be displayed on a related
product's detail page in the list ``sponsored products related to this item''.
If an item was displayed on several popular or best selling products' detail
pages, the item would be easily discovered by many customers, and have good sells.
A product seller has some control to decide how strong her item is related to some
other items, e.g., a book writer on Amazon can choose proper keywords or features
to describe her book, set her interests, other similar books, and cost-per-click
bid~\cite{amazon-ads}.
In this application, we can build an {\em item network}, where a node represents
an item, and a directed edge represents one item relating to another.
Therefore, optimizing the discoverability of an item by relating to other proper
items on a retail website can be formulated as the target node discoverability
optimization problem.

In the third application, we consider the message forwarding processes on a
follower network, such as tweet re-tweeting on Twitter.
In a follower network, a person (referred to as a \emph{follower}) could follow
another person (referred to as a \emph{followee}), and then the follower could
receive messages posted or re-posted by her followees.
In this way, messages diffuse on a follower network through forwarding by users
(with direction from a followee to her followers).
Hence, what followees a person chooses to follow determines what messages she
could receive and how soon the messages could arrive at the person.
The problem of choosing an optimal set of followees for a new user to maximize
information coverage and minimize time delay is known as the {\em whom-to-follow
  problem}~\cite{Zhao2014a}.
On the other hand, if we consider this problem from the perspective of messages,
we actually want messages to reach the user efficiently (through re-posting) by
adding few new edges in the follower network.
Therefore, the whom-to-follow problem can also be formulated as the target node
discoverability optimization problem.

\header{Related Work.}
Despite the pervasive applications of the node discoverability optimization
problem in practice, it is surprising that there is even no explicit definition of
node discoverability in a network in the literature.
Suppose we could leverage the concept of node centrality~\cite{Freeman1978}, say,
the closeness centrality~\cite{Cohen2014}, to quantify a node's discoverability in
a network, i.e., a node is closer to other nodes in the network, it is more
discoverable.
However, how to optimize a node's closeness centrality by adding new edges in the
network could be extremely difficult, especially for large networks.
Antikacioglu et al.~\cite{Antikacioglu2015} study the web discovery optimization
problem in an e-commerce website, and the goal is to add links from a small set of
popular pages to new pages to make as many new pages discoverable as possible
(under some constraints).
Here, a page is discoverable if it has at least $a\geq 1$ links from popular pages
in the site.
However, such a definition of discoverability is too strict, as it actually
assumes that a user is only allowed to browse a website for at most one hop to
discover a page.
In practice, a user may browse the site for several hops, and finally discover a
page, even though the page may have no link from popular pages at all.
Rosenfeld and Globerson~\cite{Rosenfeld2016} study the optimal tagging problem in
a network consisting of tags and items, and their goal is to pick $k$ tags for
some new item in order to maximize the new item's incoming traffic.
This problem is formulated as maximizing the absorbing probability of an absorbing
state (representing the new item) in a Markov chain by adding $k$ new transitions
to the absorbing state, and the nice theoretical results of absorbing Markov
chains facilitate expressing absorbing probability in terms of the fundamental
matrix of the chain~\cite{Trivedi2016}.
We notice that, measuring a node's discoverability by absorbing probability
relieves the restriction of~\cite{Antikacioglu2015}, but it implicitly assumes
that a user has infinite amount of time or patience to browse the network to
discover an item, which is, however, not usually the case in
practice~\cite{Simon1971,Scaria2014}.

\header{Present work.}
In this work, we study the general problem of node discoverability optimization on
networks.
We will no longer mention the particular application in the following discussion,
and we consider the problem in a general weighted directed graph, which could
represent the video network, item network, or follower network.
We first propose several definitions of node discoverability in a network, which
measure node discoverability from different perspectives, and then provide a
unified framework for optimizing node discoverability by adding new edges in the
network.
The main result of our work is an efficient graph computation system that enables
us to address the node discoverability optimization problem over million scale
large graphs on a common PC.

\header{Measuring node discoverability by finite length random walks.}
To quantify a node's discoverability in a network, we propose two measures based
on finite length random walks~\cite{Lovasz1993}.
More specifically, we measure discoverability of the target node by analyzing a
collection of random walks that start from existing nodes in the network, and we
consider (1) the probability that a random walk could finally hit the target node,
and (2) the average number of steps that a random walk finally reaches the target
node.
Intuitively, if a random walk starting from a node $i$ could reach the target node
with high probability, and use few steps on average, then we say that the target
node is discoverable by node $i$.
Using random walks to measure discoverability is general, because many real-world
processes are indeed suitable to be modeled as random walks, e.g., user watching
YouTube videos by following related videos~\cite{Kumar2015}, people's navigation
and searching behaviors on the Web~\cite{Scaria2014} and peer-to-peer
networks~\cite{Gkantsidis2006}, and some diffusion processes such as letter
forwarding in Milgram's small-world experiment~\cite{Travers1969}.

\header{Efficient optimization via estimating-and-refining.}
The optimization problem asks us to add a budgeted set of new edges to the graph
that each new edge connects an existing node to the target node, to optimize the
target node's discoverability in the new graph.
The optimization problem is NP-hard, which inhibits us to find the optimal
solutions for a large network.
We find that the two objectives are submodular and supermodular, respectively, and
hence we can obtain approximate solutions by the greedy algorithm, which has
polynomial time complexity and constant approximation
factor~\cite{Nemhauser1978,Nemhauser1978a,Minoux1978}.
The main challenge here is how to scale up the greedy algorithm over large
networks containing millions of nodes/edges.
The computational complexity of the greedy algorithm is dominated by the time cost
of an {\em oracle call}, i.e., calculating the marginal gain of a node.
To speed up the oracle call, we propose an {\em estimation-and-refinement}
approach, that has a good trade-off between estimation accuracy and computational
efficiency.
Our final designed system is built on top of the contemporary efficient MCMC
simulation systems~\cite{Fogaras2005,Kyrol2013,Liu2016c}, and is empirically
demonstrated to be thousands of times faster than a naive approach based on
dynamic programming.

\header{Contributions.}
We make following contributions in this work:
\begin{itemize}
\item We formally define the node discoverability on networks, and propose a
  unified framework for the problem of optimizing node discoverability on
  networks.
  The problem is general and appears in a wide range of practical applications.
\item We prove the objectives satisfying submodular and supermodular properties,
  respectively.
  We propose an efficient estimation-and-refinement approach to implement the
  oracle call when using the greedy algorithm to find quality guaranteed
  solutions.
  Our proposed approach has a good trade-off between estimation accuracy and
  computational efficiency.
\item We conduct extensive experiments on real networks to evaluate our proposed
  method.
  The experimental results demonstrate that the estimation-and-refinement approach
  is thousands of times faster than a naive method based on dynamic programming,
  and hence the approach enables us to handle large networks with millions of
  nodes/edges.
\end{itemize}

\header{Outline.}
The reminder of this paper will proceed as follows.
In~\autoref{sec:problem}, we formally define the node discoverability, and
formulate two versions of node discoverability optimization problem.
In~\autoref{sec:method}, we elaborate three methods to address the optimization
problem.
In~\autoref{sec:experiment}, we conduct experiments to validate the proposed
methods.
In~\autoref{sec:applications}, we give applications of the node discoverability
optimization problem.
\autoref{sec:relatedwork} provides more related work in the literature, and
finally~\autoref{sec:conclusion} concludes.
Proofs of our main results are provided in Appendix.

\section{Preliminaries and Problem Formulation}
\label{sec:problem}

In this section, we propose two definitions of node discoverability on a network,
and then formulate two versions of the node discoverability optimization problem.
Finally, we discuss several important properties of the optimization problem.

\begin{table}[t]
  \small
  \centering
  \caption{Frequently used notations\label{tab:notations}}
  \begin{tabular}{@{ }c@{ }|@{ }l}
    \hline
    \hline
    {\bf symbol} & {\bf description} \\
    \hline
    $G = (V, E)$ & digraph with node set $V\!=\!\{0,\ldots,n\!-\!1\}$ and
                   edge set $E$ \\
    $n\notin V$ & the target node, or the size of $V$ \\
    $S\subseteq V$ & connection sources \\
    $E_S\!\triangleq\!\{(i,n)\colon i\!\in\! S\}$ & newly added edges \\
    $G' = (V', E')$ & graph after adding node $n$ and edges in $E_S$ \\
    $\NbrO{i}, \NbrI{i}\subseteq V'$ & out- and in-neighbors of node $i$ in
                                       graph $G'$ \\
    $w_{ij}, p_{ij}$ & weight and transition probability on edge $(i,j)$ \\
    $p_i^t, h_i^t$  & truncated absorbing probability/hitting time
                      from $i$ to $n$ \\
    $\Dp{i}{t}(s), \Dh{i}{t}(s)$ & change of truncated absorbing
                                   probability/hitting time \\
    $\FAP(S),\FHT(S)$     & D-AP and D-HT \\
    $\dAP(s;S),\dHT(s;S)$ & marginal gains \\
    $T$ & maximum length of a walk \\
    $R$ & number of walks from each node \\
    $D$ & refinement depth \\
    $b_{ir},b_w\in\{0,1\}$ & the corresponding walk hit or miss target $n$\\
    $t_{ir},t_w\in[0,T]$ & the number of steps walked by the walk\\
    \hline
  \end{tabular}
\end{table}

\subsection{Node Discoverability Definitions}

Let $G=(V,E)$ denote a general weighted and directed graph, where $V=\{0, \ldots,
n-1\}$ is a set of $n$ nodes, and $E\subseteq V\times V$ is a set of edges.
Each edge $(i,j)\in E$ is associated with a positive weight $w_{ij}$.
For example, in the YouTube video network, $w_{ij}$ could represent the
relationship strength that video $j$ is related to video $i$.
For the convenient of our following discussion, if a node has no out-neighbor,
i.e., a dangling node, we manually add a self-loop edge on this node with weight
one, which is equivalent to turn this node into an absorbing node.

We consider the discoverability of a newly introduced node, denoted by $n$, e.g.,
a newly uploaded video in YouTube, or a new product for sale on Amazon.
Node $n$ can improve its discoverability by creating an additional set of edges
$E_S\triangleq\{(i,n)\colon i\in S\subseteq V\}$, and this forms a new graph
$G'=(V',E')$ where $V'=V\cup\{n\}$ and $E'=E\cup E_S$.
$S\subseteq V$ is referred to as the {\em connection sources}, which we need to
choose from $V$.
For example, in YouTube, creating new edges $E_S$ means relating the new video $n$
to existing videos in $S$ (through writing proper descriptions, choosing the right
title, adding proper meta-data and keywords, etc.~\cite{grow_audience}), and hence
video $n$ could appear in the related video list of each video in connection
sources $S$.

We propose to quantify the discoverability of target node $n$ by random
walks~\cite{Lovasz1993}.
Let $\NbrO{i}, \NbrI{i}\subseteq V'$ denote the sets of out- and in-neighbors of
node $i$ in graph $G'$, respectively.
A random walk starts from a node in $V$, and at each step, it randomly picks an
out-neighbor $j\in\NbrO{i}$ of the currently resident node $i$ to visit, with
probability $p_{ij}\triangleq w_{ij}/\sum_{k\in\NbrO{i}}w_{ik}$.
The random walk stops once it hits the target node $n$ for the first time, or has
walked a maximum number of $T$ steps.
For such a {\em finite length random walk}, we are interested in the following two
measures.

\begin{definition}[{\bf Truncated Absorbing Probability}] \label{def:ap}
  The truncated absorbing probability of a node $i\in V$ is the probability that a
  finite length random walk starting from node $i$ will end up at the target node
  $n$ by walking at most $T$ steps, i.e., $p_i^T\triangleq P(X_t=n, t\leq
  T|X_0=i)$.
\end{definition}

The truncated absorbing probability satisfies the following useful equation.
For $t=0,\ldots,T$,
  \begin{equation} \label{eq:ap}
  p_i^t =
  \begin{cases}
    1                                & \text{if } i=n, \\
    0                                & \text{if } t=0 \text{ and } i\neq n, \\
    \sum_{k\in\NbrO{i}} p_{ik}p_k^{t-1} & \text{otherwise}.
  \end{cases}
\end{equation}
A random walk starting from node $i$ and hitting target node $n$ by walking at
most $T$ steps can be thought of as a Bernoulli trial with success probability
$p_i^T$.
Intuitively, if many random walks from different nodes in $V$ could finally hit
target node $n$ within $T$ steps, i.e., many Bernoulli trials succeed, then the
target node $n$ is discoverable, and it should have ``good'' discoverability in
graph $G'$.
This immediately leads to the following definition of node discoverability by
truncated absorbing probabilities.

\begin{definition}[{\bf Discoverability based on Truncated Absorbing
    Probabilities, abrv. D-AP}]\label{def:D-AP}
  Assume that a random walk starts from a node in $V$ uniformly at random, then
  the discoverability of target node $n$ is defined as the {\bf\em expected
    truncated absorbing probability} that a random walk starting from a node in
  $V$ could hit $n$ within $T$ steps, i.e., $\sum_{i\in V}p_i^T/n$.
\end{definition}

The value of D-AP is in the range $[0,1]$, and has a probabilistic explanation.
Although D-AP can describe the probability that a random walk starting from a node
in $V$ could hit target node $n$ within $T$ steps, it does not provide any
information about the number of steps that the walker has walked before hitting
$n$.
This inspires us to use a {\em truncated hitting time} to define another version
of node discoverability, and the truncated hitting time is defined as follows.

\begin{definition}[{\bf Truncated Hitting Time}] \label{def:ht}
  The truncated hitting time of a node $i\in V$ is the expected number of steps
  that a finite length random walk starting from node $i$ hits target node $n$ for
  the first time, or terminates at the maximum step $T$, i.e., $h_i^T\triangleq
  \E{\min\{\min\{t\colon X_0=i,X_t=n\}, T\}}$.
\end{definition}

Similar to the truncated absorbing probability, the truncated hitting time also
has a useful recursive definition.
For $t=0,\ldots,T$,
\begin{equation} \label{eq:ht}
  h_i^t =
  \begin{cases}
    0                                    & \text{if } i=n \text{ or } t=0, \\
    1+\sum_{k\in \NbrO{i}} p_{ik}h_k^{t-1}  & \text{otherwise}.
  \end{cases}
\end{equation}
The truncated hitting time was first introduced to measure the pairwise node
similarity in a graph~\cite{Sarkar2007b,Sarkar2008}.
Here, we leverage truncated hitting time to measure the discoverability of a node
in a network.
Intuitively, if random walks starting from nodes in $V$ could hit target node $n$
with small truncated hitting times on average, then node $n$ can be easily
discovered in the graph.
This implies the following definition.

\begin{definition}[{\bf Discoverability based on Truncated Hitting Times, abrv.
    D-HT}]\label{def:D-HT}
  Assume that a random walk starts from a node in $V$ uniformly at random, then
  the discoverability of target node $n$ is the {\bf\em expected number of steps}
  that a random walk starting from a node in $V$ hits $n$ for the first time, by
  walking at most $T$ steps, i.e., $\sum_{i\in V}h_i^T/n$.
\end{definition}

The value of D-HT is in the range $[0,T]$, and has a physical meaning as the
expected number of steps that a walker has walked before hitting node $n$ for the
first time.

\remarks.
(1) We use finite length random walks rather than infinite length random walks to
characterize node discoverability because that people's searching and navigation
behaviors on the Internet usually consist of finite length click paths due to time
or attention limitations~\cite{Scaria2014}.
Such a treatment can thus be viewed as a trade-off between two extremes, i.e., web
discovery optimization~\cite{Antikacioglu2015} using $T=1$, and optimal
tagging~\cite{Rosenfeld2016} using $T=\infty$.

(2) It is also straightforward to extend the two basic node discoverability
definitions to more complex definitions that encompass both truncated absorbing
probability and truncated hitting time.
For example, we can construct the following extension of node discoverability
$\sum_i(\alpha p_i^T + \beta h_i^T)/n$, where constants $\alpha\geq 0$ and
$\beta\leq 0$ represent the importance of the two parts, respectively.

\subsection{Node Discoverability Optimization}
\label{ss:objectives}

Equipped with the clear definitions of node discoverability, we are now ready to
formulate the node discoverability optimization problem.
To be more specific, the optimization problem seeks to introduce a set of new
edges $E_S=\{(s,n)\colon s\in S\subseteq V\}$ to graph $G$, and form a new graph
$G'=(V',E')$ with $V'=V\cup\{n\}$ and $E'=E\cup E_S$, so that node $n$'s
discoverability is optimal in $G'$.
Because the inclusion of new edges $E_S$ will change the graph structure, the
probability transition $p_{ij}$, truncated absorbing probability $p_i^T$, and
truncated hitting time $h_i^T$ are all functions of the connection sources $S$,
denoted by $p_{ij}(S), p_i^T(S)$ and $h_i^T(S)$, respectively.
For the two definitions of node discoverability, we formulate two instances of
node discoverability optimization problem, respectively.

\begin{problem}[{\bf D-AP Maximization}] \label{p:max}
  Given budget $B$, the objective is to create new edges $E_S$ in graph $G$, so
  that D-AP is maximized in the new graph $G'=(V',E')$, i.e.,
  \begin{align}
    \max_{S\subseteq V}\FAP(S) &\triangleq \frac{1}{n}\sum_{i\in V}p_i^T(S) \\
    s.t. \sum_{s\in S} c_s &\leq B,
  \end{align}
  where $c_s$ denotes the cost of creating edge $(s,n)\in E_S$.

\end{problem}

\begin{problem}[{\bf D-HT Minimization}] \label{p:min}
  Given budget $B$, the objective is to create new edges $E_S$ in graph $G$, so
  that D-HT is minimized in the new graph $G'=(V',E')$, i.e.,
  \begin{align}
    \min_{S\subseteq V}\FHT(S) &\triangleq \frac{1}{n}\sum_{i\in V}h_i^T(S) \\
    s.t. \sum_{s\in S}c_s &\leq B,
  \end{align}
  where $c_s$ denotes the cost of creating edge $(s,n)\in E_S$.
\end{problem}

\remarks.
(1) For brevity, we sometimes omit $S$ in above equations if no confusion arises.

(2) The cost $c_s$ of creating an edge $(s,n)$ may have different meanings in
different applications.
For example, in Amazon's item network, the cost-per-click bid is an important
factor that Amazon uses to decide whether to display the target item on some
related item's detail page~\cite{amazon-ads}.
If the related item is popular, the cost-per-click bid will also be high
accordingly; therefore, the cost of creating an edge from a popular item is
usually higher than from a less popular item.
If $c_i\equiv const.,\forall i\in V$, the knapsack constraint then degenerates to
the cardinality constraint.

(3) We can also formulate more complex instances of the node discoverability
optimization problem, that maximize D-AP and minimize D-HT at the same time.
For example, using the previous extension of node discoverability, we can
formulate a composite optimization problem:
\begin{equation}\label{eq:composite_opt}
  \max_{S\subseteq V}\frac{1}{n}\sum_{i\in V}[\alpha p_i^T(S)+\beta h_i^T(S)]
  \quad s.t. \quad \sum_{s\in S} c_s\leq B.
\end{equation}

\subsection{Discussion on Node Discoverability Optimization}

We find that it is impractical to find the optimal solutions to
Problems~\ref{p:max} and~\ref{p:min} for large networks.

\begin{theorem}\label{thm:np-hard}
  Problems~\ref{p:max} and~\ref{p:min} are NP-hard.
\end{theorem}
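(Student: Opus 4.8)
\medskip
\noindent\emph{Proof plan.}\enspace
First note that both problems lie in NP (assuming $T$ is polynomially bounded): given a candidate set $S$, unrolling the recursions~\eqref{eq:ap} and~\eqref{eq:ht} for $t=0,\dots,T$ computes every $p_i^T(S)$ and $h_i^T(S)$, hence $\FAP(S)$ and $\FHT(S)$, in polynomial time. So it suffices to prove hardness, and the plan is to reduce from \textsc{Vertex Cover} (equivalently \textsc{Set Cover}): from a graph $H=(V_H,E_H)$ and an integer $k$ we construct an instance whose optimum crosses a fixed threshold iff $H$ has a vertex cover of size $k$.

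For Problem~\ref{p:max} (D-AP Maximization) I would build $G$ out of three kinds of nodes: the target $n$; a \emph{candidate} node $y_u$ for each $u\in V_H$, which is declared the only affordable connection source by setting $c_{y_u}=1$ and all other costs to $B+1$ with $B=k$; and, for each edge $\{u,v\}\in E_H$, a large number $M$ of copies of two \emph{witness} nodes $s_{uv}^{(r)},s_{vu}^{(r)}$ with directed edges $s_{uv}^{(r)}\!\to\! y_u$ and $s_{vu}^{(r)}\!\to\! y_v$. Each $y_u$ is given unit-weight edges to $\{y_v:\{u,v\}\in E_H\}$, and every new edge $(y_u,n)$ gets a very large weight, so that a walk reaching a selected candidate is absorbed at $n$ with probability arbitrarily close to $1$. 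Taking $T=3$, a direct calculation shows that the truncated absorbing probabilities of the witnesses of an edge $\{u,v\}$, summed over both sides and over the $M$ copies, contribute $2M$ to $n\,\FAP(S)$ exactly when $\{u,v\}$ is covered by the selected vertices, the ``wrong'' one-step bounces at the $y_u$'s cancelling in the aggregate; thus $n\,\FAP(S)=2M\cdot(\#\text{covered edges}) + O(|V_H|)$, and choosing $M>|V_H|$ makes $\max_S\FAP$ exceed a fixed threshold iff $H$ has a vertex cover of size $k$. For Problem~\ref{p:min} (D-HT Minimization) the same idea works with a deep absorbing sink $z$ (with $z\!\to\! z$, never linked to $n$) into which the walk falls after one bounce off an unselected candidate, so that a witness of an uncovered edge is forced to accumulate the full cap; with $T$ taken polynomially large this again separates the two cases.

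The step I expect to be the main obstacle is designing the witness-to-candidate wiring so that the \emph{aggregate} objective genuinely realizes a coverage (``OR with preemption'') rather than a modular weighted count. If each witness branched directly to all of its incident candidates and the walk stopped after one step, then each $p_i^T(S)$, and hence $\FAP$, would be a modular set function, and the problem would be solvable in polynomial time by sorting; avoiding this forces the witnesses through the candidates along directed paths, which in turn forces the candidates to be shared by many witnesses while each is ``turned on'' by a single new edge. One then has to verify --- by an explicit but routine case analysis --- that the randomness in the walk makes these shared contributions aggregate to exactly the number of covered edges, that the large edge weights make the ``selected candidate $\Rightarrow$ absorbed at $n$'' approximation tight at the relevant threshold, and that the $\Theta(M)$ padding dominates the $O(|V_H|)$ and weight-dependent error terms; the remaining checks (that $T$ and the construction are polynomially bounded, and that every feasible $S$ is a legal choice of at most $k$ vertices) are immediate.
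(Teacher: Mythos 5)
Your plan splits into two halves of very different status. For D-AP your gadget is a genuinely different route from the paper, which disposes of that half in one line by reducing from the optimal tagging problem of Rosenfeld and Globerson; and your key identity does hold, although not in the per-edge form you state it (an uncovered edge $\{u,v\}$ whose endpoint $u$ has some \emph{other} selected neighbor still contributes, so "contribute $2M$ exactly when covered" is false edge by edge). What saves you is the aggregate: the $\deg(u)$ witnesses pointing at $y_u$ contribute, per copy and up to the large-weight error, $\deg(u)\mathbf{1}[u\in S]+\mathbf{1}[u\notin S]\,|\Gamma_H(u)\cap S|$, and summing over $u$ gives exactly twice the number of covered edges; with $M$ and the new-edge weight polynomially large, the threshold near $2M|E_H|$ separates the two cases. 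So, modulo the deferred bookkeeping, that half stands, at the cost of being much heavier than the paper's citation-based reduction.

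The D-HT half, however, has a genuine gap: the mechanism "the walk falls into $z$ after one bounce off an unselected candidate" is history-dependent and cannot be realized by a static weighted digraph. The same candidate node $y_w$ would have to forward walks to its neighbor candidates when it is the \emph{first} candidate visited (this is precisely what defeats modularity, as you note) and send them to $z$ when it is the \emph{second}. If you drop the sink and rely only on the cap, a witness of an uncovered edge is not "forced to accumulate the full cap": with $T$ polynomially large the walk keeps moving among candidates and will typically still reach some selected one, so $\FHT$ no longer tracks coverage. If instead unselected candidates go essentially straight to $z$, each witness's truncated hitting time depends only on whether the single candidate it points to is selected, $\FHT$ becomes a modular function (weights proportional to $\deg(u)(T-2)$), and the reduction proves nothing, since minimizing a modular function under the budget is easy. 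The implementable repair is a small-probability escape edge from \emph{every} vertex to an absorbing node at \emph{every} step --- which is exactly the paper's construction with the node $m$ and $p_{im}=\epsilon$ --- but then the clean "$2M\times\text{coverage}$" identity is gone, and one must instead argue, as the paper does, that for $|S|=k$ the value $\FHT(S)$ is at least $J(k)=\frac{k}{n}+\frac{n-k}{n}\bigl[2(1-\epsilon)+T\epsilon\bigr]$, with equality precisely when $S$ is a vertex cover (every non-selected vertex then has all neighbors selected). That inequality-based threshold argument is the substance of the D-HT hardness proof and is missing from your plan; also note that NP-membership is not needed for the theorem, which only claims NP-hardness.
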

\begin{proof}
  Please refer to the Appendix.
\end{proof}

While finding the optimal solutions is hard, we will now show that objectives
$\FAP$ and $\FHT$ satisfy submodularity and supermodularity respectively, which
will allow us to find provably near-optimal solutions to these two NP-hard
problems.

A set function $F\colon 2^V\mapsto\mathbb{R}$ is {\em submodular} if whenever
$S_1\subseteq S_2\subseteq V$ and $s\in V\backslash S_2$, it holds that
$F(S_1\cup\{s\})-F(S_1)\geq F(S_2\cup\{s\})-F(S_2)$, i.e., adding an element $s$
to set $S_1$ gains more score than adding $s$ to set $S_2$.
In addition, we say a submodular set function $F$ is {\em normalized} if
$F(\emptyset) = 0$.
We have the following conclusion about $\FAP$.

\begin{theorem} \label{thm:FAP}
  $\FAP(S)$ is a normalized non-decreasing submodular set function.
\end{theorem}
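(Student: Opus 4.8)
The plan is to rewrite $\FAP$ in a closed form that makes submodularity transparent, using the fact that inserting edges into $n$ only rescales the outgoing transitions of the connection sources. Concretely, let $W_s\triangleq\sum_{k\in\NbrO{s}}w_{sk}$ be the total original out-weight of $s$ in $G$ (positive by the self-loop convention on dangling nodes), $\hat p_{sk}\triangleq w_{sk}/W_s$ the original transition probabilities, and $\alpha_s\triangleq w_{sn}/(W_s+w_{sn})\in(0,1)$. In $G'$, a node $s\in S$ moves to $n$ with probability $\alpha_s$ and to an original neighbor $k$ with probability $(1-\alpha_s)\hat p_{sk}$, while every node outside $S$ keeps its $G$-transitions unchanged (the only new edges point into $n$).

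First I would prove, by induction on $t$ from recursion~\eqref{eq:ap}, the identity $1-p_i^t(S)=\gamma_i^S\sum_{k\in\NbrO{i}}\hat p_{ik}\,(1-p_k^{t-1}(S))$ for $i\in V$, where $\gamma_s^S\triangleq 1-\alpha_s$ if $s\in S$ and $\gamma_s^S\triangleq 1$ otherwise. The two cases $i\in S$ and $i\notin S$ both collapse to this after using $p_n^{t-1}=1$ and $\sum_k\hat p_{ik}=1$; the base case $t=0$ gives $1-p_i^0(S)=1$ for $i\in V$. Unrolling the recursion over the original random walk $X_0,X_1,\ldots$ on $G$ (which never visits $n$ and, by the self-loop convention, is well defined for $T$ steps) yields $1-p_i^T(S)=\E\left[\prod_{t=0}^{T-1}\gamma_{X_t}^S \mid X_0=i\right]$, and hence $\FAP(S)=1-\frac{1}{n}\sum_{i\in V}\E\left[\prod_{t=0}^{T-1}\gamma_{X_t}^S \mid X_0=i\right]$.

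From this representation all three claims drop out. Fix a walk $\omega=(x_0,\ldots,x_{T-1})$ and set $\phi_\omega(S)\triangleq\prod_{t=0}^{T-1}\gamma_{x_t}^S=\prod_{s\in S}(1-\alpha_s)^{m_s(\omega)}$, where $m_s(\omega)$ counts the visits of $\omega$ to $s$. Each factor $(1-\alpha_s)^{m_s(\omega)}$ lies in $(0,1]$, and the elementary fact that $S\mapsto\prod_{s\in S}a_s$ with $a_s\in(0,1]$ is non-increasing and supermodular applies: $\phi_\omega(S\cup\{s\})-\phi_\omega(S)=((1-\alpha_s)^{m_s(\omega)}-1)\prod_{r\in S}(1-\alpha_r)^{m_r(\omega)}$ is nonpositive with magnitude shrinking as $S$ grows. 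Averaging over the start node $i$ and taking expectation over $\omega$ are nonnegative, monotonicity- and supermodularity-preserving operations, so $g(S)\triangleq 1-\FAP(S)$ is non-increasing and supermodular; therefore $\FAP=1-g$ is non-decreasing and submodular. Normalization is immediate: for $S=\emptyset$ every $\gamma^{\emptyset}_{\cdot}=1$, so $\phi_\omega(\emptyset)=1$ and $\FAP(\emptyset)=0$ (equivalently, with no edges into $n$ no walk can reach it, so $p_i^T(\emptyset)=0$).

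The main obstacle is making the reformulation airtight rather than the final supermodularity computation: one must carefully verify that adding edges into $n$ leaves the out-transitions of every node $\ne s$ untouched and rescales those of $s\in S$ exactly as stated, and then check the inductive identity with attention to the boundary cases $t=0$, $i=n$ (where $p_n^t=1$), and dangling nodes. Once $1-p_i^T(S)=\E[\prod_{t<T}\gamma_{X_t}^S]$ is in hand the rest is the short argument above; I would also note that this "absorb-at-the-sources" decomposition is exactly what later underlies the Monte-Carlo oracle, so establishing it cleanly pays off twice.
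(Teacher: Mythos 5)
Your proof is correct, and it takes a genuinely different route from the paper. The paper proves Theorem~\ref{thm:FAP} by induction on $T$ directly on the recursion~\eqref{eq:ap}: it first shows each $p_i^T(S)$ is non-decreasing, then shows submodularity of $p_i^T(S)$ by an inductive three-case analysis ($i$ outside $S_2\cup\{s\}$, $i$ in the difference of the two sets, and $i=s$), tracking how the transition probabilities $p_{ik}(S)$ change in each case, and finally invokes closure under non-negative combinations. You instead change measure to the walk on the \emph{original} graph $G$ (whose law does not depend on $S$) and derive the closed form $1-p_i^T(S)=\E\bigl[\prod_{t=0}^{T-1}\gamma_{X_t}^S\mid X_0=i\bigr]$, after which everything reduces to the elementary fact that $S\mapsto\prod_{s\in S}a_s$ with $a_s\in(0,1]$ is non-increasing and supermodular, preserved under non-negative mixtures; your verification of the one-step identity (the cases $i\in S$ vs.\ $i\notin S$, using $p_n^{t-1}=1$) is exactly the point where the paper's case analysis gets absorbed. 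What your approach buys: a transparent, induction-free argument once the survival-product representation is in place, per-start-node (indeed per-path) submodularity for free, and a representation that matches the Monte-Carlo oracle; it also extends to $\FHT$ via $h_i^T(S)=\sum_{t=0}^{T-1}\bigl(1-p_i^t(S)\bigr)$, paralleling the paper's Theorem~\ref{thm:FHT}. What the paper's induction buys: it works directly from the recursions without needing the explicit rescaling structure $p_{sk}(S)=(1-\alpha_s)\hat p_{sk}$, so it transfers with minimal change to the hitting-time objective and to variants where the weight perturbation is less clean. One cosmetic point: in your identity the sum should run over the out-neighbors of $i$ in $G$ (the paper's $\NbrO{i}$ is defined in $G'$, which for $i\in S$ includes $n$); since the $k=n$ term vanishes and $\hat p_{in}$ is not defined, state the index set explicitly, but this is a notational fix, not a gap.
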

\begin{proof}
  Please refer to the Appendix.
\end{proof}

A set function $F\colon 2^V\mapsto\mathbb{R}$ is {\em supermodular} if
$-F$ is submodular.
We have the following conclusion about $\FHT$.

\begin{theorem} \label{thm:FHT}
  $\FHT(S)$ is a non-increasing supermodular set function.
\end{theorem}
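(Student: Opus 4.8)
The plan is to reduce Theorem~\ref{thm:FHT} to Theorem~\ref{thm:FAP} through an exact identity that rewrites the truncated hitting time as a telescoping sum of truncated absorbing probabilities taken over all horizons up to $T$. Concretely, the first step is to prove that for every node $i\in V'$ and every integer $t\ge 0$,
\[
  h_i^t(S) \;=\; \sum_{s=0}^{t-1}\bigl(1 - p_i^s(S)\bigr),
\]
with the convention that an empty sum is $0$. I would prove this by induction on $t$, using the recursions~\eqref{eq:ap} and~\eqref{eq:ht}. The base case $t=0$ and the case $i=n$ are immediate from both recursions. For $i\ne n$ and $t\ge 1$, substituting the inductive hypothesis into $h_i^t = 1 + \sum_{k\in\NbrO{i}} p_{ik}h_k^{t-1}$ and interchanging the two finite sums gives $h_i^t = 1 + \sum_{s=0}^{t-2}\bigl[\sum_{k}p_{ik} - \sum_{k}p_{ik}p_k^{s}\bigr]$; now $\sum_{k\in\NbrO{i}}p_{ik}=1$ and $\sum_{k\in\NbrO{i}}p_{ik}p_k^{s}=p_i^{s+1}$ (the latter by~\eqref{eq:ap}, since $i\ne n$), so after reindexing $s\mapsto s+1$ and using $p_i^0=0$ to absorb the leading ``$1$'' as the $s=0$ term, the claimed formula follows.

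Summing this identity over $i\in V$ and dividing by $n$ yields
\[
  \FHT(S) \;=\; T \;-\; \sum_{t=0}^{T-1} \FAP^{(t)}(S),
  \qquad \FAP^{(t)}(S) \triangleq \frac{1}{n}\sum_{i\in V} p_i^t(S),
\]
so $\FHT$ is, up to an additive constant and an overall sign, a finite sum of the ``horizon-$t$'' discoverability functions $\FAP^{(t)}$ (note $\FAP^{(T)}=\FAP$). Since Theorem~\ref{thm:FAP} holds for an arbitrary truncation length, it applies to each $\FAP^{(t)}$ with $0\le t\le T-1$, so every one of them is normalized, non-decreasing, and submodular. Non-decreasing set functions and submodular set functions are both closed under nonnegative linear combinations, hence $\sum_{t=0}^{T-1}\FAP^{(t)}$ is non-decreasing and submodular; subtracting it from the constant $T$ shows at once that $\FHT$ is non-increasing and that $-\FHT$ is submodular, i.e.\ $\FHT$ is supermodular. (As a byproduct, $T-\FHT(S)=\sum_{t=0}^{T-1}\FAP^{(t)}(S)$ is normalized, non-decreasing, and submodular, which is the form one would feed to a greedy algorithm for Problem~\ref{p:min}.)

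I expect the only genuine work to be the inductive verification of the hitting-time/absorbing-probability identity, in particular keeping the index bookkeeping straight and handling the boundary cases $i=n$ and $t=0$ cleanly; everything afterwards is the routine observation that monotonicity and submodularity survive finite nonnegative sums and additive constants, together with the reuse of Theorem~\ref{thm:FAP} at each horizon. An alternative, self-contained route would be to establish submodularity and monotonicity of $S\mapsto h_i^T(S)$ directly, mirroring whatever coupling/path argument is used for $\FAP$ in the proof of Theorem~\ref{thm:FAP}; but the reduction above is shorter and avoids repeating that analysis.
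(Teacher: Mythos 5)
Your proposal is correct, but it takes a genuinely different route from the paper. The paper proves Theorem~\ref{thm:FHT} by a direct induction on $T$ that mirrors the proof of Theorem~\ref{thm:FAP}: it shows $h_i^T(S)$ is non-increasing and supermodular node by node, splitting into the same three cases ($i\in V\backslash S_2'\cup S_1$, $i\in S_2\backslash S_1$, $i=s$) and redoing the inequality bookkeeping for hitting times. You instead prove the exact identity $h_i^t(S)=\sum_{s=0}^{t-1}\bigl(1-p_i^s(S)\bigr)$ (the discrete tail-sum formula $\E[\min\{\tau,T\}]=\sum_{s<T}P(\tau>s)$, verified here by induction from the recursions~\eqref{eq:ap} and~\eqref{eq:ht}), which gives $\FHT(S)=T-\sum_{t=0}^{T-1}\FAP^{(t)}(S)$ and lets monotonicity and supermodularity fall out of Theorem~\ref{thm:FAP} plus closure under nonnegative sums and additive constants. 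The one thing you must cite carefully is that what you need is the per-node, per-horizon statement ($p_i^t(S)$ non-decreasing and submodular for every $t\le T$), which the statement of Theorem~\ref{thm:FAP} does not literally assert but its inductive proof does establish; with that noted, your reduction is sound. Your route is shorter, avoids duplicating the three-case analysis, and yields as a free byproduct the explicit normalized submodular function $T-\FHT(S)=\sum_{t=0}^{T-1}\FAP^{(t)}(S)$ (a sharper form of the remark following Theorem~\ref{thm:FHT}); the paper's direct induction is self-contained and does not depend on extracting the stronger per-horizon content of the other theorem's proof.
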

\begin{proof}
  Please refer to the Appendix.
\end{proof}

Note that it is straightforward to convert $\FHT(S)$ into a normalized submodular
set function.
Because $\FHT(S)\in[0,T]$, thus $T-\FHT(S)$ is a normalized non-decreasing
submodular set function.

A commonly used heuristic to maximize a normalized non-decreasing submodular set
function $F$ with a {\em cardinality constraint} is the {\em simple greedy
  algorithm}.
The simple greedy algorithm starts with an empty set $S_0=\emptyset$, and
iteratively, in step $k$, adds an element $s_k$ which maximizes the {\em marginal
  gain}, i.e., $s_k=\argmax_{s\in V\backslash S_{k-1}} \delta(s;S_{k-1})$.
The marginal gain of an element $s$ regarding a set $S$ is defined by
\begin{equation}\label{eq:gain_cardinality}
  \delta(s;S)\triangleq F(S\cup\{s\})-F(S).
\end{equation}
The algorithm stops once it has selected enough elements, or the marginal gain
becomes less than a threshold.
The classical result of~\cite{Nemhauser1978} states that the output of the simple
greedy algorithm is at least a constant fraction of $1-1/e$ of the optimal value.

For the more general knapsack constraint, where each element has non-constant
cost, it is nature to redefine the marginal gain to
\begin{equation}\label{eq:gain_knapsack}
  \delta'(s;S)\triangleq \frac{F(S\cup\{s\})-F(S)}{c_s},
\end{equation}
and apply the simple greedy algorithm.
However, Khuller et al.~\cite{Khuller1999} prove that the simple greedy algorithm
using this marginal gain definition has unbounded approximation ratio.
Instead, they propose that one should consider the best single element as
alternative to the output of the simple greedy algorithm, which then guarantees a
constant factor $\frac{1}{2}(1-1/e)$ of the optimal value.
We describe this {\em budgeted greedy algorithm} in
Algorithm~\ref{alg:budgeted-greedy}.
Note that even in the case of knapsack constraint, the approximation ratio $1-1/e$
is achievable using a more complex algorithm~\cite{Khuller1999,Sviridenko2004}.
However, the algorithm requires $O(|V|^5)$ function evaluations which is
prohibitive for handling large graphs in our problem.

\begin{algorithm}[htp]
  \caption{Budgeted greedy algorithm in~\cite{Khuller1999}
    \label{alg:budgeted-greedy}}
  \KwIn{set $V$ and budget $B>0$}
  \KwOut{$S\subseteq V$ s.t. $c(S)\leq B$}
  $s^*\gets
  \argmax_{s\in V\wedge c_s\leq B}F(\{s\})$\tcc*{obtain the best single element}
  $S_1\gets\{s^*\}, S_2\gets\emptyset, U\gets V$\;
  \While(\tcc*[f]{construct $S_2$ using greedy heuristic})
  {$U\neq\emptyset$}{
    $s\gets\argmax_{i\in U}\delta'(i;S_2)$\;
    \lIf{$c(S_2)+c_s\leq B$}{$S_2\gets S_2\cup\{s\}$}
    $U\gets U\backslash\{s\}$\;
  }
  \KwRet $\argmax_{S\in\{S_1,S_2\}}F(S)$\tcc*{return the best solution}
\end{algorithm}

To implement the greedy algorithms, we need to compute the marginal gain for a
node.
The {\bf oracle call} in a greedy algorithm refers to the procedure of calculating
the marginal gain for a given node.
We list the formulas of computing marginal gains for the two optimization problems
under different constraints in Table~\ref{tab:marginal_gains}.
For a greedy algorithm, the {\em number of oracle calls} and the {\em time cost of
  an oracle call} dominate the computational complexity.
Both the two greedy algorithms need $O(|S|\cdot |V|)$ oracle calls, and this can
be further reduced by leveraging the trick of {\em lazy
  evaluation}~\cite{Minoux1978}, which, however, does not guarantee always
reducing the number of oracle calls.
Thus, reducing the time cost of an oracle call becomes key to improve the
computational efficiency of a greedy algorithm.
In the following section, we elaborate on how to implement an efficient oracle
call.
\footnote{Because submodularity is closed under non-negative linear combinations,
  and $-h_i^T(S)$ is proven to be submodular in Appendix, hence the objective in
  previous composite optimization problem~\eqref{eq:composite_opt} is still
  submodular, and it also falls into our framework.}

\begin{table}[htp]
  \centering
  \caption{Marginal gains in D-AP maximization and D-HT minimization
    \label{tab:marginal_gains}}
  \smallskip
  \setlength{\tabulinesep}{2pt}
  \begin{tabu}{c|cc}
    \hline\hline
    marginal gain & cardinality constraint & knapsack constraint \\
    \hline
    $\dAP(s;S)$ & $\FAP(S\cup\{s\})-\FAP(S)$
    & $\frac{\FAP(S\cup\{s\})-\FAP(S)}{c_s}$ \\
    \hline
    $\dHT(s;S)$ & $\FHT(S)-\FHT(S\cup\{s\})$
    & $\frac{\FHT(S)-\FHT(S\cup\{s\})}{c_s}$ \\
    \hline
  \end{tabu}
\end{table}

\section{Efficient Node Discoverability Optimization}
\label{sec:method}

Implementing the greedy algorithm boils down to implementing the oracle call.
In this section, we design fast methods to implement the oracle calls.
We first describe two basic methods, i.e., the dynamic programming (DP) approach,
and an estimation approach by simulating random walks (RWs).
Each method has its advantages and disadvantages: the DP approach is accurate but
computationally inefficient; the RW estimation approach is fast but inaccurate.
Then, to address the limitations of the two methods, we propose an {\em
  estimation-and-refinement} approach that is faster than DP, and also more
accurate than RW estimation.

For each method, we first describe how to calculate or estimate $p_i^T(S)$ and
$h_i^T(S)$ for a given set of connection sources $S$, then it will motivate us to
propose the oracle call implementation.

\subsection{Exact Calculation via Dynamic Programming}

\subsubsection{Calculating $p_i^T$ and $h_i^T$ Given $S$}

We can leverage the recursive definitions of truncated absorbing probability and
truncated hitting time to directly calculate the exact values of $p_i^T$ and
$h_i^T$ for each node $i$ using the dynamic programming (DP) approach.
This approach is described in Algorithm~\ref{alg:dp}, and it has time complexity
$O(T(|V|+|E|))$.

\SetKwFunction{Calculate}{DP}
\begin{algorithm}[htp]
  \caption{Exact calculation via DP\label{alg:dp}}
  \Fn{\Calculate{$T$}}{
    \tcp{initialization}
    $p_i^0\gets 0,\forall i\neq n$, and $p_n^t\gets 1, \forall t$\;
    $h_i^0\gets 0,\forall i$, and $h_n^t=0, \forall t$\;
    \tcp{recursively calculating $p_i^t$ and $h_i^t$}
    \For{$t\gets 1$ \KwTo $T$}{
      \ForEach{$i\in V$}{
        $p_i^t\gets\sum_{k\in\NbrO{i}}p_{ik}p_k^{t-1}$\;
        $h_i^t\gets 1+\sum_{k\in\NbrO{i}}p_{ik}h_k^{t-1}$\;
      }
    }
    \KwRet $\{p_i^T,h_i^T\}_{i\in V}$\;
  }
\end{algorithm}

\subsubsection{Implementing Oracle Call}

It is also convenient to use DP to implement the oracle call.
For example, if we want to calculate the marginal gain $\dAP(s;S) =
\FAP(S\cup\{s\}) - \FAP(S)$, we can apply Algorithm~\ref{alg:dp} for set $S$ and
$S\cup\{s\}$ respectively, and finally obtain the exact value of $\dAP(s;S)$.

This oracle call implementation has the same time complexity as Algorithm~\ref{alg:dp},
i.e., $O(T(|V|+|E|))$.
However, this time complexity is too expensive when the oracle call is used in the
greedy algorithm.
Because the greedy algorithm requires $|V|\times K$ oracle calls to obtain $K$
connection sources.
Therefore, the final time complexity is $O(KT|V|(|V|+|E|))$, which is unaffordable
when the graph is large.
For example, on the HepTh citation network with merely $27$K nodes, DP costs about
$38$ hours to calculate the marginal gain for each node.
This requires us to devise faster oracle call implementations.

\subsection{Estimation by Simulating Random Walks}
\label{ss:rw}

\subsubsection{Estimating $p_i^T$ and $h_i^T$ Given $S$}

Truncated absorbing probability and truncated hitting time are defined using
finite length random walks.
We thus propose an estimation approach to estimate $p_i^T$ and $h_i^T$ by
simulating a large number of random walks from each node.

To be more specific, we simulate $R$ independent random walks of length at most
$T$ from each node in $V$.
For the $r$-th random walk starting from node $i$, we assume that it terminates at
step $t_{ir}\leq T$, and we also use a binary variable $b_{ir}=1$ or $0$ to
indicate whether it finally hits target node $n$.
Then the following conclusion holds.
\begin{theorem}\label{thm:unbiased}
  $\hat{p}_i^T\triangleq\sum_{r=1}^Rb_{ir}/R$ and $\hat{h}_i^T
  \triangleq\sum_{r=1}^Rt_{ir}/R$ are unbiased estimators of $p_i^T$ and
  $h_{i}^T$, respectively.
  $\hFAP\triangleq\sum_{i\in V}\hat{p}_i^T/n$ and $\hFHT\triangleq\sum_{i\in
    V}\hat{h}_i^T/n$ are unbiased estimators of $\FAP$ and $\FHT$, respectively.
\end{theorem}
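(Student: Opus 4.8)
The plan is to reduce everything to linearity of expectation, once we observe that each simulated walk is an exact sample of the Markov chain underlying Definitions~\ref{def:ap} and~\ref{def:ht}. First I would fix a start node $i\in V$ and a single replicate $r$, and note that the simulated trajectory $i=X_0,X_1,X_2,\ldots$ is generated by exactly the transition kernel $p_{ij}=w_{ij}/\sum_{k\in\NbrO{i}}w_{ik}$, and that the simulator's stopping rule (halt at the first hit of $n$, or after $T$ steps, whichever comes first) coincides verbatim with the truncation used in the definitions. Consequently $b_{ir}$ is precisely the indicator $\indr{\{\exists t\le T\colon X_t=n\}}$ and the step count $t_{ir}$ is precisely $\min\{\min\{t\colon X_t=n\},T\}$, both viewed as (measurable) functions of the \emph{same} random trajectory.

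From here the individual-node claims are immediate. Taking expectations, $\E[b_{ir}]=P(X_t=n,\ t\le T\mid X_0=i)=p_i^T$ by Definition~\ref{def:ap}, and $\E[t_{ir}]=\E[\min\{\min\{t\colon X_0=i,X_t=n\},T\}]=h_i^T$ by Definition~\ref{def:ht}; these hold for every $r$ since the replicates are identically distributed. Linearity of expectation then gives $\E[\hat p_i^T]=\frac1R\sum_{r=1}^R\E[b_{ir}]=p_i^T$ and $\E[\hat h_i^T]=\frac1R\sum_{r=1}^R\E[t_{ir}]=h_i^T$. Worth remarking: independence across replicates is not needed for unbiasedness, only identical distribution — independence is what we will rely on later for variance and concentration bounds. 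For the aggregated estimators I would apply linearity once more, now over start nodes: $\E[\hFAP]=\frac1n\sum_{i\in V}\E[\hat p_i^T]=\frac1n\sum_{i\in V}p_i^T=\FAP$, and likewise $\E[\hFHT]=\frac1n\sum_{i\in V}h_i^T=\FHT$, directly from the definitions in Problems~\ref{p:max} and~\ref{p:min}. Since $\FAP$ and $\FHT$ are evaluated at a fixed connection set $S$ throughout, no extra care about the dependence on $S$ is required.

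The proof has essentially no hard step; the only point deserving care is the coupling asserted in the first paragraph, i.e., arguing that the simulator's halting condition realizes exactly the truncated process of the definitions, so that $b_{ir}$ and $t_{ir}$ are honest functionals of a genuine sample path rather than of some approximation. If one preferred a fully self-contained argument avoiding the phrase ``the simulation samples the chain'', one could instead establish $\E[b_{ir}]=p_i^T$ and $\E[t_{ir}]=h_i^T$ by induction on $T$ using the recursions~\eqref{eq:ap} and~\eqref{eq:ht}, conditioning on the first transition $X_1=k$ with probability $p_{ik}$; but that is a routine unrolling, and I would include it only if a referee demanded the extra rigor.
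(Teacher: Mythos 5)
Your proposal is correct and follows essentially the same route as the paper's proof: recognize that each simulated walk samples the truncated chain exactly, so $b_{ir}$ is Bernoulli with mean $p_i^T$ and $\E[t_{ir}]=h_i^T$, then conclude by linearity of expectation over replicates and over start nodes. Your added remarks (independence being unnecessary for unbiasedness, and the optional induction via the recursions \eqref{eq:ap} and \eqref{eq:ht}) are fine but do not change the argument.
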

\begin{proof}
  By definition, $\{b_{ir}\}_{r=1}^R$ are i.i.d. Bernoulli random variables with
  success probability $p_i^T$.
  Hence, $\E(\hat{p}_i^T)=\sum_r\E(b_{ir})/R=p_i^T$.
  Similarly, $\{t_{ir}\}_{r=1}^R$ are i.i.d. random variables with expectation
  $\E(t_{ir})=h_i^T$.
  Hence, $\E(\hat{h}_i^T)=\sum_r\E(t_{ir})/R=h_i^T$.
  Then, it is straightforward to obtain that $\hFAP$ and $\hFHT$ are also unbiased
  estimators of $\FAP$ and $\FHT$, respectively.
\end{proof}

Furthermore, we can bound the number of random walks $R$ in order to guarantee the
estimation precision by applying the Hoeffding inequality~\cite{Hoeffding1963}.
\begin{theorem}\label{thm:bound1}
  Given constants $\delta,\epsilon>0$, and set $S$, in order to guarantee
  $P(|\hFAP(S) - \FAP(S)|\geq\delta )\leq\epsilon$, and $P(|\hFHT(S) -
  \FHT(S)|\geq\delta T)\leq\epsilon$, the number of random walks $R$ should be at
  least $\frac{1}{2n\delta^2} \ln\frac{2}{\epsilon}$.
\end{theorem}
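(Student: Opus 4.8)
The plan is to reduce both inequalities to a single application of Hoeffding's inequality over a \emph{pooled} collection of $nR$ independent bounded random variables, rather than arguing node by node. Unrolling the definitions used in Theorem~\ref{thm:unbiased}, we have $\hFAP(S)=\frac{1}{n}\sum_{i\in V}\hat p_i^T=\frac{1}{nR}\sum_{i\in V}\sum_{r=1}^{R}b_{ir}$, so $\hFAP(S)$ is exactly the empirical mean of the $nR$ random variables $\{b_{ir}\}$. These are mutually independent (walks from different start nodes and different replicates are simulated independently), each takes values in $[0,1]$, and by Theorem~\ref{thm:unbiased} their common mean satisfies $\E[\hFAP(S)]=\FAP(S)$. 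Likewise $\hFHT(S)=\frac{1}{nR}\sum_{i\in V}\sum_{r=1}^{R}t_{ir}$ is the empirical mean of $nR$ independent variables $\{t_{ir}\}$, each valued in $[0,T]$ (this uses the standing assumption that dangling nodes carry self-loops, so every walk is well defined for all $T$ steps), with mean $\FHT(S)$.

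Next I would invoke Hoeffding's inequality in the form: if $Y_1,\dots,Y_m$ are independent with $Y_k\in[a_k,b_k]$ and $\bar Y=\frac1m\sum_k Y_k$, then $P(|\bar Y-\E\bar Y|\ge\tau)\le 2\exp\!\big(-2m^2\tau^2/\sum_k (b_k-a_k)^2\big)$. For the D-AP case, take $m=nR$, intervals $[a_k,b_k]=[0,1]$ so that $\sum_k(b_k-a_k)^2=nR$, and $\tau=\delta$; this yields $P(|\hFAP(S)-\FAP(S)|\ge\delta)\le 2\exp(-2nR\delta^2)$. For the D-HT case, take $m=nR$, intervals $[a_k,b_k]=[0,T]$ so that $\sum_k(b_k-a_k)^2=nRT^2$, and $\tau=\delta T$; the factor $T^2$ cancels and we again obtain $P(|\hFHT(S)-\FHT(S)|\ge\delta T)\le 2\exp(-2nR\delta^2)$.

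Finally, I would impose $2\exp(-2nR\delta^2)\le\epsilon$, which rearranges to $R\ge\frac{1}{2n\delta^2}\ln\frac{2}{\epsilon}$ — the claimed bound — and since the same quantity $2\exp(-2nR\delta^2)$ controls both tail probabilities, this single choice of $R$ makes both guarantees hold simultaneously.

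I do not anticipate a genuine obstacle; the proof is essentially a bookkeeping exercise. The one point that deserves care is precisely the pooling step: if one instead bounds each $\hat p_i^T$ (resp.\ $\hat h_i^T$) separately via Hoeffding over its $R$ samples and then takes a union bound over the $n$ nodes, an extra $\ln n$ factor appears and the resulting requirement on $R$ is weaker than the stated $R=\Theta\!\big(\frac{1}{n\delta^2}\ln\frac1\epsilon\big)$. Treating all $nR$ variables as one batch is what gives the $1/n$ improvement in the bound.
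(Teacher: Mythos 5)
Your proposal is correct and is essentially the paper's own argument: the paper defines scaled variables $X_{ir}=b_{ir}/(nR)\in[0,(nR)^{-1}]$ (resp.\ $Y_{ir}=t_{ir}/(nR)\in[0,T/(nR)]$) and applies Hoeffding to their sum, which is the same pooled-over-$nR$-samples application of Hoeffding you carry out via the empirical mean, yielding the identical tail bound $2\exp(-2nR\delta^2)$ and the same requirement $R\geq\frac{1}{2n\delta^2}\ln\frac{2}{\epsilon}$. Your closing remark about why pooling (rather than per-node bounds plus a union bound) gives the $1/n$ factor is a correct observation, though not needed for the proof itself.
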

\begin{proof}
  Please refer to Appendix.
\end{proof}

Thanks to the recent development of MCMC simulation
systems~\cite{Fogaras2005,Kyrol2013,Liu2016c}, we are now able to simulate
billions of random walks on just a PC.
We re-implement an efficient random walk simulation system based
on~\cite{Kyrol2013}.
In our implementation, a walk is encoded by a $64$-bit C++ integer, as illustrated
in Figure~\ref{fig:walk_encoding}.
Hence, simulating $1$ billion walks requires only $8$GB RAM (without considering
other space costs).
Based on this powerful RW simulation system, we can obtain estimates $\hat{p}_i^T$
and $\hat{h}_i^T$ by Alg.~\ref{alg:rw-estimate}, and hence obtain $\hFAP$ and
$\hFHT$ by the estimators in Theorem~\ref{thm:unbiased}.

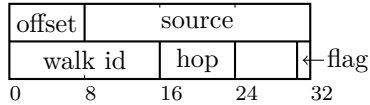
\begin{figure}[ht]
  \centering
  \begin{tikzpicture}[
  txt/.style={font=\footnotesize, inner sep=1pt},
  tick/.style={font=\scriptsize,inner sep=0pt, yshift=-8pt, anchor=south west},
  ]
\draw[thick] (0,0) rectangle (4,1);
\draw[thick] (0,.5) -- (4,.5);
\draw[thick] (1,1) -- (1,.5);
\draw[thick] (2,.5) -- (2,0);
\draw[thick] (3,.5) -- (3,0);
\draw[thick] (3.825,.5) -- (3.825,0);

\node[txt] at (.5,.75) {offset};
\node[txt] at (2.5,.75) {source};
\node[txt] at (1,.25) {walk id};
\node[txt] at (2.5,.25) {hop};
\node[txt] at (3.44,.25) {};
\node[txt] at (4.5,.25) (flag) {flag};
\draw[->] (flag) -- (3.9,.25);

\draw (1,0) -- (1,.05);
\draw (2,1) -- (2,.95);
\draw (3,1) -- (3,.95);

\node[tick] at (0,0) {$0$};
\node[tick] at (1,0) {$8$};
\node[tick] at (2,0) {$16$};
\node[tick] at (3,0) {$24$};
\node[tick] at (4,0) {$32$};

\end{tikzpicture}

  \caption{Walk encoding.
    In the implementation~\cite{Kyrol2013}, walks are grouped into buckets by the
    nodes where they are currently resident, and hence a walk only needs to record
    its relative ``offset'' to the first node in the corresponding bucket to know
    its resident node.
    ``source'' records the starting node of the walk.
    ``walk id'' records the ID of the walk that starts from the same ``source''.
    ``hop'' records the number of hops the walk has walked.
    ``flag'' is used to indicate whether the walk finally hits target node.}
  \label{fig:walk_encoding}
\end{figure}

\SetKwFunction{RawEst}{RWEstimate}
\begin{algorithm}[ht]
  \caption{Estimating $p_i^T$ and $h_i^T$ by simulating RWs}
  \label{alg:rw-estimate}
  \tcp{$R$ is the number of walks, $T$ is the maximum walk length}
  \Fn{\RawEst{$R, T$}}{
    \ForEach{node $i\in V$}{
      \For{$r\gets 1$ \KwTo $R$}{
        start a walk from $i$, and walk at most $T$ steps\;
        $b_{ir}\gets$ whether the walk hits target node $n$\;
        $t_{ir}\gets$ number of steps walked\;
      }
      $\hat{p}_i^T\gets\sum_r b_{ir}/R$\;
      $\hat{h}_i^T\gets\sum_r t_{ir}/R$\;
    }
    \KwRet $\{\hat{p}_i^T, \hat{h}_i^T\}_{i\in V}$\;
  }
\end{algorithm}

\subsubsection{Implementing Oracle Call}

To estimate the marginal gain of selecting a node $s\in V\backslash S$ as a
connection source, we need to estimate the change of truncated absorbing
probability/hitting time $\Dhp{i}{T}(s)\triangleq\hat{p}_i^T(S')-\hat{p}_i^T(S)$
and $\Dhh{i}{T}(s)\triangleq\hat{h}_i^T(S)-\hat{h}_i^T(S')$ for each node $i\in
V$, where $S'\triangleq S\cup\{s\}$.
Then, the marginal gains of $s$ are estimated by $\hdAP(s;S)=\frac{1}{n}\sum_{i\in
  V}\Dhp{i}{T}(s)/c_s$ and $\hdHT(s;S)=\frac{1}{n}\sum_{i\in V}\Dhh{i}{T}(s)/c_s$.

It is not necessary to re-simulate all the walks.
Because the inclusion of a node $s$ into $S$ only affects the walks that passed
through $s$, we only need to re-simulate these affected walks after $s$ in their
sample paths, and estimate $\{\Dhp{i}{T}(s), \Dhh{i}{T}(s)\}_{i\in V}$
incrementally.

In more detail, we first query the walks that hit node $s$, denoted by
$\CW_s\triangleq\{(w,t)\colon$walk $w$ hits node $s$ for the first time at
$t<T\}$.
For each walk-step pair $(w,t)\in\CW_s$, we update walk $w$'s sample path after
$s$, i.e., re-walk $w$ from $s$ for the remaining (at most) $T-t$ steps.
Then, walk $w$'s statistics are updated, i.e., its hit/miss indicator
$b_w\in\{0,1\}$ and hitting time $t_w\in[0,T]$.
Finally, we obtain $\Dhp{i}{T}(s)$ and $\Dhh{i}{T}(s)$ for each $i\in \{i\colon
i\text{ is the source of a walk }w\in \CW_s\}$ (and for the other nodes,
$\Dhp{i}{T}(s)=\Dhh{i}{T}(s)=0$).

To apply such an approach, the number of walks $R$ needs to satisfy the following
condition\footnote{ Estimating $\dAP$ (or $\dAP$) requires more walks than
  estimating $\FAP$ (or $\FHT$) because in the later case we do not need to
  guarantee a per-node-wise estimation accuracy.}.

\begin{theorem}\label{thm:bound2}
  Given constants $\delta,\epsilon>0$, and set $S$, in order to guarantee
  $P(\exists s\in V\backslash S,|\hdAP(s;S) - \dAP(s;S)|\geq\delta/c_s)\leq
  \epsilon$, and $P(\exists s\in V\backslash S,|\hdHT(s;S) - \dHT(s;S)|\geq
  \delta T/c_s) \leq \epsilon$, the number of random walks $R$ should be at
  least $\frac{2}{n\delta^2} \ln\frac{4n}{\epsilon}$.
\end{theorem}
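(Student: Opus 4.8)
The plan is to reduce the statement to a double application of the single‑set concentration bound of Theorem~\ref{thm:bound1}, glued together by a triangle inequality and a union bound over the candidate connection sources $s\in V\backslash S$. First I would rewrite the estimation error of the marginal gain in terms of the estimation errors of the objective on the two sets $S$ and $S'\triangleq S\cup\{s\}$. By the definition $\Dhp{i}{T}(s)=\hat p_i^T(S')-\hat p_i^T(S)$ and averaging over $i$, we have $\hdAP(s;S)=\tfrac{1}{c_s}(\hFAP(S')-\hFAP(S))$, while $\dAP(s;S)=\tfrac{1}{c_s}(\FAP(S')-\FAP(S))$ (Table~\ref{tab:marginal_gains}), so
\[
  \hdAP(s;S)-\dAP(s;S)=\frac{1}{c_s}\Big[\big(\hFAP(S')-\FAP(S')\big)-\big(\hFAP(S)-\FAP(S)\big)\Big].
\]
Hence $|\hdAP(s;S)-\dAP(s;S)|<\delta/c_s$ whenever \emph{both} $|\hFAP(S')-\FAP(S')|<\delta/2$ and $|\hFAP(S)-\FAP(S)|<\delta/2$; the identical manipulation with $\Dhh{i}{T}(s)$ gives that $|\hdHT(s;S)-\dHT(s;S)|<\delta T/c_s$ whenever $|\hFHT(S')-\FHT(S')|<\delta T/2$ and $|\hFHT(S)-\FHT(S)|<\delta T/2$.

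Next I would control each single‑set deviation with the Hoeffding argument already used for Theorem~\ref{thm:bound1}, but run with the tolerance halved: with $R$ walks per node, $P(|\hFAP(Q)-\FAP(Q)|\ge\delta/2)\le 2\exp(-nR\delta^2/2)$ for any fixed $Q$, and the analogous bound holds for $\hFHT$ at tolerance $\delta T/2$ because the factor $T$ from the range of the $t_{ir}$'s cancels the factor $T$ in the tolerance. A union bound over the two events $\{|\hFAP(S)-\FAP(S)|\ge\delta/2\}$ and $\{|\hFAP(S')-\FAP(S')|\ge\delta/2\}$ then bounds the per‑$s$ failure probability by $4\exp(-nR\delta^2/2)$, and likewise for D‑HT.

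Finally I would take a union bound over the at most $n$ candidates $s\in V\backslash S$, giving $P(\exists s\colon|\hdAP(s;S)-\dAP(s;S)|\ge\delta/c_s)\le 4n\exp(-nR\delta^2/2)$ and the same bound for D‑HT; requiring the right‑hand side to be at most $\epsilon$ and solving for $R$ produces exactly $R\ge\frac{2}{n\delta^2}\ln\frac{4n}{\epsilon}$, which also transparently explains the extra factor over Theorem~\ref{thm:bound1} mentioned in the footnote. I expect the only genuine obstacle to be the step where Hoeffding is applied to $\hFAP(S')$ and $\hFHT(S')$: one must check that the \emph{incrementally re‑simulated} walks (obtained by replaying each base walk past its first visit to $s$ through the modified graph $G'$) are, for each fixed $s$, marginally distributed exactly as fresh length‑$T$ walks on $G'$ with source set $S'$, so that $\hFAP(S')$ and $\hFHT(S')$ are still averages of i.i.d. quantities in $[0,1]$ and $[0,T]$ respectively; once this coupling fact is in hand, the remainder is the triangle‑inequality/union‑bound bookkeeping and the routine inversion of the exponential tail. (Correlations between the re‑walk families for different $s$ are harmless, since the union bound only needs each event's marginal probability.)
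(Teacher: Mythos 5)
Your proposal is correct and follows essentially the same route as the paper's proof: express the marginal-gain error as the difference of the two objective-level errors on $S$ and $S'=S\cup\{s\}$, apply the Hoeffding bound from Theorem~\ref{thm:bound1} with tolerance $\delta/2$ (resp.\ $\delta T/2$) to each, and finish with a union bound over the two sets and over the at most $n$ candidates, yielding $4n\exp(-nR\delta^2/2)\leq\epsilon$ and hence $R\geq\frac{2}{n\delta^2}\ln\frac{4n}{\epsilon}$. Your extra remark that the incrementally re-simulated walks are, for each fixed $s$, distributed as fresh length-$T$ walks on the graph with sources $S'$ (so Hoeffding applies to $\hFAP(S')$ and $\hFHT(S')$) is a valid point of care that the paper leaves implicit, and it is resolved exactly as you say by the Markov property, since only the transitions out of $s$ differ between the two graphs.
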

\begin{proof}
  Please refer to Appendix.
\end{proof}

Because we only need to update a small fraction of the walks, oracle call
implemented by simulating random walks will be much more efficient than solving
DP.
We give an example of estimating marginal gain $\dAP(s;S)$ of a node $s$ in
Algorithm~\ref{alg:gain-ht}.

\SetKwFunction{RawDelta}{RawDeltaAP}
\begin{algorithm}
  \caption{Estimating $\dAP(s;S)$ by RW estimation\label{alg:gain-ht}}
  \Fn{\RawDelta($s,T$)}{
    $\Dhp{i}{T}\gets 0, \forall i\in V,\,U^T\gets\emptyset$\tcp*{initialization}
    $\CW_s\gets\{(w,t)\colon$walk $w$ hits node $s$ at time $t<T\}$\;
    \ForEach{$(w,t)\in\CW_s$}{
      re-walk $w$ from $s$ for at most $T-t$ steps\;
      $\Delta b_w\gets b_w'-b_w$\tcp*{$b_w'$ is the new walk's hit/miss indicator}
      $\Dhp{i(w)}{T}\gets \Dhp{i(w)}{T} + \Delta b_w/R$\;
      $U^T\gets U^T\cup \{i(w)\}$\tcp*{record nodes whose hitting time changes}
    }
    \KwRet $U^T,\{\Dhp{i}{T}\}_{i\in U^T}$%
    \tcp*{$\hdAP(s;S)=\frac{1}{n}\sum_{i\in U^T}\Dhp{i}{T}/c_s$}
  }
\end{algorithm}

\subsection{An Estimation-and-Refinement Approach}
\label{ss:refinement}

So far we have developed two methods, namely, DP and RW estimation.
Each method has its advantages and disadvantages: DP is accurate but
computationally inefficient; RW estimation is fast but inaccurate.
To address these limitations, we propose an {\em estimation-and-refinement}
approach, that is faster than DP, and also more accurate than RW estimation.

\subsubsection{Estimating $p_i^T$ and $h_i^T$ Given $S$}

The basic idea of the estimation-and-refinement approach is that, we first use the
RW estimation to obtain {\em raw estimates} of truncated absorbing
probability/hitting time, then we improve their accuracy by an additional {\em
  refinement} step.

In the first stage of the algorithm, we simulate {\em fewer} and {\em shorter}
walks on the graph than in the previous RW estimation.
Let $D\in[0,T]$ be a given constant, we simulate $R$ walks with maximum length
$T-D$ (Line~\ref{l:est} of Algorithm~\ref{alg:est-refine}).
Here $R$ could be less than the required least number of walks.
After this step, we obtain {\em raw estimates} $\{\hat{p}_i^{T-D},
\hat{h}_i^{T-D}\}_{i\in V}$ using the previously develop RW estimation.
At first glance, if $D\neq 0$, these raw estimates are useless, because to
estimate D-AP and D-HT, we have to know $\hat{p}_i^T$ and $\hat{h}_i^T$; and they
are also inaccurate if $R$ does not satisfy the requirement of
Theorem~\ref{thm:bound1}.

In the second stage, we propose an additional {\em refinement} step that leverage
the raw estimates to obtain $\hat{p}_i^T$ and $\hat{h}_i^T$, and also improve
estimation accuracy simultaneously (Line~\ref{l:refine} of
Algorithm~\ref{alg:est-refine}).
The refinement is due to the observation that the recursive definitions of
absorbing probability and hitting time share the common structure of a {\em
  harmonic function}~\cite{Doyle1984}, that the function value at $x$ is a
smoothed average of the function values at $x$'s neighbors.
Thus, if we have obtained raw estimate for each node, we can refine a node's
estimate by averaging the raw estimates at its neighbors, and the smoothed
estimate will be more accurate than the raw estimate.

\SetKwFunction{FineEst}{EstimateAndRefine}
\SetKwFunction{Refine}{Refine}
\begin{algorithm}[ht]
  \caption{An estimation-and-refinement approach}
  \label{alg:est-refine}
  \tcp{$D$ is the refinement depth}
  \Fn{\FineEst{$R, T, D$}}{
    $\{\hat{p}_i^{T-D},\hat{h}_i^{T-D}\}_{i\in V}\gets$\RawEst{$R,T-D$}\;
    \label{l:est}
    \KwRet\Refine{$\{\hat{p}_i^{T-D},\hat{h}_i^{T-D}\}_{i\in V}, D$}\;
    \label{l:refine}
  }
  \Fn{\Refine{$\{\hat{p}_i^{T-D},\hat{h}_i^{T-D}\}_{i\in V}, D$}}{
    \For{$t\gets T-D+1$ \KwTo $T$}{
      \ForEach{$i\in V$}{
        $\hat{p}_i^t\gets\sum_{k\in\NbrO{i}}p_{ik}\hat{p}_k^{t-1}$\;
        $\hat{h}_i^t\gets 1+\sum_{k\in\NbrO{i}}p_{ik}\hat{h}_k^{t-1}$\;
      }
    }
    \KwRet $\{\hat{p}_i^T,\hat{h}_i^T\}_i$\;
  }
\end{algorithm}

\begin{figure}[htp]
  \centering
  \begin{tikzpicture}[
  nd/.style={draw,circle,thick,minimum size=15pt, inner sep=0pt,
  font=\footnotesize},
  arr/.style={->, thick},
  txt/.style={font=\footnotesize, inner sep=1pt},
  txte/.style={txt, anchor=east},
  txtw/.style={txt, anchor=west},
  ]

  \node[nd,draw=blue] (i) {$i$};
  \node[nd,below left=.6 and .4 of i] (j1) {$j_1$};
  \node[nd,below right=.6 and .4 of i] (j2) {$j_2$};
  \node[nd,below left=.6 and 0 of j1] (k1) {$k_1$};
  \node[nd,below right=.6 and 0 of j1] (k2) {$k_2$};
  \node[nd,below left=.6 and 0 of j2] (k3) {$k_3$};
  \node[nd,below right=.6 and 0 of j2] (k4) {$k_4$};
  \draw[arr] (i) -- (j1);
  \draw[arr] (j1) -- (k1);
  \draw[arr] (j1) -- (k2);
  \draw[arr] (i) -- (j2);
  \draw[arr] (j2) -- (k3);
  \draw[arr] (j2) -- (k4);

  \coordinate[left=1.6 of i] (A);
  \node[txte] at (A |- i) {$D=0$};
  \node[txte] at (A |- j1) {$D=1$};
  \node[txte] at (A |- k1) {$D=2$};

  \coordinate[right=1.6 of i] (B);
  \node[txtw] at (B |- i) {$\hat{p}_i^T$};
  \node[txtw] at (B |- j2) {$\hat{p}_j^{T-1}, j\in\{j_1,j_2\}$};
  \node[txtw] at (B |- k4) {$\hat{p}_k^{T-2}, k\in\{k_1,k_2,k_3,k_4\}$};
\end{tikzpicture}

  \caption{Illustration of refining $\hat{h}_i^T$ by
    Algorithm~\ref{alg:est-refine}.
    If $D=1$, $\{\hat{p}_j^{T-1}\}_j$ are used for refining $\hat{p}_i^T$; if
    $D=2$, $\{p_k^{T-2}\}_k$ are used for refining $\hat{p}_i^T$.
  }
  \label{fig:refine}
\end{figure}
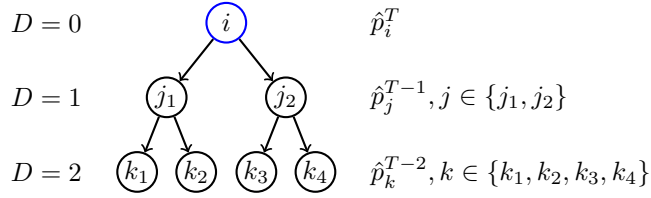

We use the graph in Figure~\ref{fig:refine} to illustrate how the
estimation-and-refinement method is used to obtain $\hat{p}_i^T$.
Let $D=1$, we first obtain raw estimate $\hat{p}_j^{T-1}$ for each node $j\in V$
by simulating random walks of length $T-1$.
To refine the estimate of a node, say, node $i$, we can leverage the relation
$\hat{p}_i^T = \sum_{j\in\NbrO{i}}p_{ij}\hat{p}_j^{T-1} =
p_{ij_1}\hat{p}_{j_1}^{T-1} + p_{ij_2}\hat{p}_{j_2}^{T-1}$, which smooths the raw
estimates of $i$'s out-neighbors, and intuitively, we are using the walks of
neighbor $j_1$ and $j_2$, i.e., $2R$ walks, to estimate $p_i^T$, which will be
more accurate than using only $R$ walks of node $i$.
Similarly, we can use $i$'s two-hop neighbors' raw estimates
$\{\hat{h}_k^{T-2}\}_k$ to refine $i$'s estimate ($D=2$), and we will obtain even
better estimate.
When $D=T$, there is no need to run the first step, and the refinement actually
becomes DP, which obtains the true value of $p_i^T$.

We now formally show that the variance of estimates obtained by the
estimation-and-refinement approach is indeed no larger than the variance of
estimates obtained by RW estimation.
Let us consider the random walks starting from an arbitrary node $i\in V$.
At the first step of the walk, assume that $R_j$ of the walks are at a neighbor
node $j\in\NbrO{i}$.
It is easy to see that $[R_j]_{j\in\NbrO{i}}$ follows a multinomial distribution
parameterized by $[p_{ij}]_{j\in\NbrO{i}}$ and $R$, and $\E(R_j)=Rp_{ij}$.
Then, the RW estimator in~\autoref{ss:rw} estimates $p_i^T$ by
\[
  \hat{p}_i^T
  = \frac{1}{R}\sum_{r=1}^Rb_{ir}^T
  = \frac{1}{R}\sum_{j\in\NbrO{i}}\sum_{r=1}^{R_j}b_{jr}^{T-1}
\]
where $b_{ir}^t$ is a binary variable indicating whether a walk starting from node
$i$ finally hits target node $n$ within $t$ steps.
The variance of above estimator satisfies
\begin{align*}
  var(\hat{p}_i^T)
  &\geq
    \E\left[var\left(
    \frac{1}{R}\sum_{j\in\NbrO{i}}\sum_{r=1}^{R_j}b_{jr}^{T-1}\big|\{R_j\}
    \right)\right]
  = \frac{1}{R}\sum_{j\in\NbrO{i}}p_{ij}\cdot var(b_{jr}^{T-1}) \\
  &= \sum_{j\in\NbrO{i}}\frac{p_{ij}^2}{\E(R_j)}\cdot var(b_{jr}^{T-1})
  \geq \sum_{j\in\NbrO{i}}\frac{p_{ij}^2}{R}\cdot var(b_{jr}^{T-1})
\end{align*}
where the inequality is due to the relation $var(X) = var[\E(X|Y)] + \E[var(X|Y)]
\geq \E[var(X|Y)]$.

In contrast, the estimation-and-refinement approach estimates $p_i^T$ by
\[
  \check{p}_i^T
  =\sum_{j\in\NbrO{i}}\frac{p_{ij}}{R}\sum_{r=1}^Rb_{jr}^{T-1},
\]
and its variance is
\[
  var(\check{p}_i^T)
  =\sum_{j\in\NbrO{i}}\frac{p_{ij}^2}{R}\cdot var(b_{jr}^{T-1})
  \leq var(\hat{p}_i^T).
\]

Hence, the estimation-and-refinement approach indeed has smaller variance than the
RW estimator for estimating $p_i^T$.
It is straightforward to extend the above analysis to show that the
estimation-and-refinement also has smaller variance for estimating $h_i^T$.

\subsubsection{Implementing Oracle Call}

Using the similar idea, we design an estimation-and-refinement approach for better
estimating the marginal gain of a node.
We observe that $\Dp{i}{t}(s)$ and $\Dh{i}{t}(s)$ exhibit similar recursive
definitions as $p_i^t$ and $h_i^t$, i.e., for $t=0,\ldots,T$ and denote
$S'=S\cup\{s\}$, then
\begin{align*}
  \Dp{i}{t}(s)
  &= \sum_{j\in\NbrO{i}}\left[
    p_{ij}(S')p_{j}^{t-1}(S')-p_{ij}(S)p_{j}^{t-1}(S)
    \right] \\
  &=
    \begin{cases}
      \sum_{j\in\NbrO{i}}p_{ij}\Dp{i}{t-1}(s), & i\neq s, \\
      \sum_{j\in\NbrO{s}}\left[
        p_{sj}(S')p_{j}^{t-1}(S')-p_{sj}(S)p_{j}^{t-1}(S)
      \right], & i=s,
    \end{cases}
\end{align*}
and
\begin{align*}
  \Dh{i}{t}(s)
  &= \sum_{j\in\NbrO{i}}\left[
    p_{ij}(S)h_{j}^{t-1}(S)-p_{ij}(S')h_{j}^{t-1}(S')
    \right] \\
  &=
    \begin{cases}
      \sum_{j\in\NbrO{i}}p_{ij}\Dh{i}{t-1}(s), & i\neq s, \\
      \sum_{j\in\NbrO{s}}\left[
        p_{sj}(S)h_{j}^{t-1}(S)-p_{sj}(S')h_{j}^{t-1}(S')
      \right], & i=s.
    \end{cases}
\end{align*}
Note that if $i$ is selected as a connection source, then transition probabilities
from $i$ to other nodes will change, i.e., $p_{ij}(S)\neq p_{ij}(S')$.

The above recursive relations allow us to use the random walk to obtain raw
estimates of $\Delta p_{i}^{T-D}(s)$ and $\Delta h_{i}^{T-D}(s)$, and then refine
their precision similar to the previous discussion.
We give an example of estimating and refining $\dAP(s;S)$ in
Algorithm~\ref{alg:gain-refine}.

\SetKwFunction{Estimate}{EstimateAndRefine\_DeltaAP}
\SetKwFunction{Refine}{Refine\_DeltaAP}
\begin{algorithm}[htp]
  \caption{Estimating $\dAP(s;S)$ by estimation-and-refinement}
  \label{alg:gain-refine}
  \Fn{\Estimate{$s, T, D$}}{
    $U^{T-D},\{\Dhp{j}{T-D}\}_j\gets$\RawDelta{$s,T-D$}\;
    \KwRet\Refine{$U^{T-D},\{\Dhp{j}{T-D}\}_j$}\;\label{l:gain-refine}
  }
  \Fn{\Refine{$s,U^{T-D},\{\Dhp{j}{T-D}\}_j$}}{
    \For{$t\gets T-D+1$ \KwTo $T$}{
      \ForEach(\tcp*[f]{$i\neq s$}){$j\in U^{t-1}$}{
        \ForEach{$i\in\NbrI{j}\wedge i\neq s$}{
          $\Dhp{i}{t}\gets\Dhp{i}{t}+p_{ij}\Dhp{j}{t-1}$\;
          $U^t\gets U^t\cup\{i\}$\;
        }
      }
      $U^t\gets\{s\}$\tcp*{$i=s$}
      $\Dhp{s}{t}\gets
      \sum_{j\in\NbrO{s}}\left[
        p_{sj}(S')\hat{p}_j^{t-1}(S')-p_{sj}(S)\hat{p}_j^{t-1}(S) \right]$\;
    }
    \KwRet $U^T,\{\Dhp{i}{T}\}_{i\in U^T}$\;
  }
\end{algorithm}

\section{Validating the Estimation Methods}
\label{sec:experiment}

In this section, we devote to validate the proposed estimation methods, and in the
next section we give some applications of the node discoverability optimization
problem.
We conduct experiments on real graphs of various types and scales to validate the
accuracy and efficiency of our proposed methods.
First, we briefly introduce the datasets.
Then, we compare the estimation accuracy and computational efficiency for
estimating truncated absorbing probability/hitting time and marginal gain.
Finally, we evaluate the performance of greedy algorithm by comparing with
baseline methods.

\subsection{Datasets}

We use public available graphs of different types and scales from the SNAP graph
repository~\cite{SNAP} as our test beds.
For an edge in a graph, we assume it has a unitary weight one.
The basic statistics of these graphs are summarized in Table~\ref{tab:data}.

All the experiments are performed on a laptop running 64-bit Ubuntu 16.04 LTS,
with a dual-core 2.66GHz Intel i3 CPU, 8GB of main memory, and a 500GB 5400RPM
hard disk.

\begin{table}[htp]
  \small
  \centering
  \caption{Graph statistics \label{tab:data}}
  \begin{tabular}{l|l|r|r}
    \hline\hline
    {\bf graph} & {\bf description}  & {\bf \# of nodes} & {\bf \# of edges} \\
    \hline
    HepTh       & citation network, directed    & $27,400$    & $355,057$    \\
    Enron       & email communication           & $33,696$    & $180,811$    \\
    Gowalla     & location based social network & $196,591$   & $950,327$    \\
    DBLP        & coauthor network              & $317,080$   & $1,049,866$  \\
    Amazon      & product network               & $334,863$   & $925,872$    \\
    YouTube     & friendship network            & $1,134,890$ & $2,987,624$  \\
    Patents     & citation network, directed    & $3,774,768$ & $18,204,370$ \\
    \hline
    Weibo~\cite{Zhao2014a}
                & follower network, directed    & $323,069$   & $1,937,008$ \\
    Douban~\cite{Zhao2011}
                & follower network, directed    & $1,760,297$ & $23,379,254$ \\
    \hline
  \end{tabular}
\end{table}

\subsection{Evaluating Absorbing Probability/Hitting Time Estimation
  Accuracy}

In the first experiment, we evaluate the accuracy of estimating $p_i^T(S)$ and
$h_i^T(S)$ by different methods when connection sources $S$ are given.
We set $S=V$, i.e., connect every node in the graph to target node $n$ with weight
one.
This corresponds to the case that D-AP is maximum and D-HT is minimum.
DP in Algorithm~\ref{alg:dp} is an exact method which hence allows us to obtain
the groundtruth $p_i^T$ and $h_i^T$ on a graph.
In this experiment, we use three smaller graphs, HepTh, Enron, and Gowalla, for
the convenience of calculating groundtruth.

First, we show how close the estimate $\hat{p}_i^T$ (or $\hat{h}_i^T$) is to its
groundtruth $p_i^T$ (or $h_i^T$) by evaluating their ratio $\hat{p}_i^T/p_i^T$ (or
$\hat{h}_i^T/h_i^T$).
We randomly pick a few nodes from each graph, and estimate $p_i^T$ and $h_i^T$ for
each node sample $i$ using different methods (or parameter settings) and different
number of RWs.
We then calculate the ratio $\hat{p}_i^T/p_i^T$ and $\hat{h}_i^T/h_i^T$ for each
node sample $i$, and show their values versus the number of RWs as scatter plots
in Figures~\ref{fig:scatters_p} and~\ref{fig:scatters_h}.
In addition, we roughly separate nodes into two categories, i.e., low degree nodes
which have degrees smaller than the average degree of the graph, and high degree
nodes which have degrees larger than the average degree, to study the difference
of their estimation accuracy.
We observe that both the RW estimation approach and the estimation-and-refinement
approach can provide good estimates, and generally, the estimates become more
accurate when the number of walks per node increases.
Furthermore, the estimation-and-refinement approach indeed can refine the
estimation accuracy significantly, and with larger refinement depth $D$, we obtain
even more accurate estimates.
For nodes in different categories, however, we do not observe significant
estimation accuracy difference, indicating that these methods are not sensitive to
node degrees.

\begin{figure}[t]
  \small
  \centering
  \begin{tabular}{CCC}
    \hline
    HepTh & Enron & Gowalla \\
    \hline
    \multicolumn{3}{@{}c@{}}{
    \subfloat[$\hat{p}_i^T/p_i^T$ for low degree nodes $i$]{%
      \includegraphics[width=.33\linewidth]{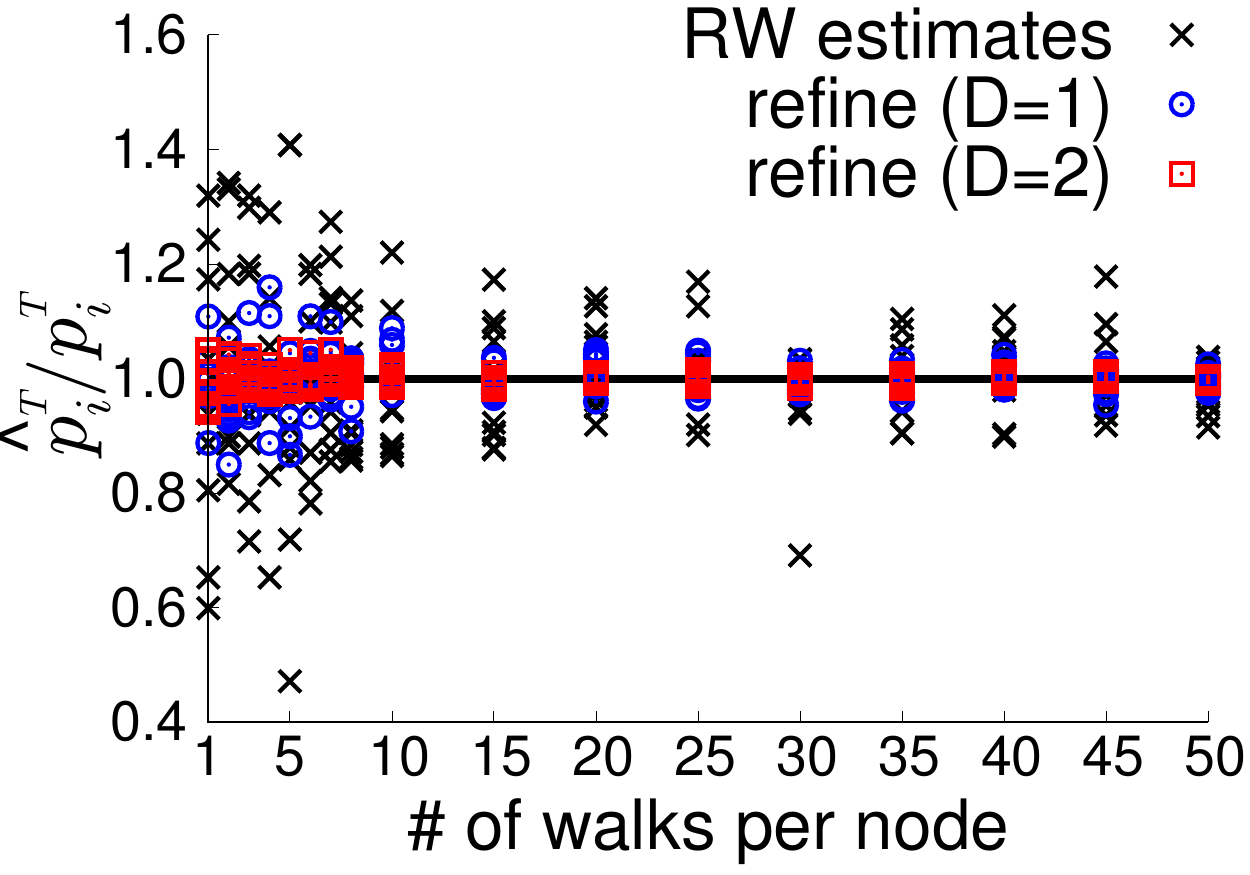}%
      \includegraphics[width=.33\linewidth]{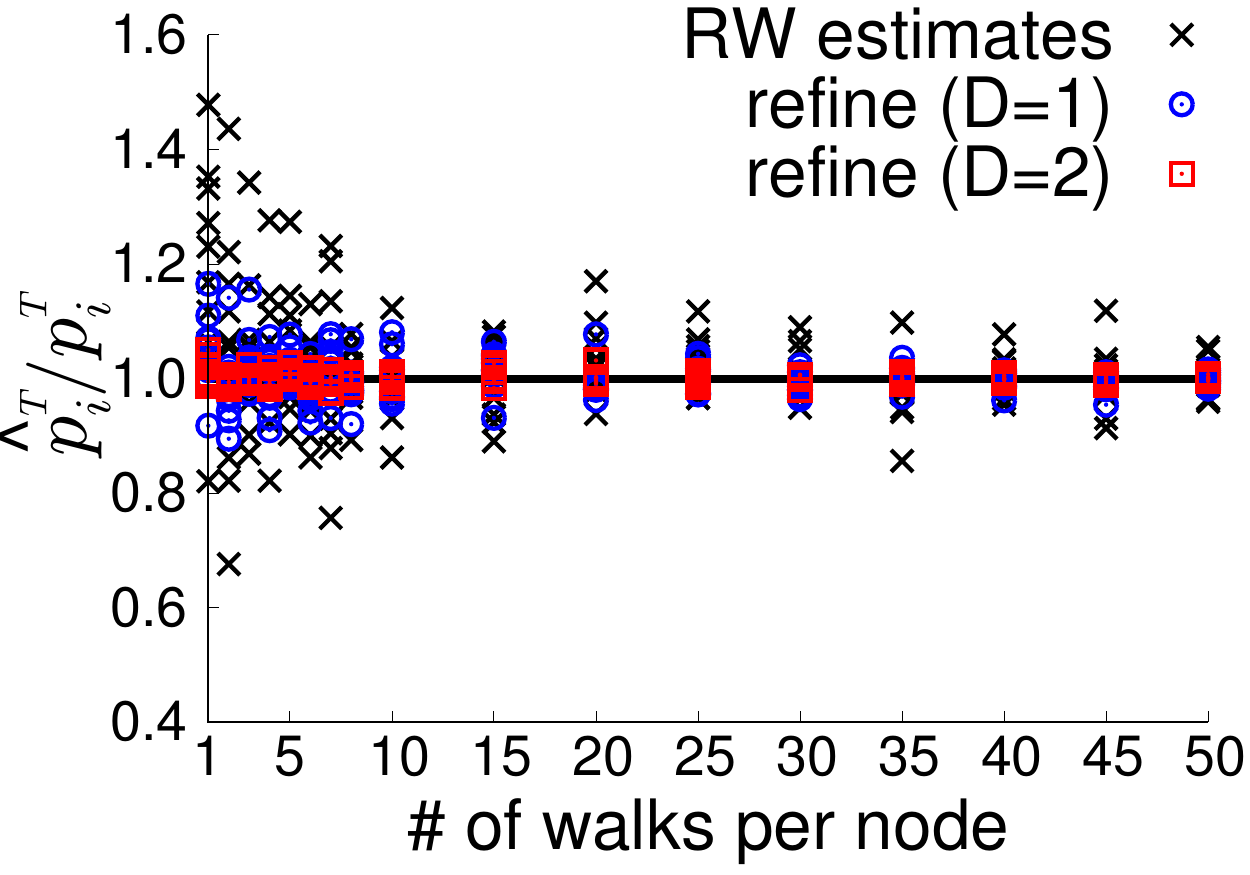}%
      \includegraphics[width=.33\linewidth]{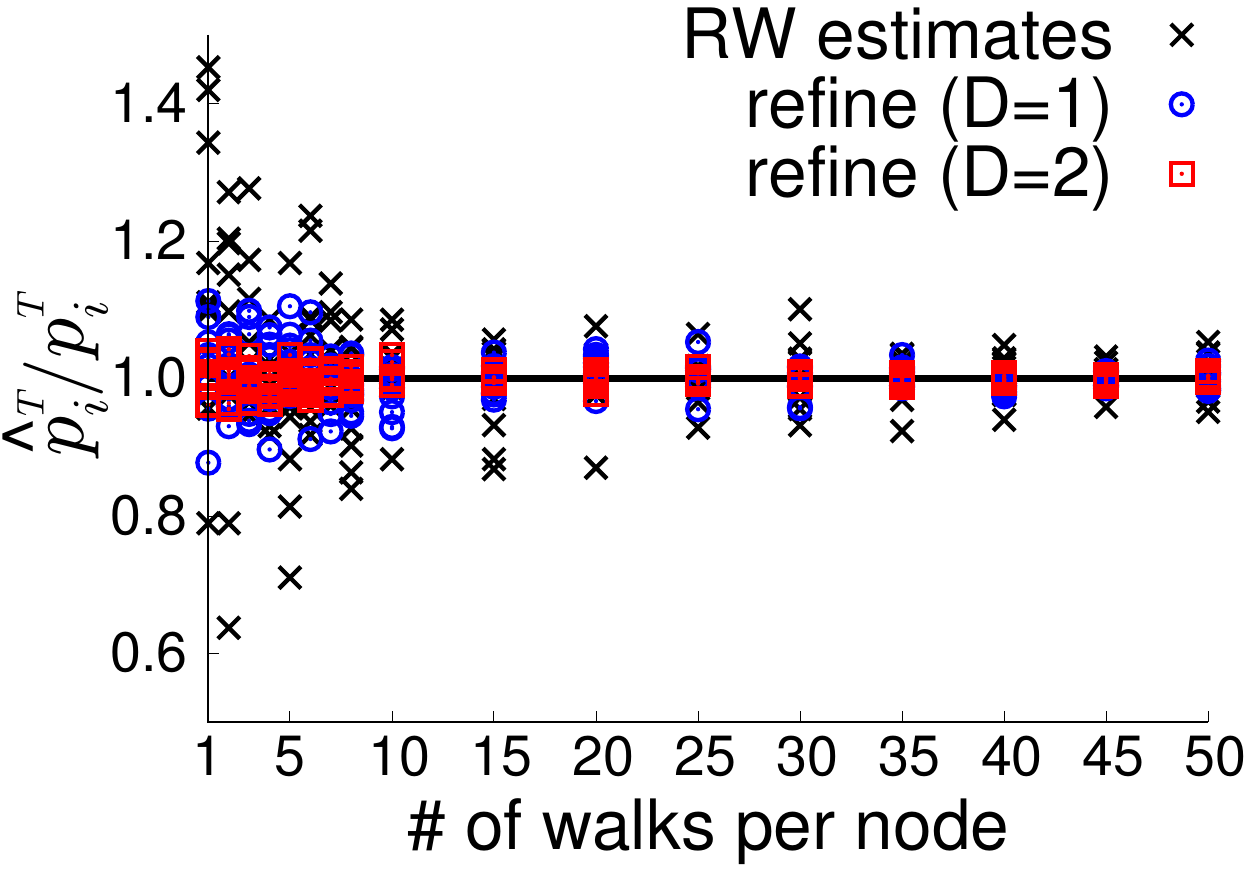}}
    } \\
    \multicolumn{3}{@{}c@{}}{
    \subfloat[$\hat{p}_i^T/p_i^T$ for high degree nodes $i$]{%
      \includegraphics[width=.33\linewidth]{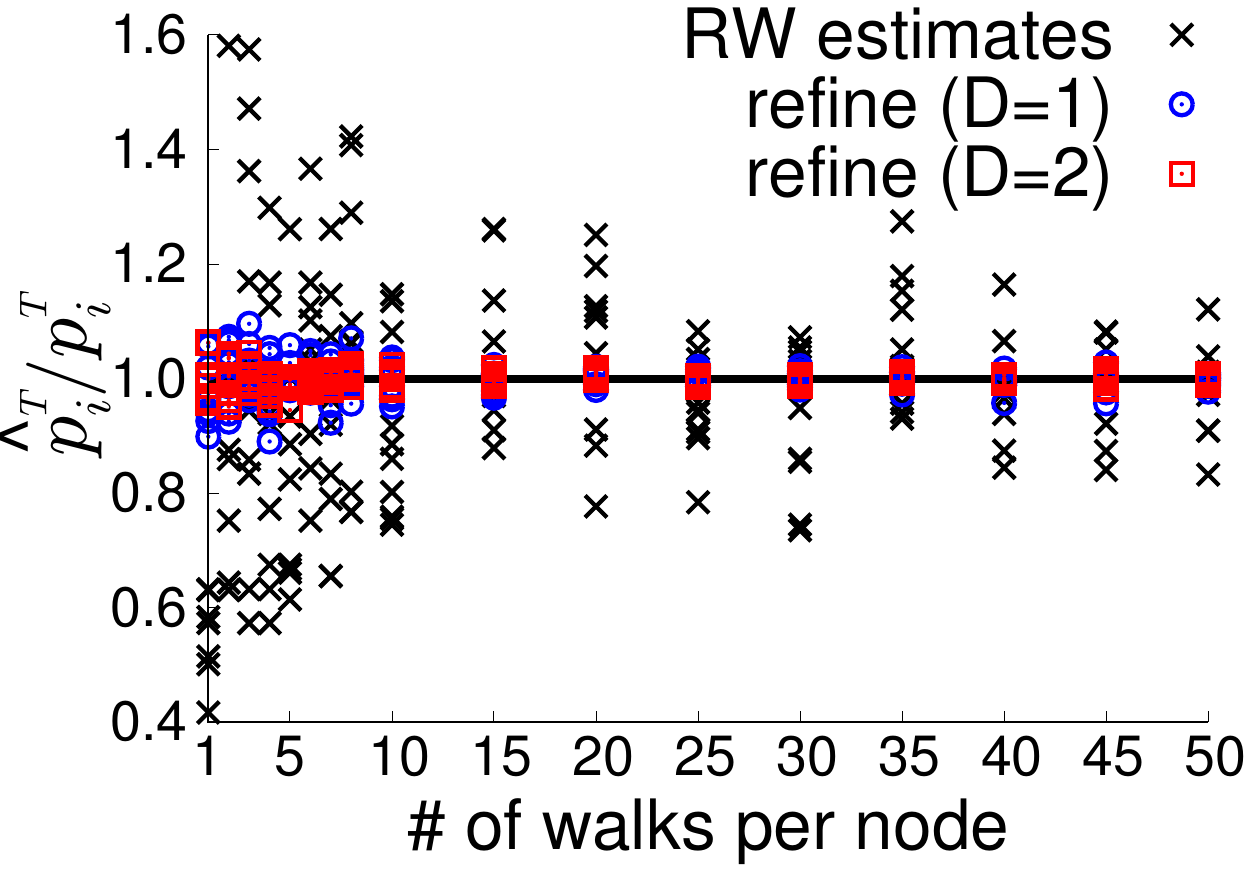}%
      \includegraphics[width=.33\linewidth]{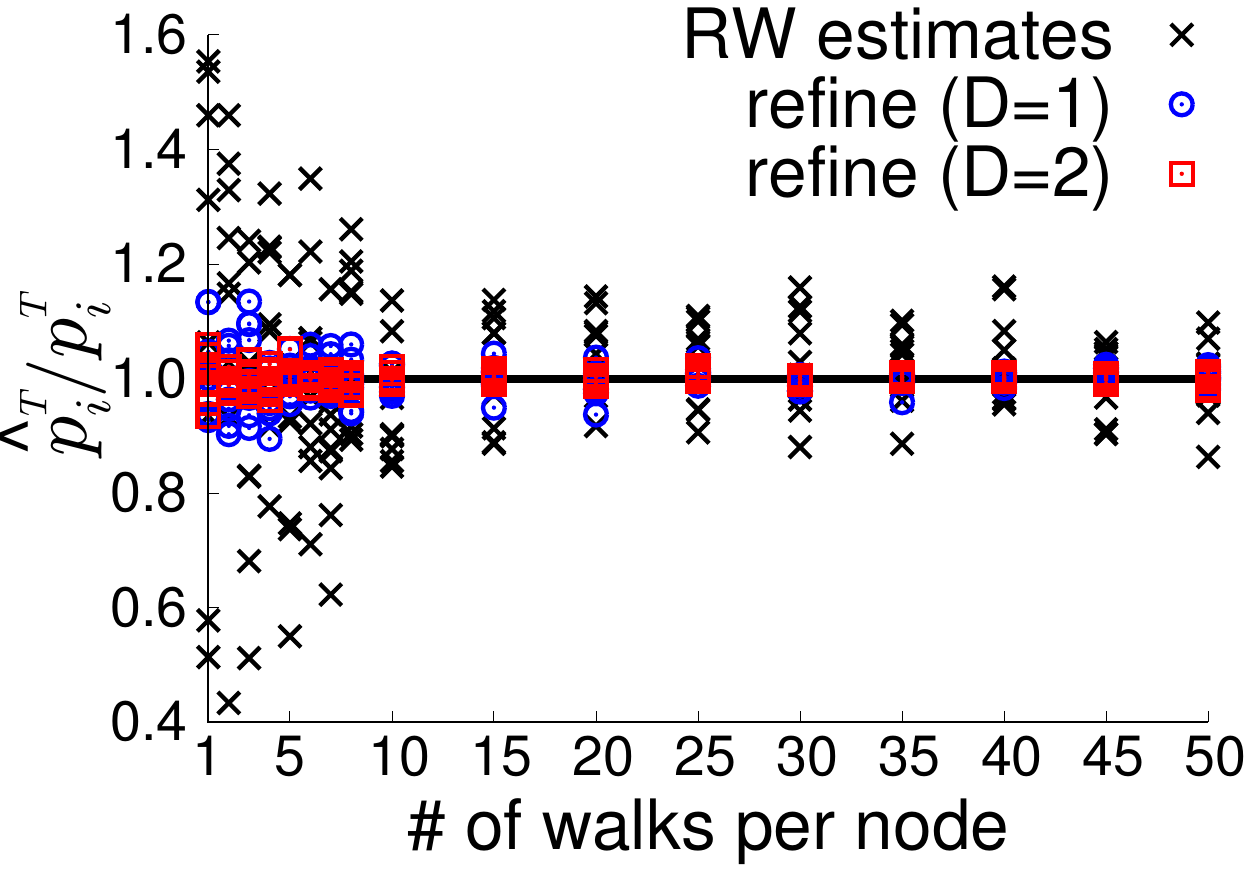}%
      \includegraphics[width=.33\linewidth]{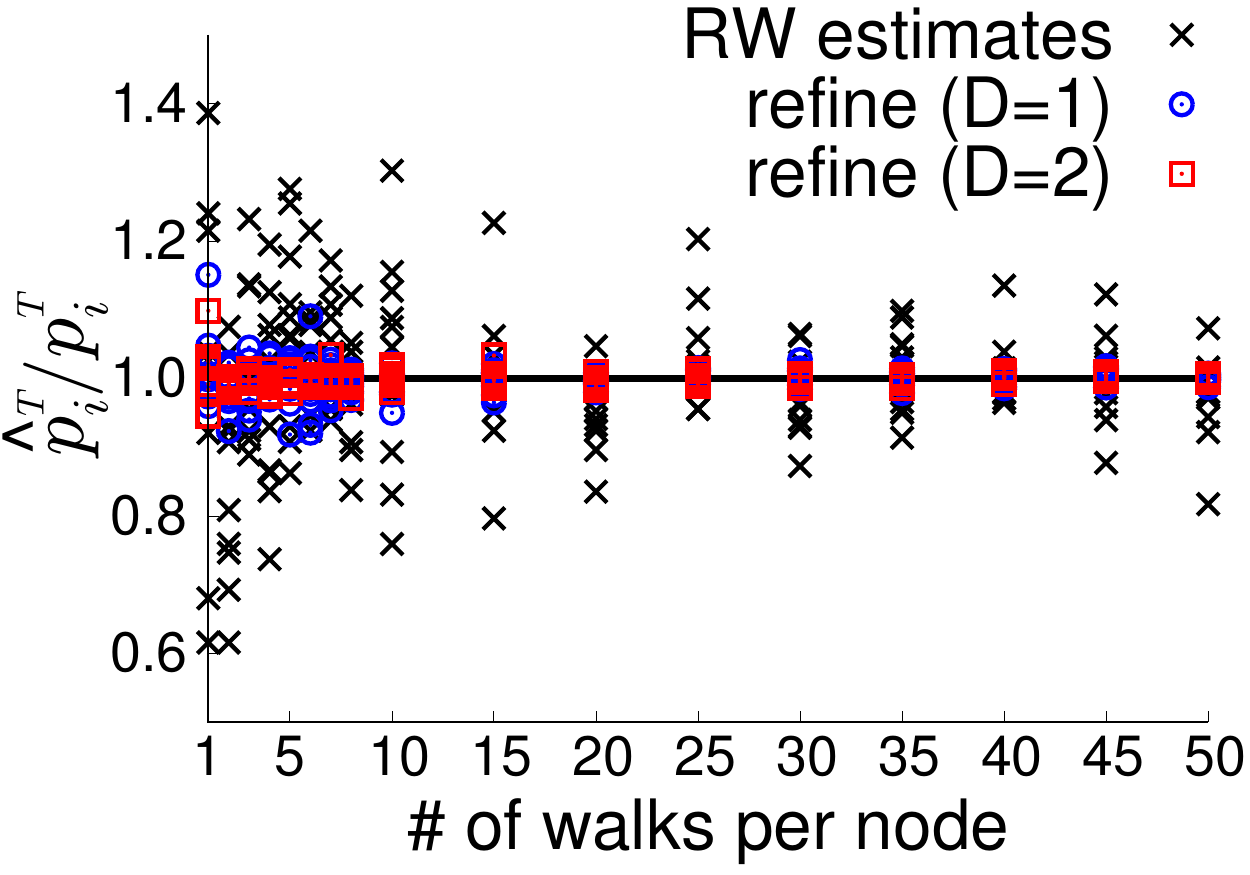}}
    }
  \end{tabular}
  \caption{Estimates of $p_i^T$ on three graphs.
    Each scatter is an estimate for a node sample.
    The low (or high) degree nodes refer to nodes with degree smaller (or larger)
    than the average degree in the graph. ($T=10$)}
  \label{fig:scatters_p}
\end{figure}

\begin{figure}[t]
  \small
  \centering
  \begin{tabular}{CCC}
    \hline
    HepTh & Enron & Gowalla \\
    \hline
    \multicolumn{3}{@{}c@{}}{
    \subfloat[$\hat{h}_i^T/h_i^T$ for low degree nodes $i$]{%
      \includegraphics[width=.33\linewidth]{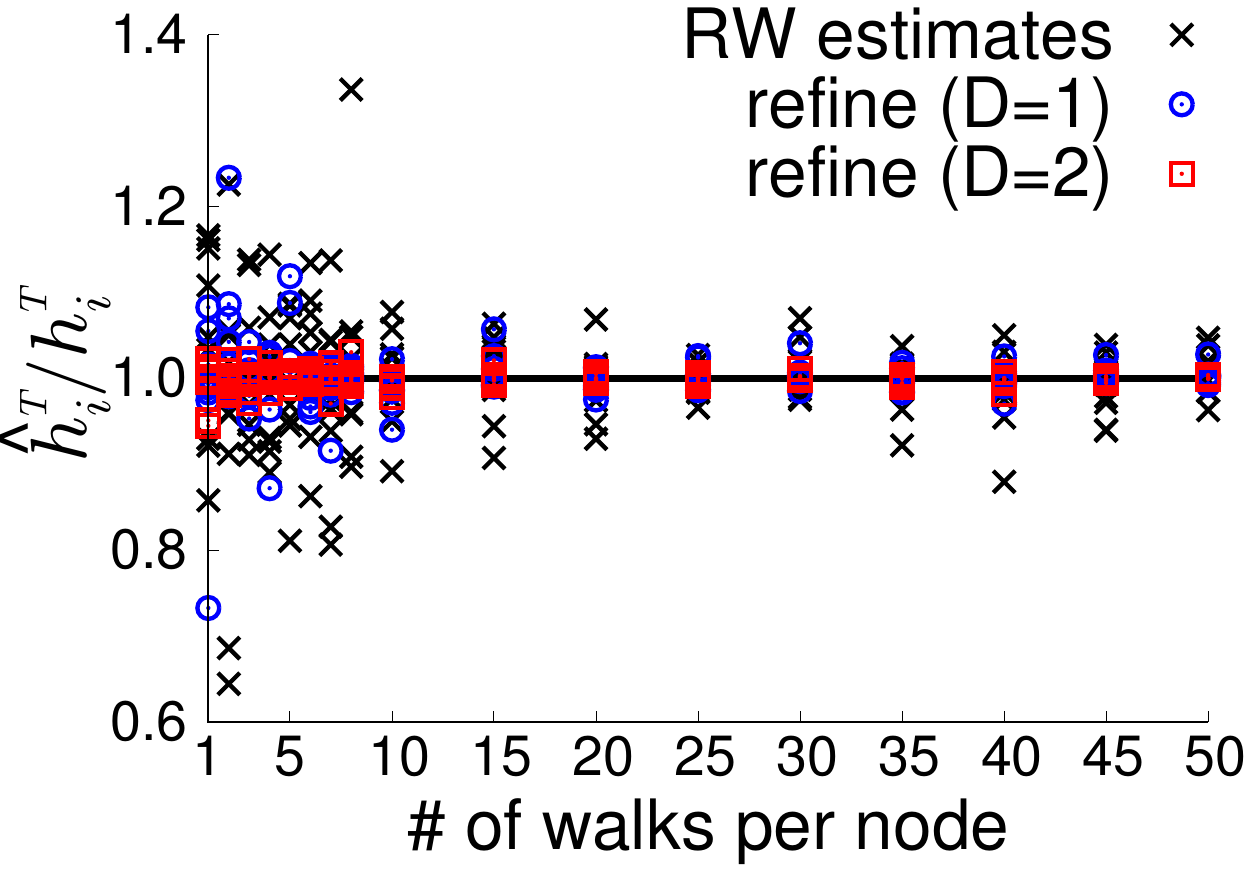}%
      \includegraphics[width=.33\linewidth]{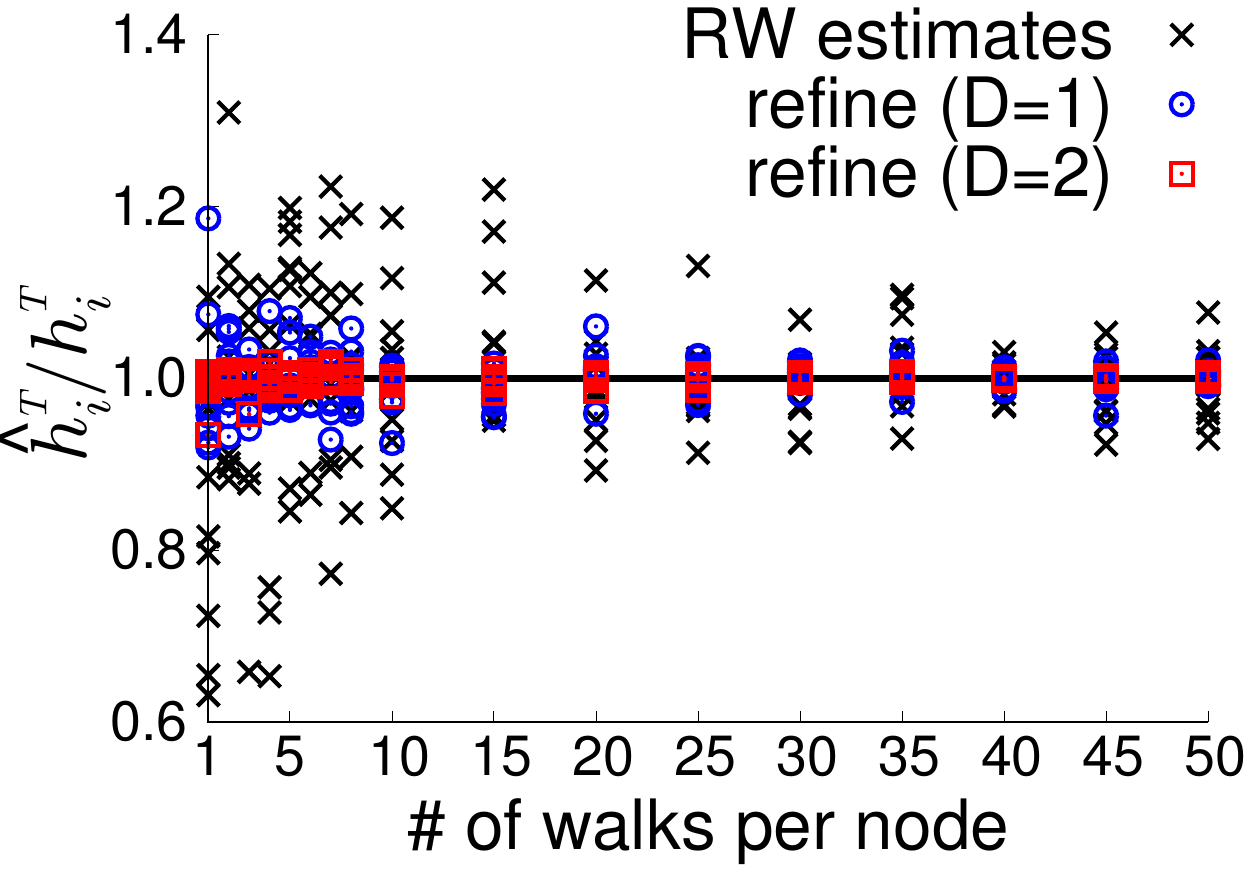}%
      \includegraphics[width=.33\linewidth]{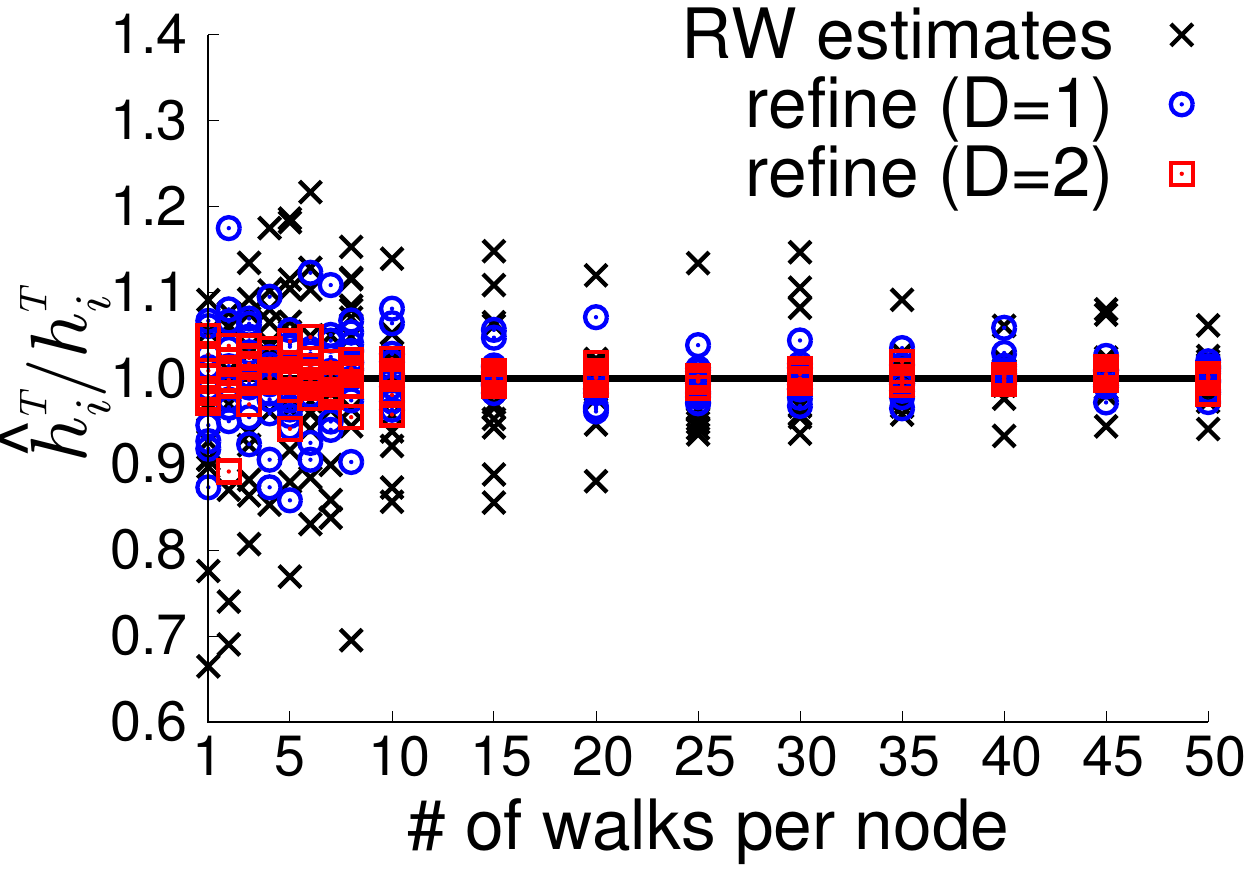}}
    } \\
    \multicolumn{3}{@{}c@{}}{
    \subfloat[$\hat{h}_i^T/h_i^T$ for high degree nodes $i$]{%
      \includegraphics[width=.33\linewidth]{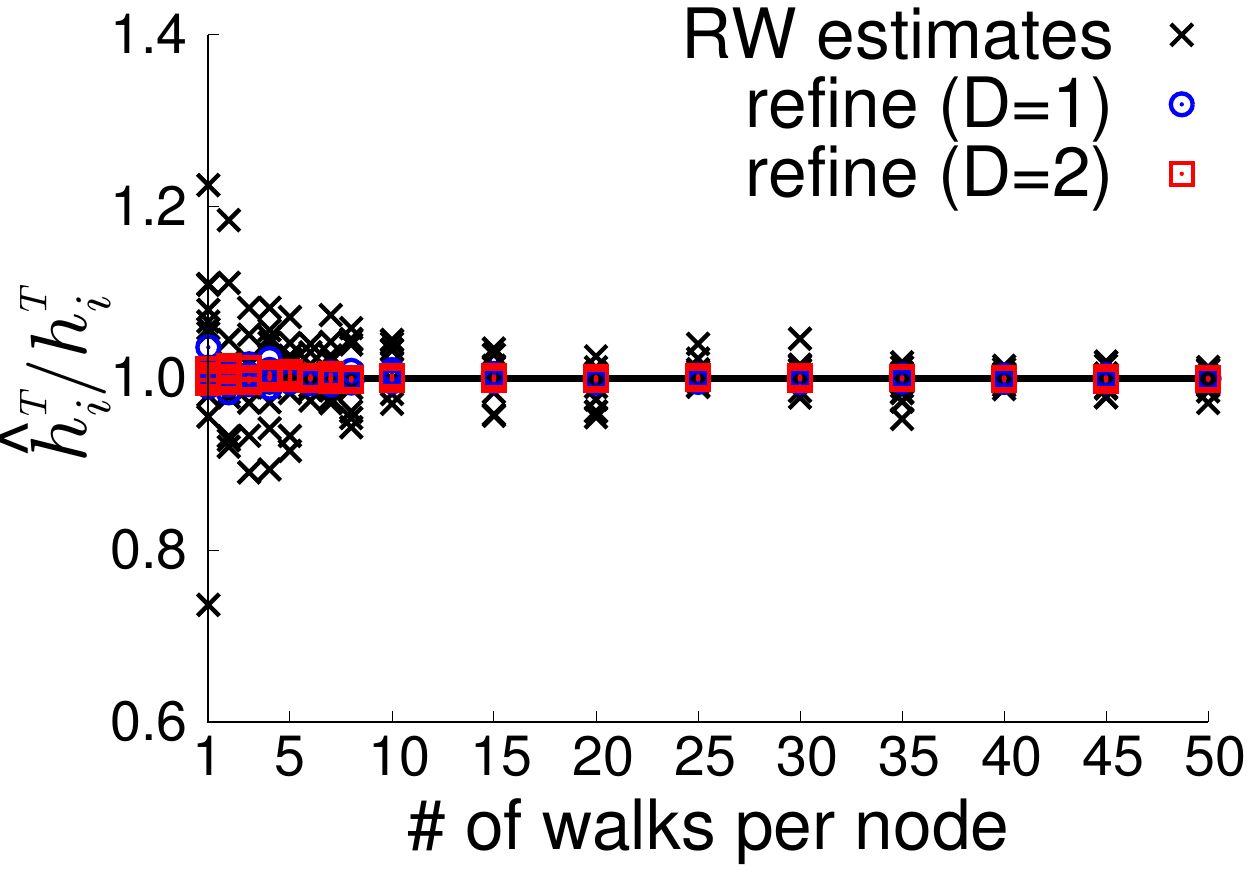}%
      \includegraphics[width=.33\linewidth]{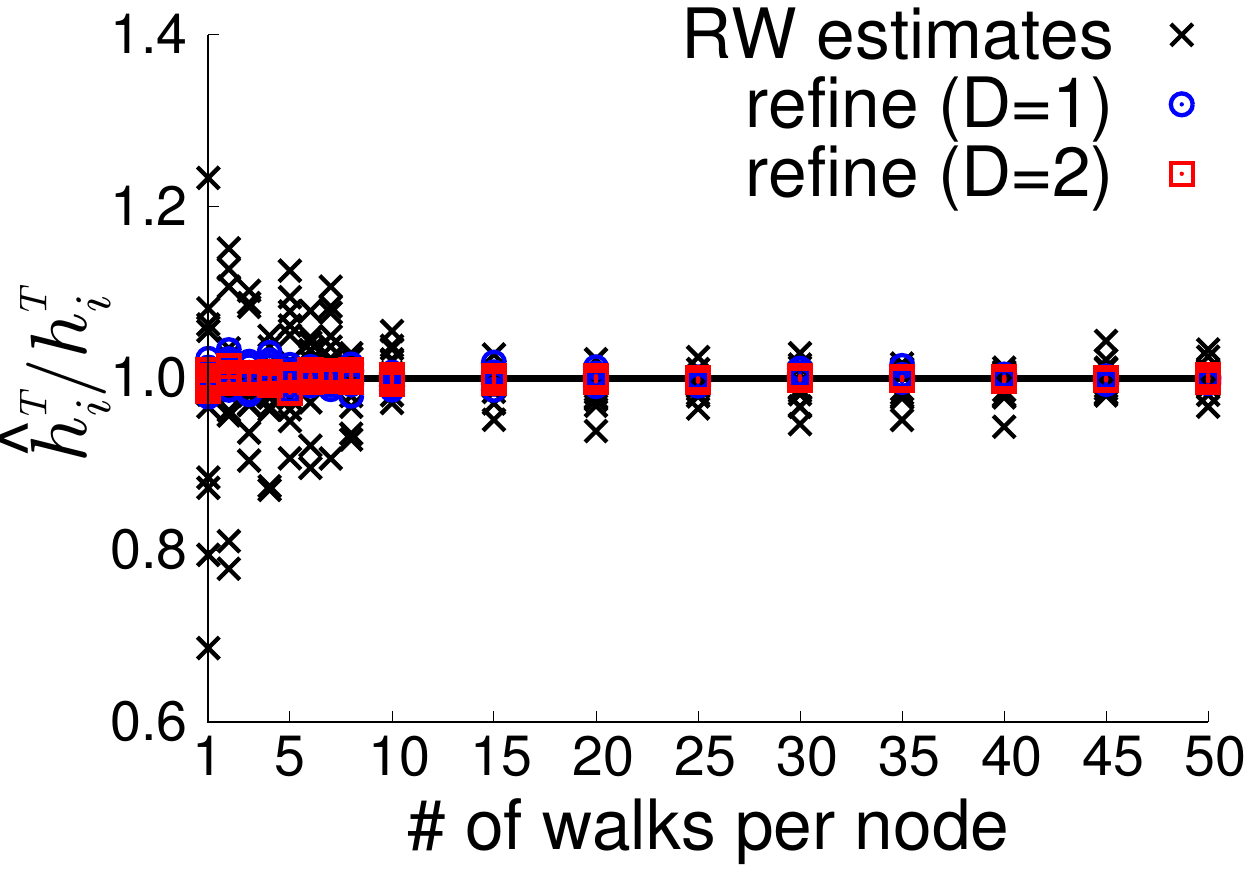}%
      \includegraphics[width=.33\linewidth]{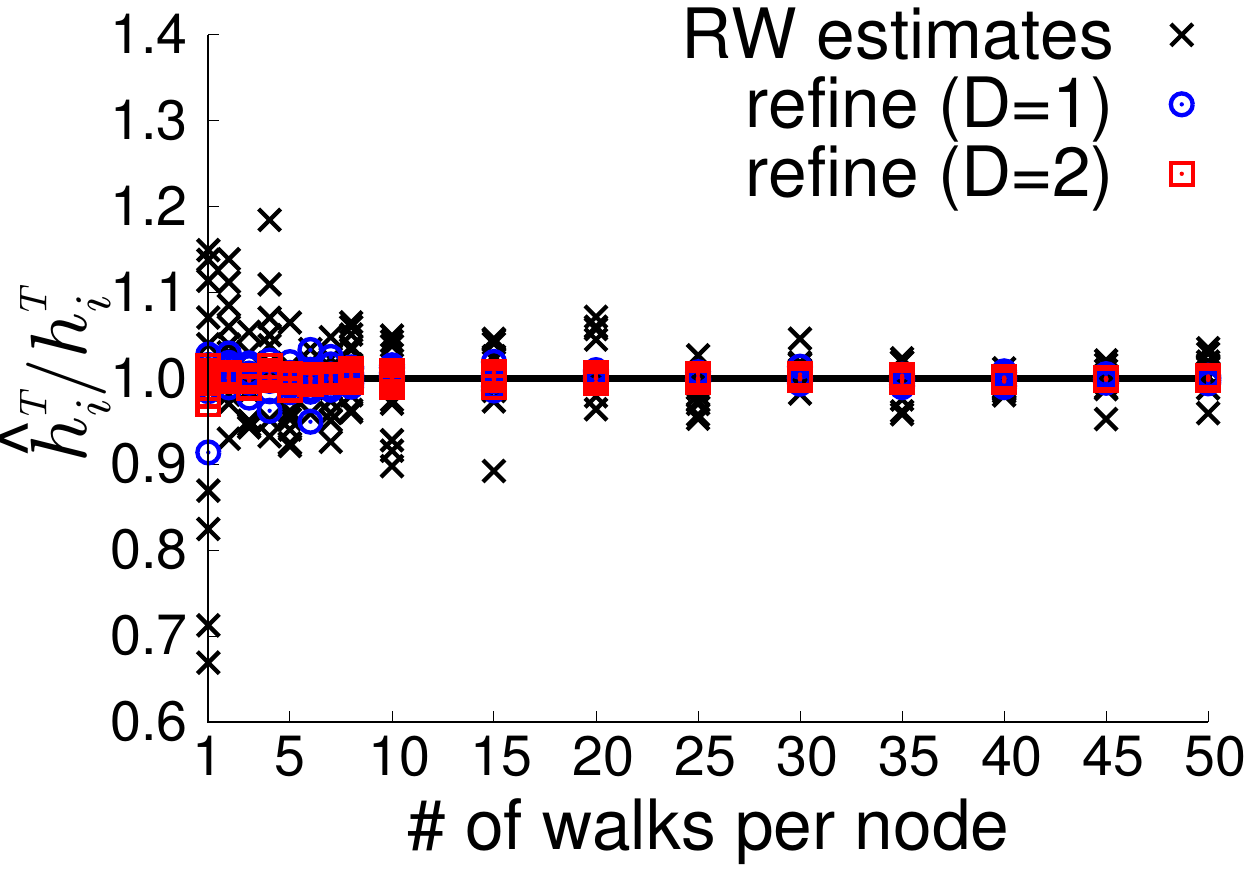}}
    }
  \end{tabular}
  \caption{Estimates of $h_i^T$ on three graphs. ($T=10$)}
  \label{fig:scatters_h}
\end{figure}

Another way to evaluate the estimation accuracy of an estimator is to study its
normalized rooted mean squared error (NRMSE).
NRMSE of an estimator $\hat\theta$ given groundtruth $\theta$ is defined by
NRMSE$(\hat\theta)\triangleq \sqrt{\E(\hat\theta - \theta)^2}/\theta$, and the
smaller the NRMSE, the more accurate an estimator is.
In our setting, we propose to quantify the estimation accuracy by the averaged
normalized rooted mean squared error (AVG-NRMSE), i.e.,
\begin{align*}
  \text{AVG-NRMSE}(\{\hat{p}_i^T\}_{i\in V'})
  &\triangleq \frac{1}{|V'|}\sum_{i\in V'}\text{NRMSE}(\hat{p}_i^T), \\
  \text{AVG-NRMSE}(\{\hat{h}_i^T\}_{i\in V'})
  &\triangleq \frac{1}{|V'|}\sum_{i\in V'}\text{NRMSE}(\hat{h}_i^T),
\end{align*}
where $V'\subseteq V$ is a subset of nodes to evaluate, and we set $V'=V$.
We depict these results in Figures~\ref{fig:fap_nrmse} and~\ref{fig:fht_nrmse}.
The NRMSE curves clearly show the difference of performance of the two methods and
with different parameter settings.
First, we observe that when the number of walks per node increases, the estimation
error of each method decreases, indicating that the estimates become more
accurate.
Second, the estimation-and-refinement approach can provide even more accurate
estimates than the RW estimation approach.
When the refinement depth $D$ increases, we could obtain even more accurate
estimates.
These observations are coincide with the previous experiment.

\begin{figure}[htp]
  \small
  \centering
  \begin{tabular}{CCC}
    \hline
    HepTh & Enron & Gowalla \\
    \hline
    \multicolumn{3}{@{}c@{}}{
    \subfloat[estimation error vs. $R$ ($T=10$)]{%
      \includegraphics[width=.33\linewidth]{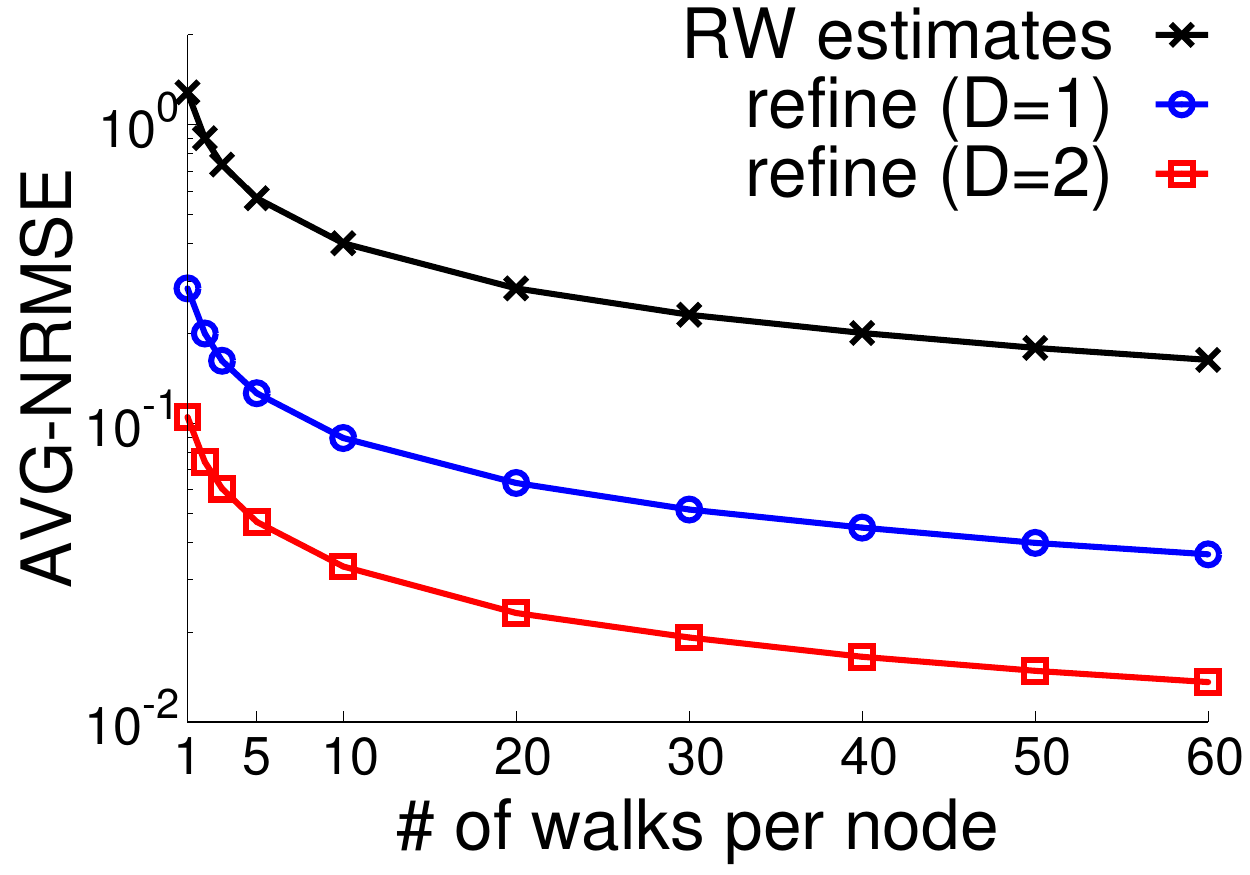}%
      \includegraphics[width=.33\linewidth]{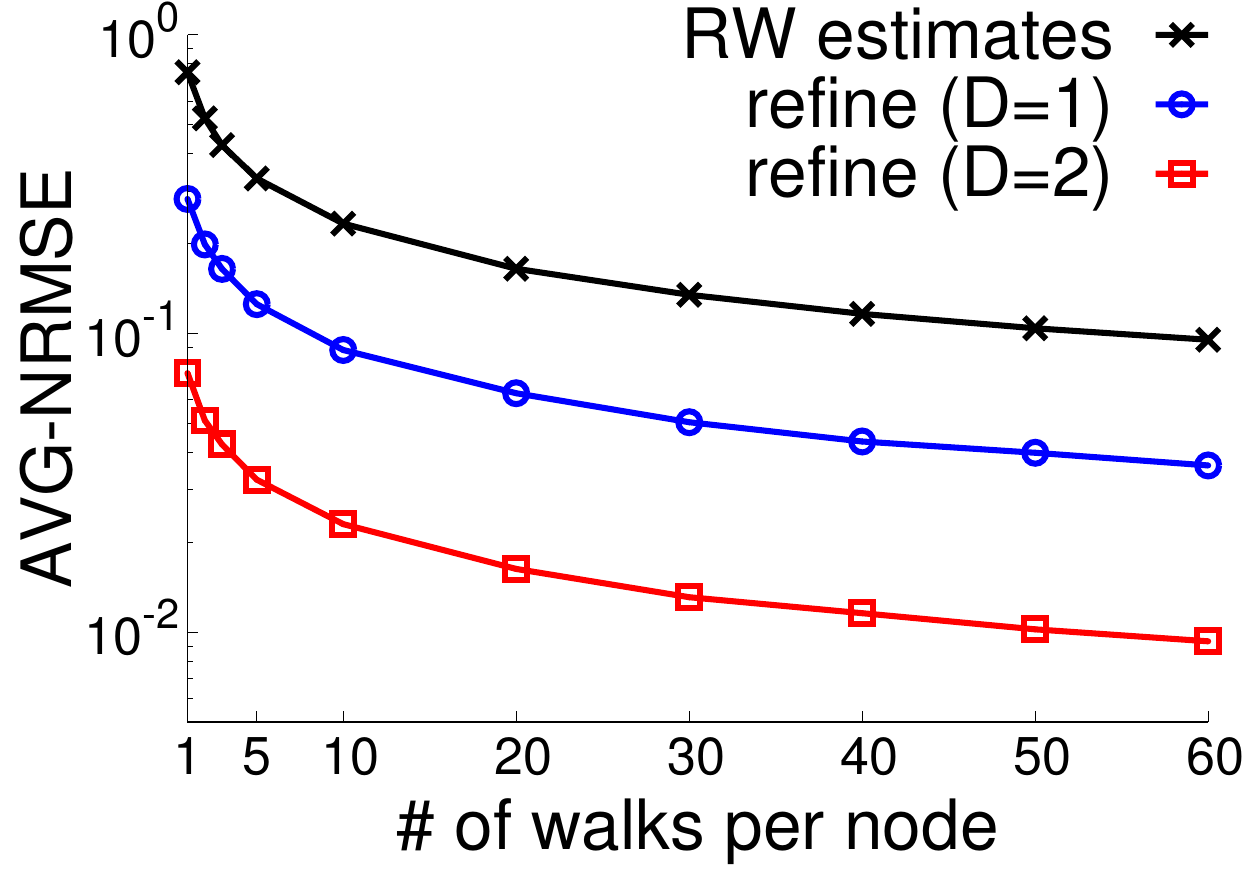}%
      \includegraphics[width=.33\linewidth]{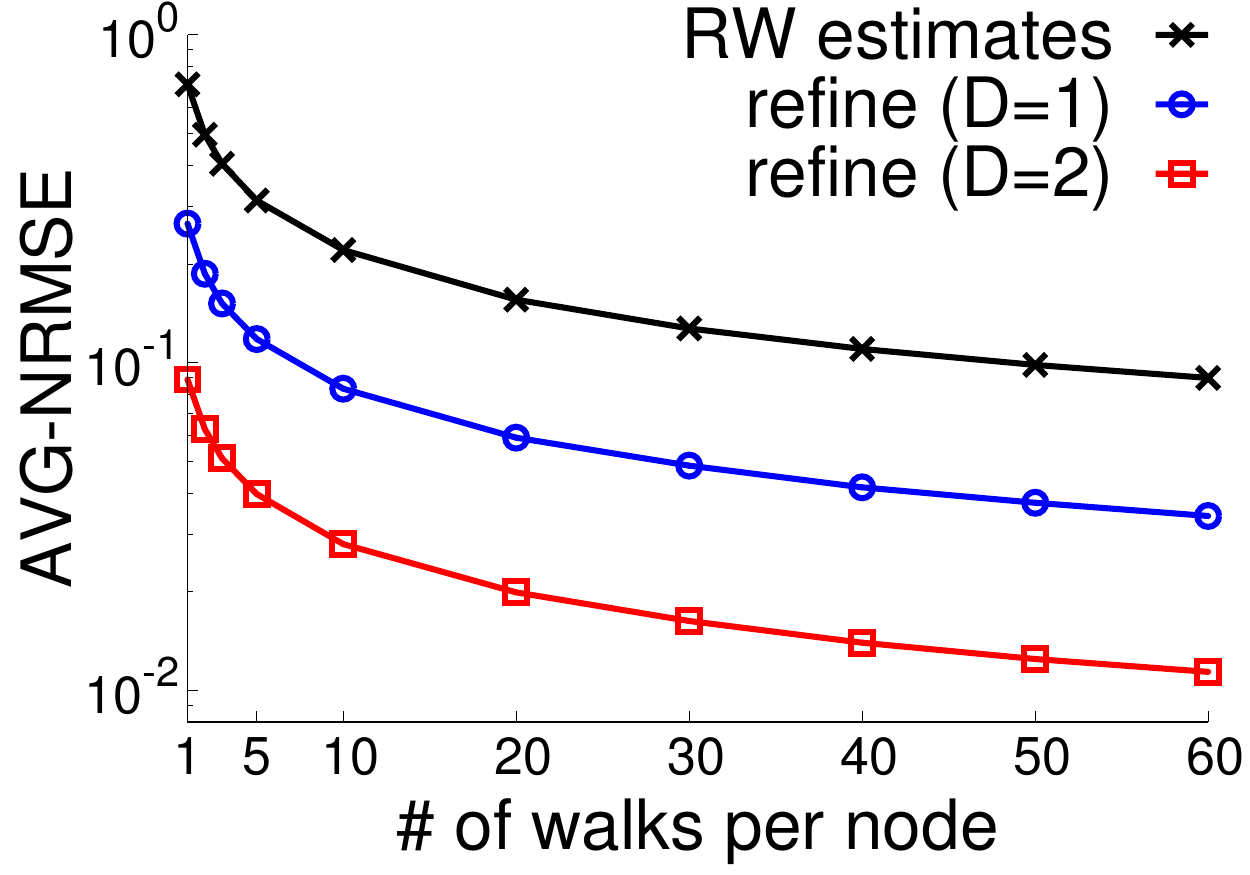}}
    } \\
    \multicolumn{3}{@{}c@{}}{
    \subfloat[estimation error vs. $T$ ($R=10$)\label{sf:nrmse_p_T}]{%
      \includegraphics[width=.33\linewidth]{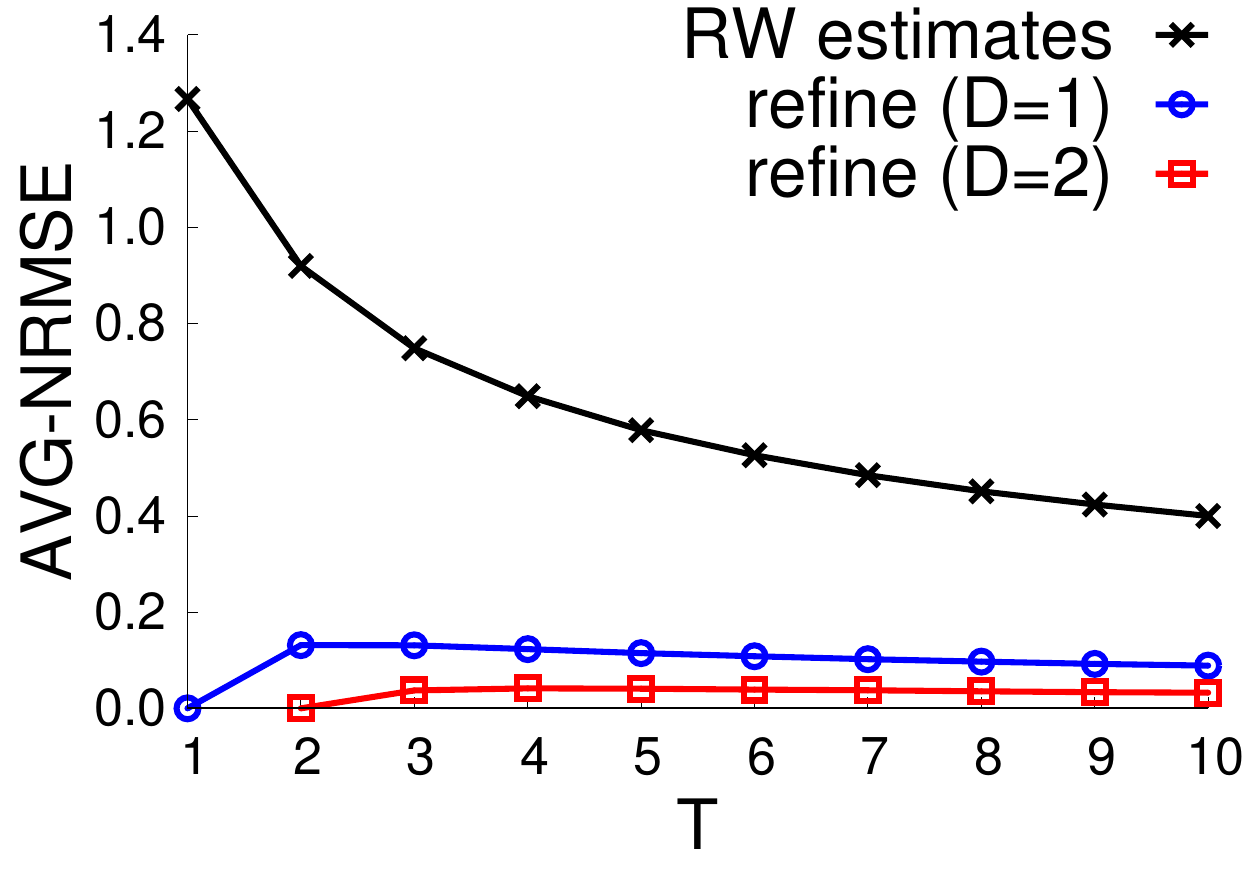}%
      \includegraphics[width=.33\linewidth]{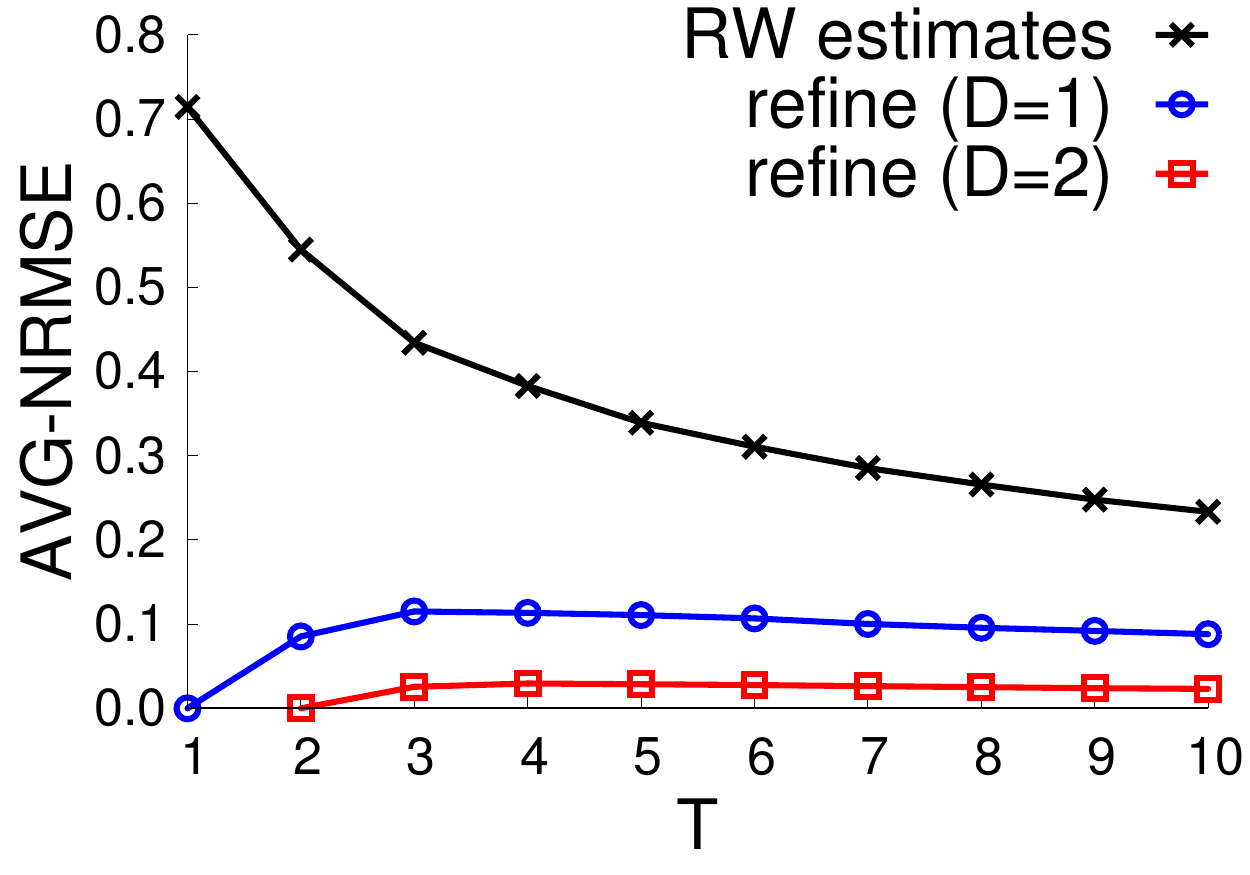}%
      \includegraphics[width=.33\linewidth]{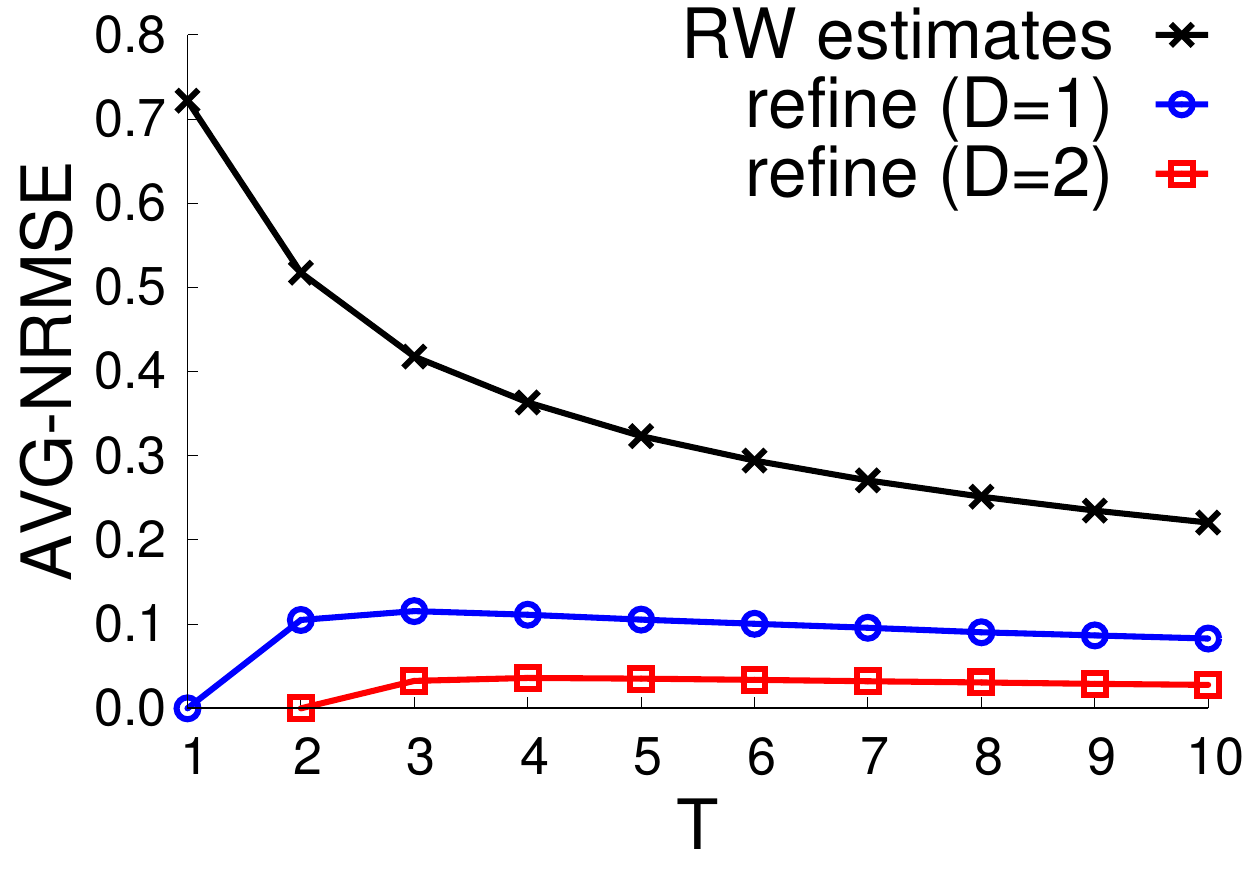}}
    }
  \end{tabular}
  \caption{$p_i^T$ estimation accuracy on three networks.}
    \label{fig:fap_nrmse}
\end{figure}

\begin{figure}[htp]
  \small
  \centering
  \begin{tabular}{CCC}
    \hline
    HepTh & Enron & Gowalla \\
    \hline
    \multicolumn{3}{@{}c@{}}{
    \subfloat[estimation error vs. $R$ ($T=10$)]{%
      \includegraphics[width=.33\linewidth]{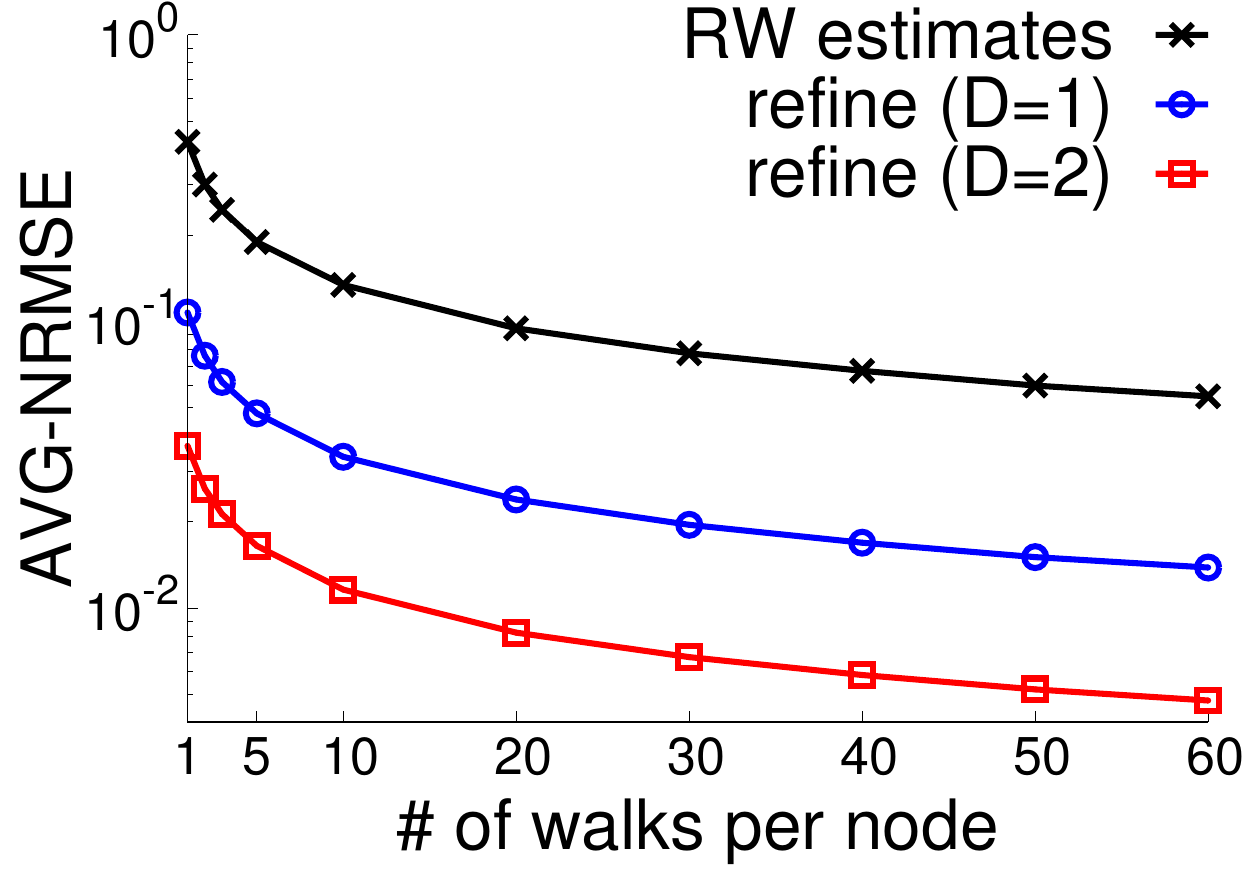}%
      \includegraphics[width=.33\linewidth]{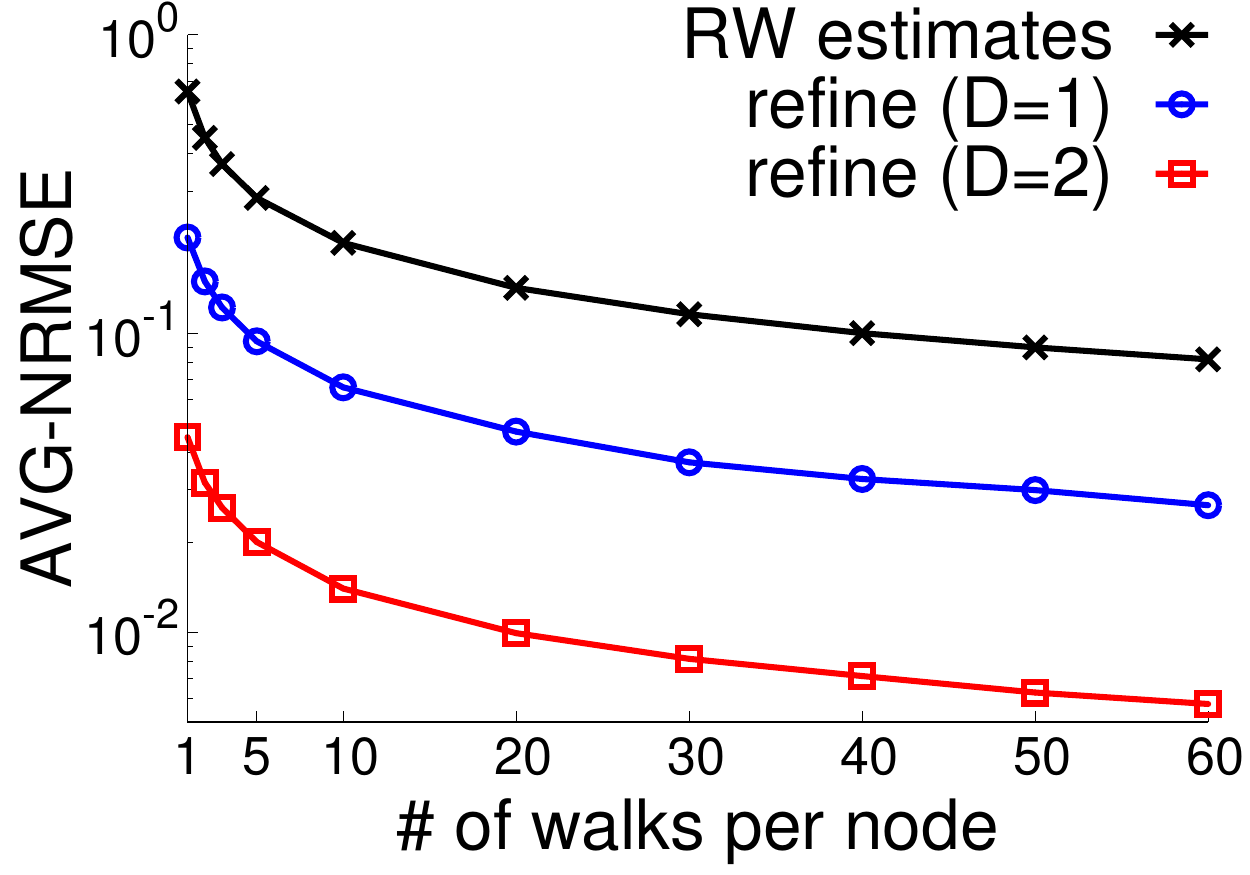}%
      \includegraphics[width=.33\linewidth]{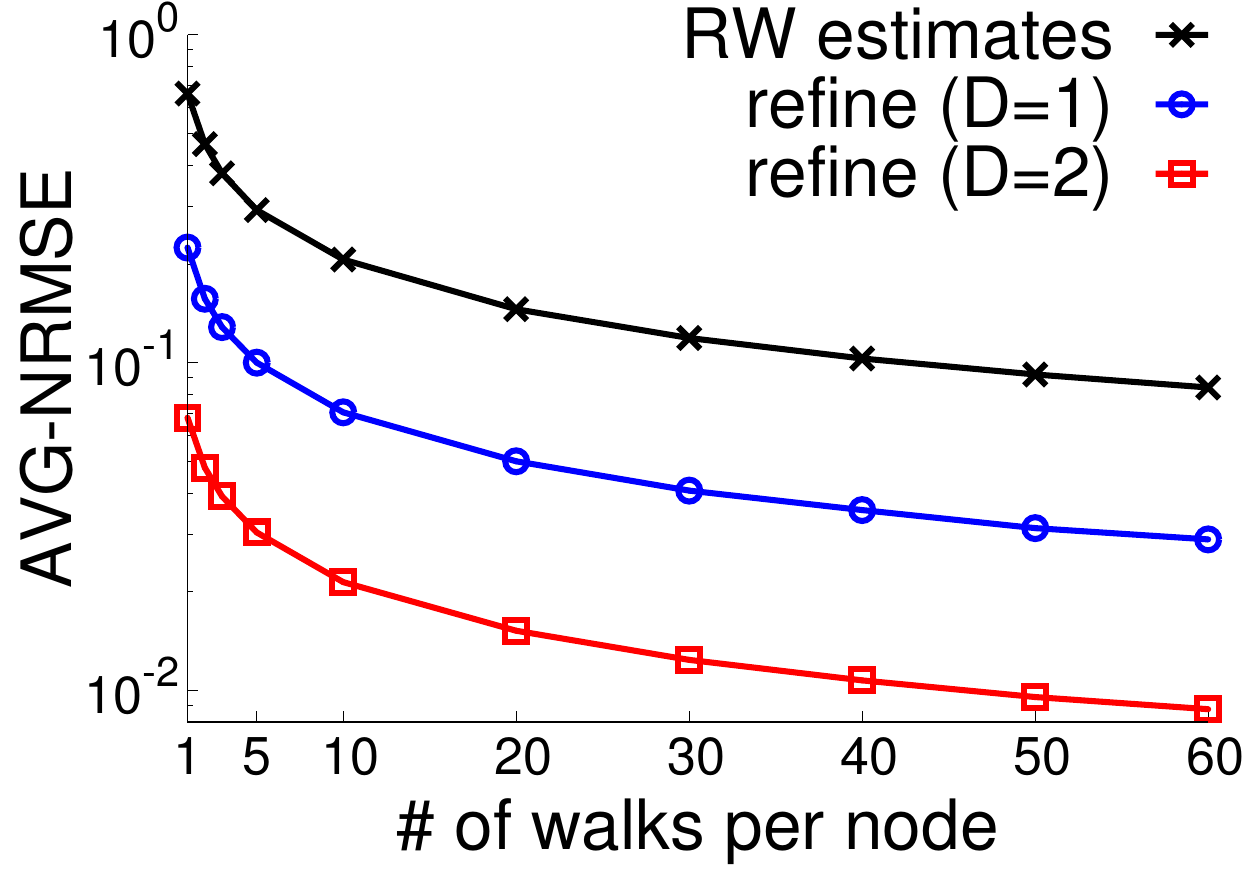}}
    } \\
    \multicolumn{3}{@{}c@{}}{
    \subfloat[estimation error vs. $T$ ($R=10$)\label{sf:nrmse_h_T}]{%
      \includegraphics[width=.33\linewidth]{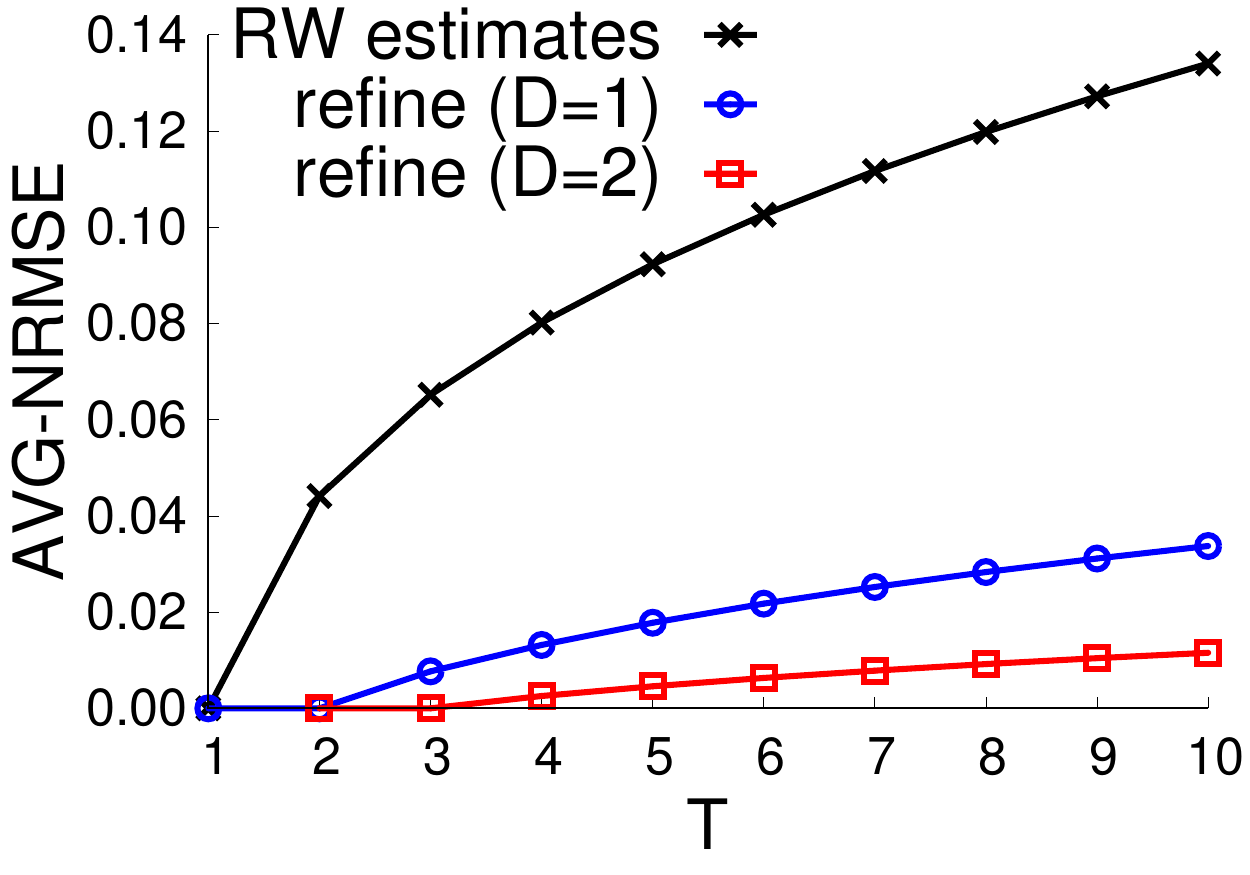}%
      \includegraphics[width=.33\linewidth]{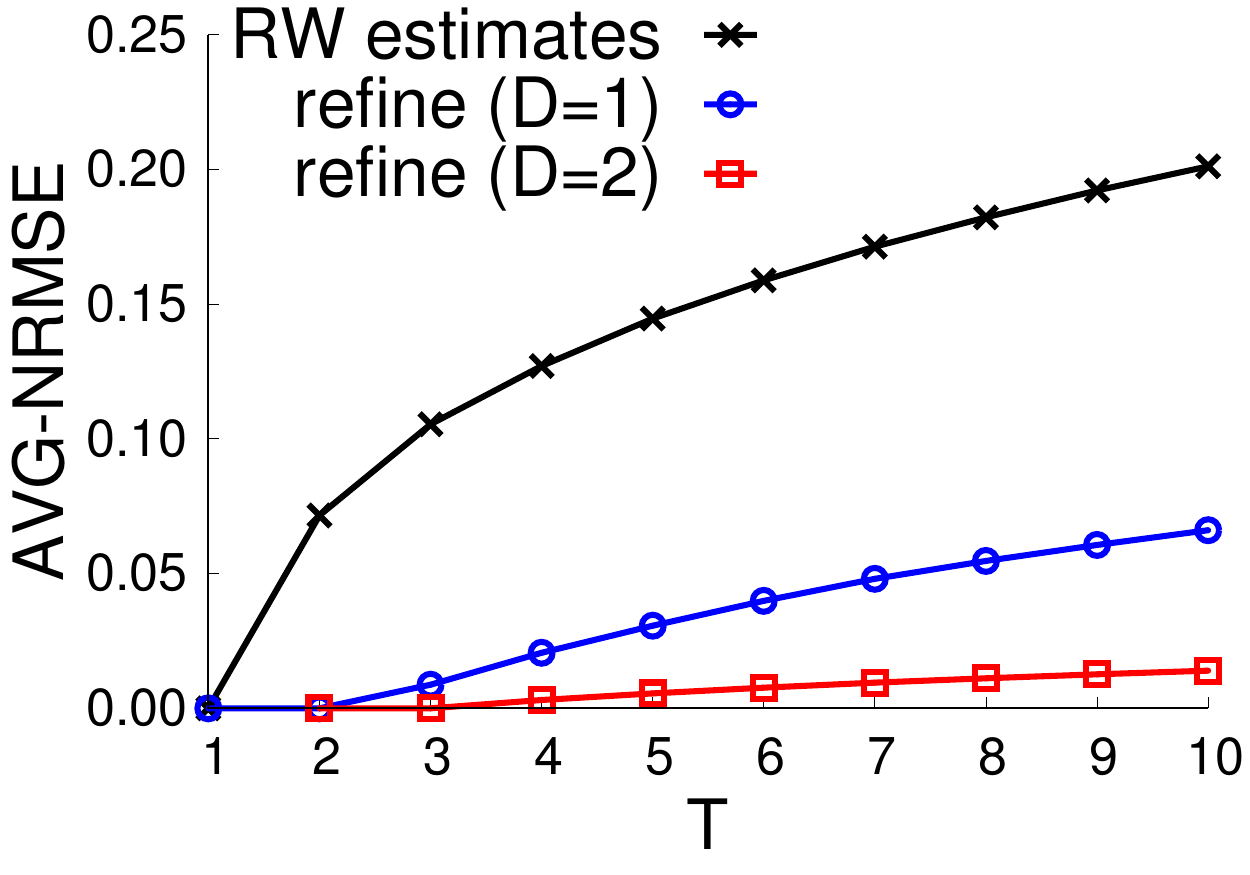}%
      \includegraphics[width=.33\linewidth]{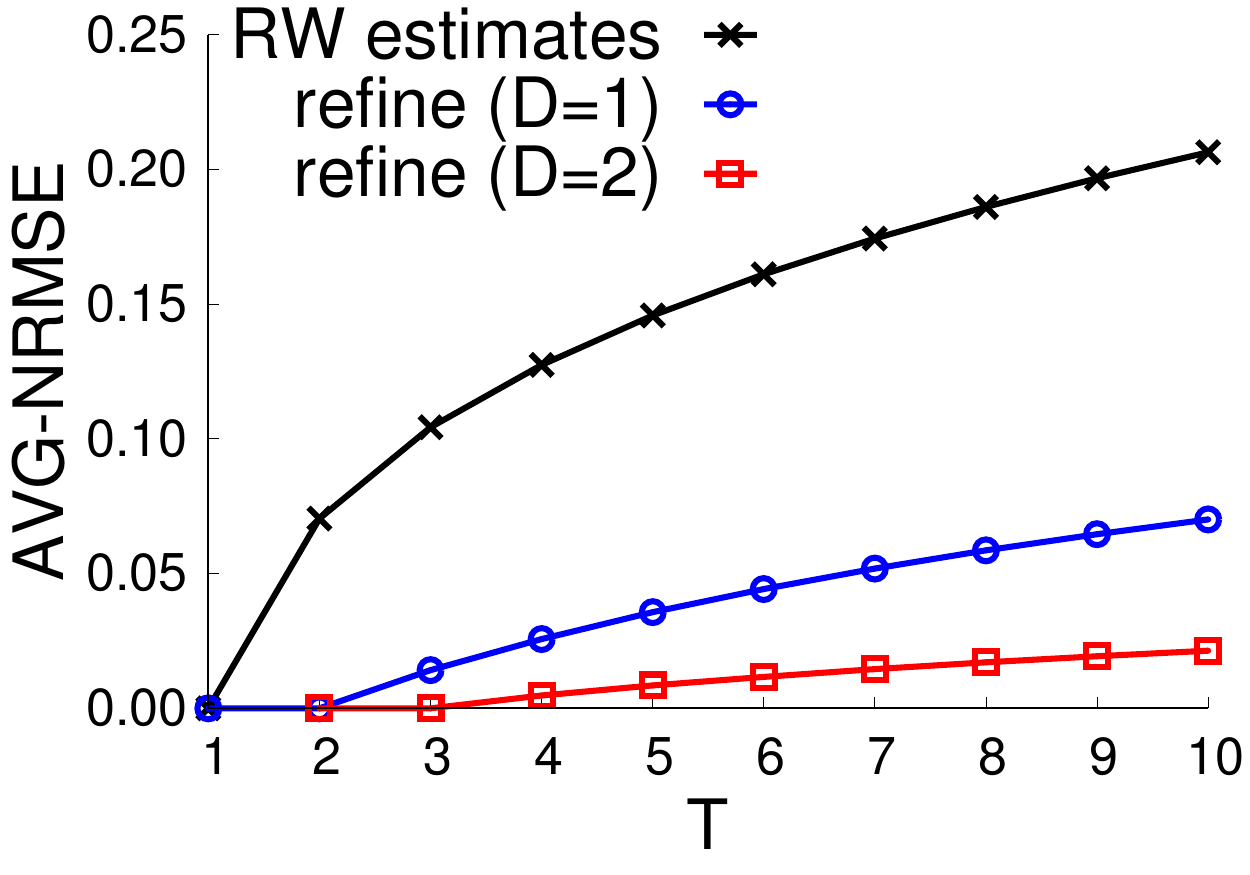}}
    }
  \end{tabular}
  \caption{$h_i^T$ estimation accuracy on three networks.}
    \label{fig:fht_nrmse}
\end{figure}

We also study how random walk length $T$ affects the estimation accuracy.
From Figures~\ref{sf:nrmse_p_T} and~\ref{sf:nrmse_h_T}, we observe that, using the
same amount of RWs, e.g., $R=10$, when $T$ increases, it actually becomes easier
to estimate $p_i^T$ as NRMSE decreases, and more difficult to estimate $h_i^T$ as
NRMSE increases.
For both cases, the estimation-and-refinement approach can obtain smaller NRMSE,
and when refinement depth $D$ increases, the NRMSE further decreases.
In conclusion, these results demonstrate that the estimation-and-refinement
approach can provide more accurate estimates than the RW estimation approach.

\subsection{Evaluating Oracle Call Accuracy and Efficiency}

In the second experiment, we evaluate the oracle call accuracy and efficiency
implemented by different methods.
Because we cannot afford to calculate the groundtruth of marginal gain for each
node, we randomly sample $100$ nodes from each graph, and calculate their marginal
gain groundtruth using DP with $S=\emptyset$.
Here, oracle call accuracy is measured by AVG-NRMSE, and oracle call efficiency is
measured by speedup, which is defined by
\[
  \text{speedup of a method}\triangleq
  \frac{\text{time cost of DP}}{\text{time cost of the method}}.
\]
The results of NRMSE and speedup of different methods on three graphs HepTh,
Enron, and gowalla, are depicted in Figures~\ref{fig:gain_AP}
and~\ref{fig:gain_HT}.

\begin{figure}[htp]
  \small
  \begin{tabular}{CCC}
    \hline
    HepTh & Enron & Gowalla \\
    \hline
    \multicolumn{3}{@{}c@{}}{%
    \subfloat[NRMSE]{%
      \includegraphics[width=.33\linewidth]{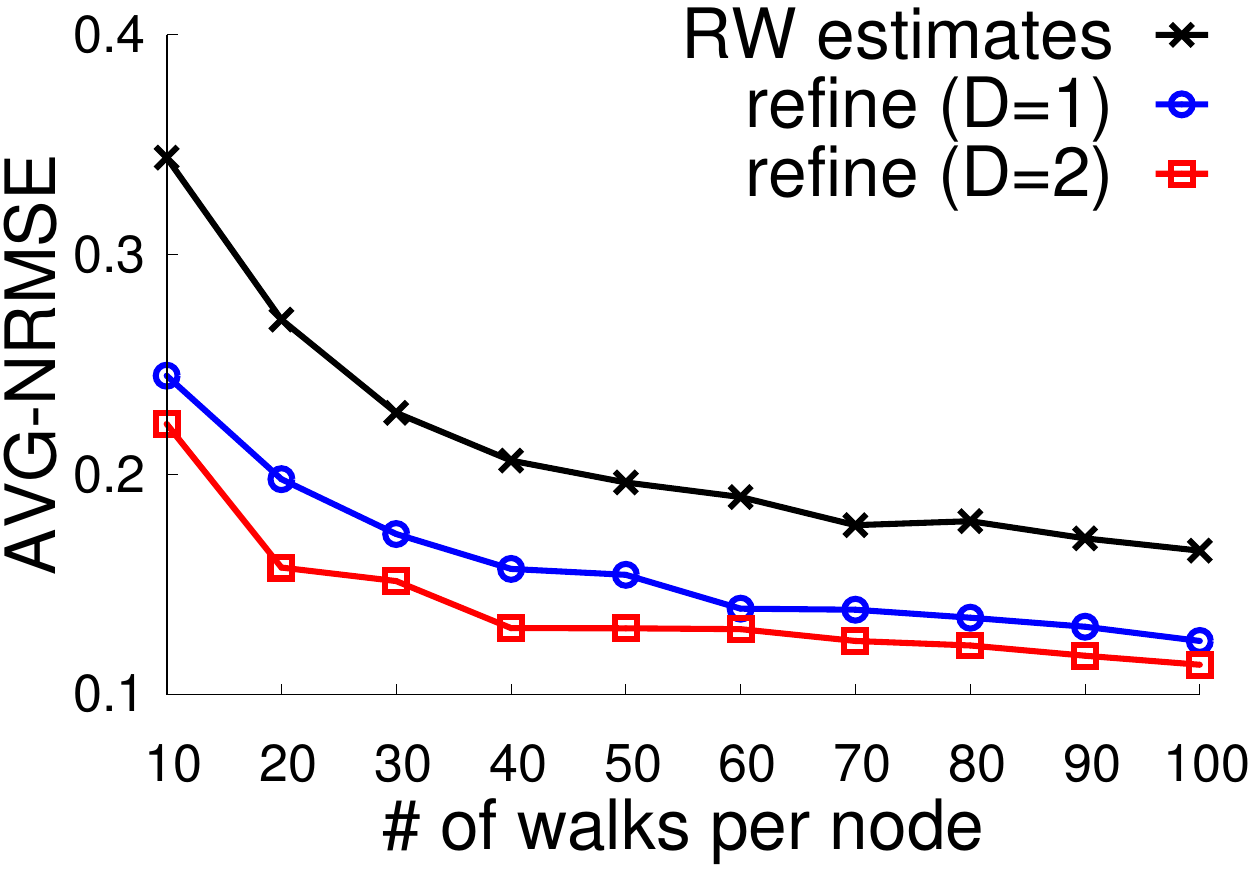}%
      \includegraphics[width=.33\linewidth]{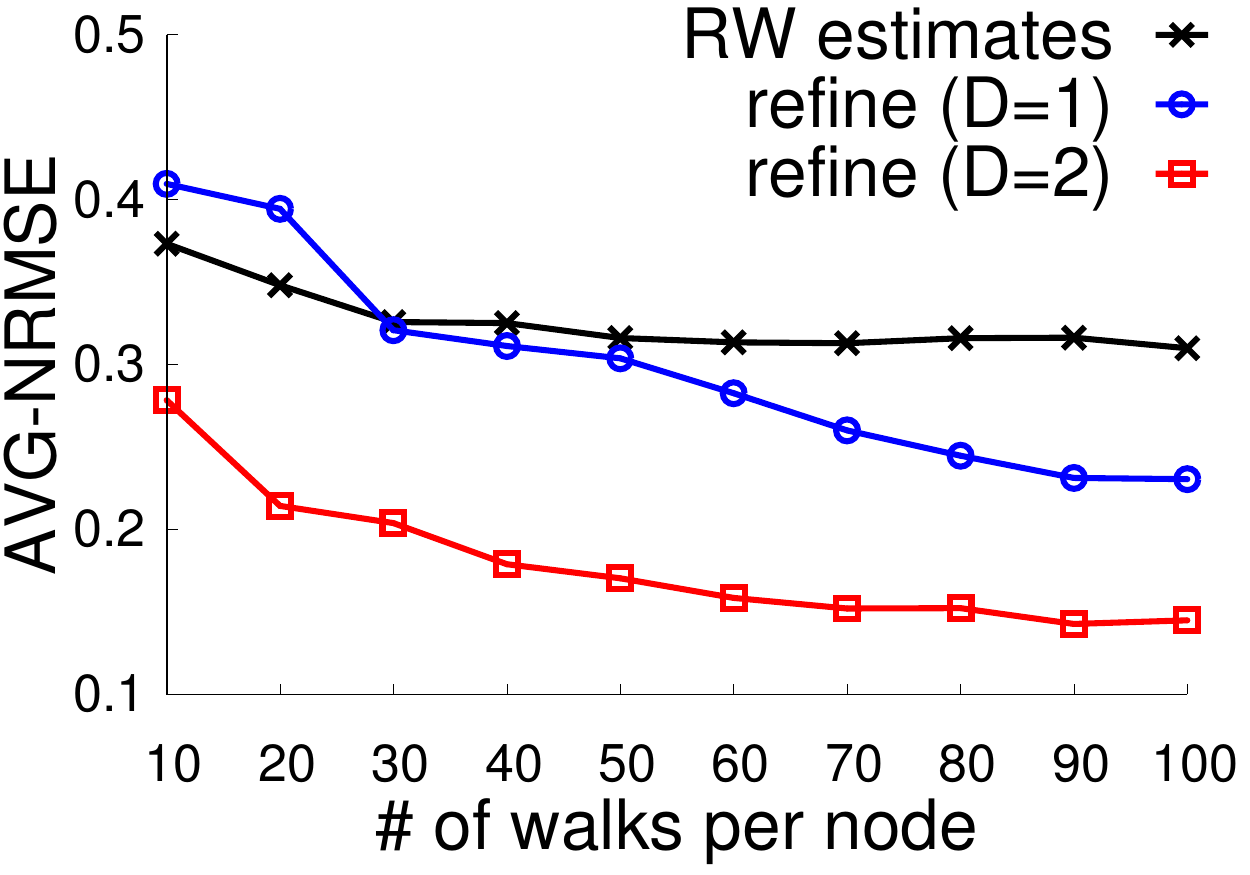}%
      \includegraphics[width=.33\linewidth]{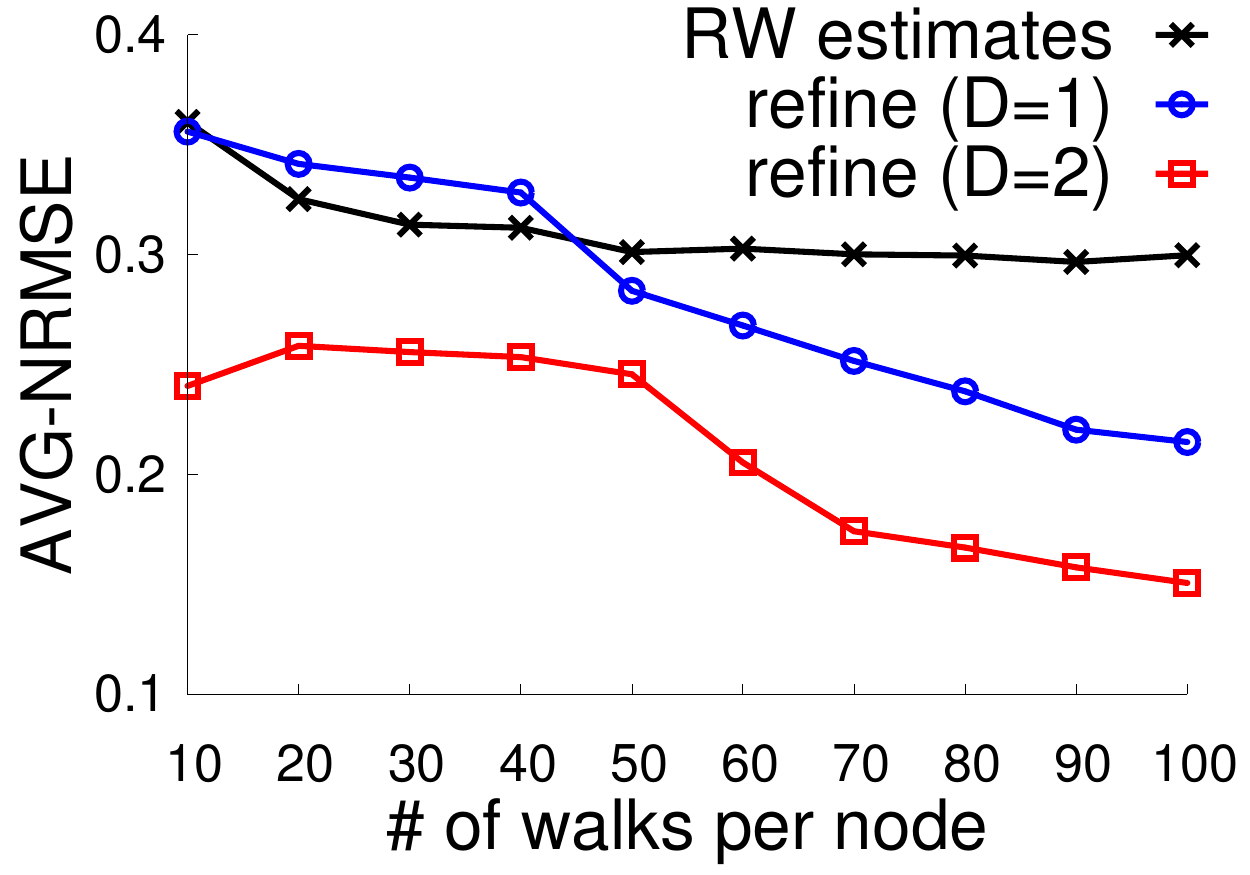}}
    } \\
    \multicolumn{3}{@{}c@{}}{
    \subfloat[Speedup]{%
      \includegraphics[width=.33\linewidth]{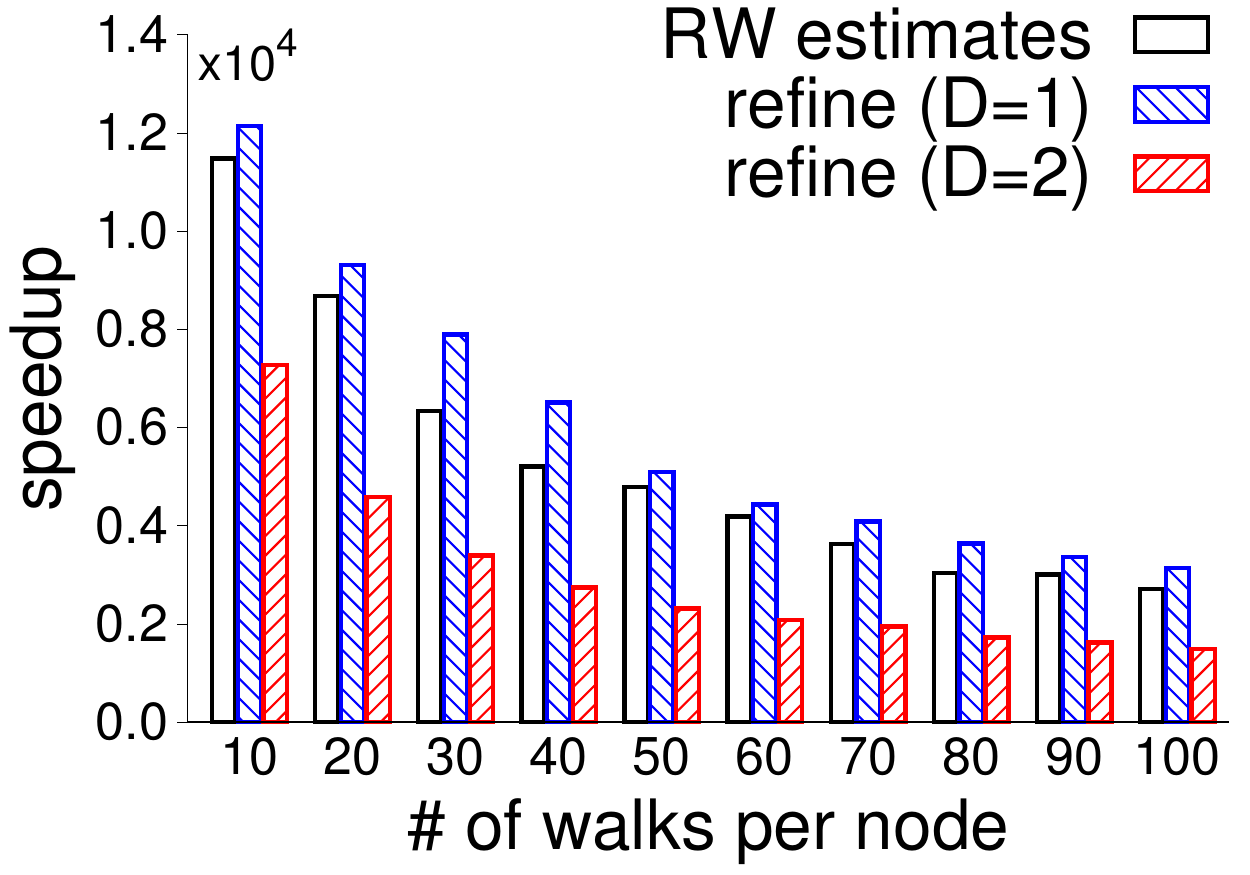}%
      \includegraphics[width=.33\linewidth]{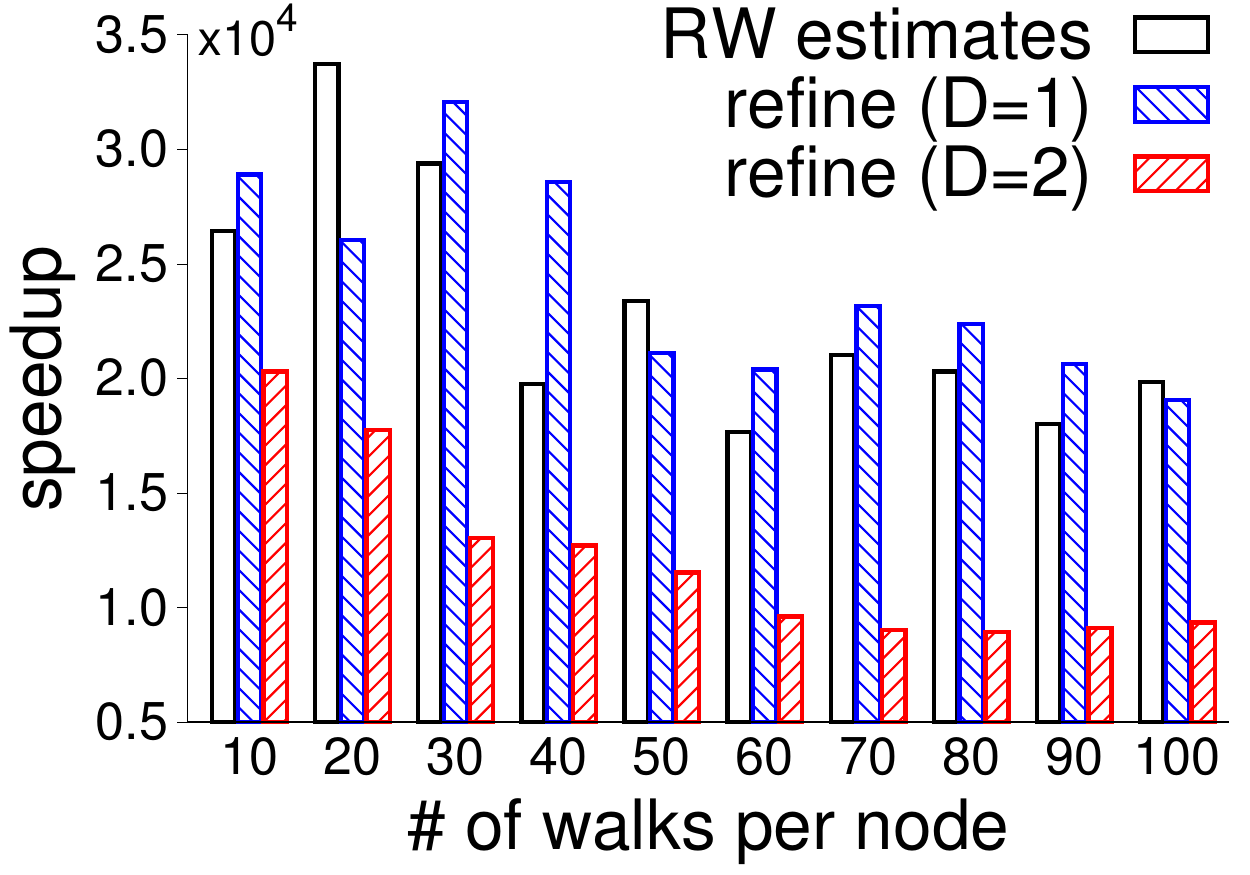}%
      \includegraphics[width=.33\linewidth]{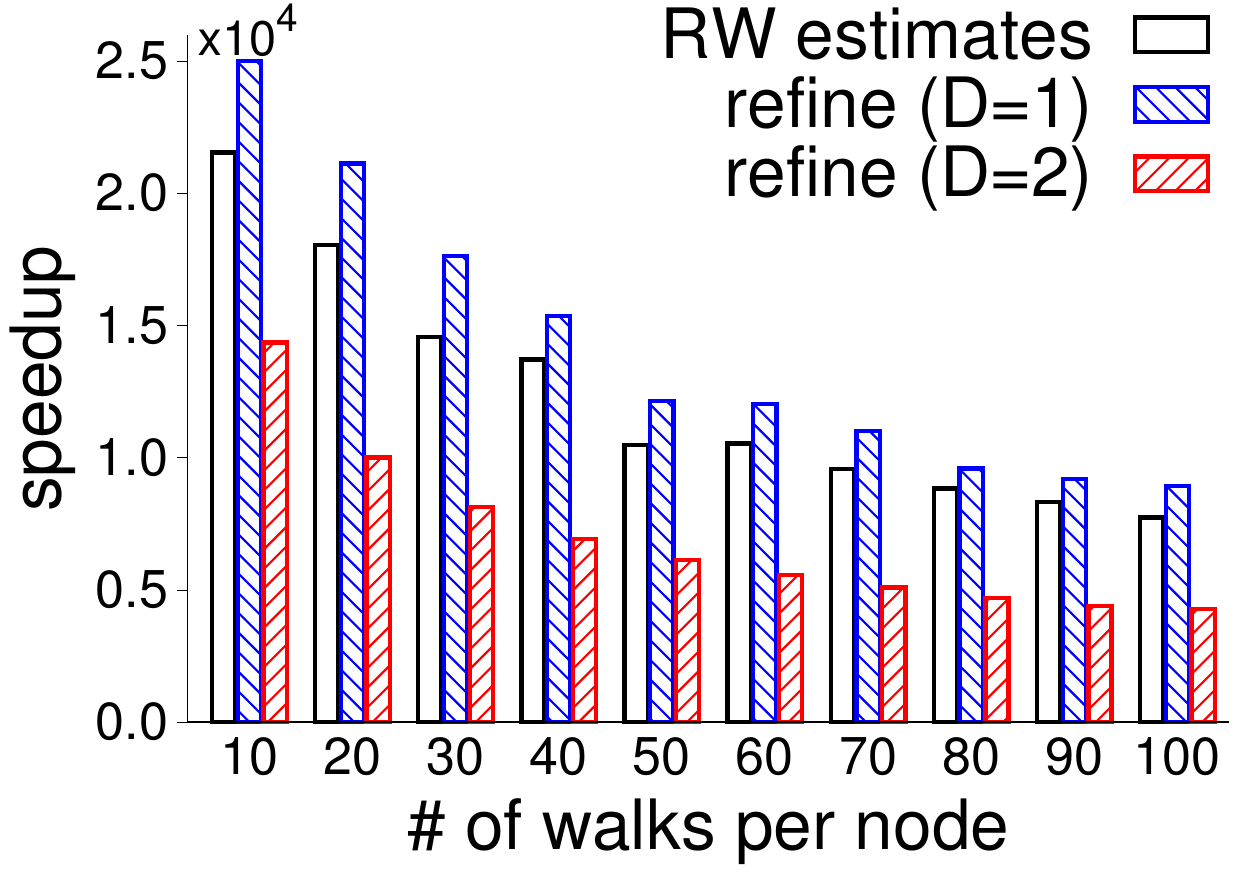}}
    }
  \end{tabular}
  \caption{Absorbing probability oracle call accuracy and efficiency ($T=10$)}
    \label{fig:gain_AP}
\end{figure}

\begin{figure}[htp]
  \small
  \begin{tabular}{CCC}
    \hline
    HepTh & Enron & Gowalla \\
    \hline
    \multicolumn{3}{@{}c@{}}{%
    \subfloat[NRMSE]{%
      \includegraphics[width=.33\linewidth]{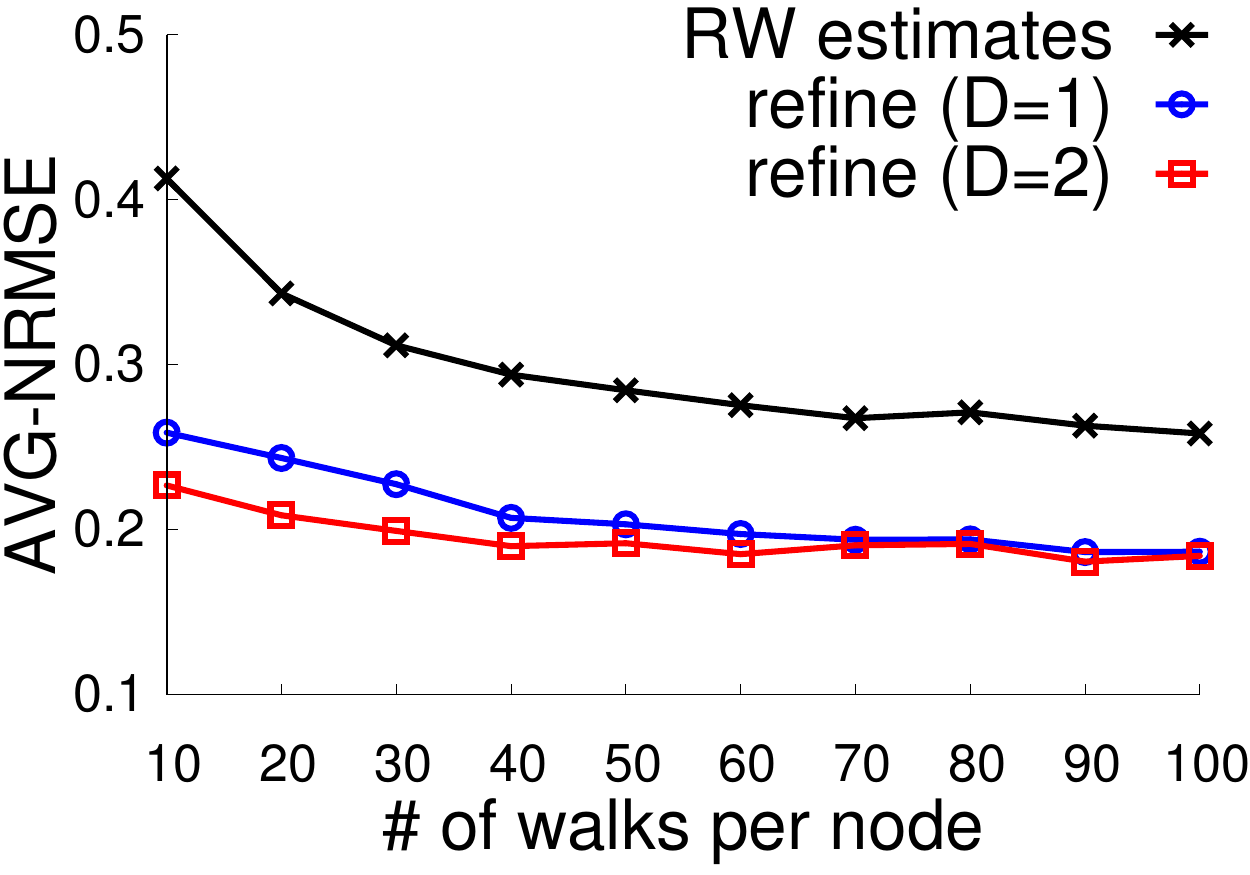}%
      \includegraphics[width=.33\linewidth]{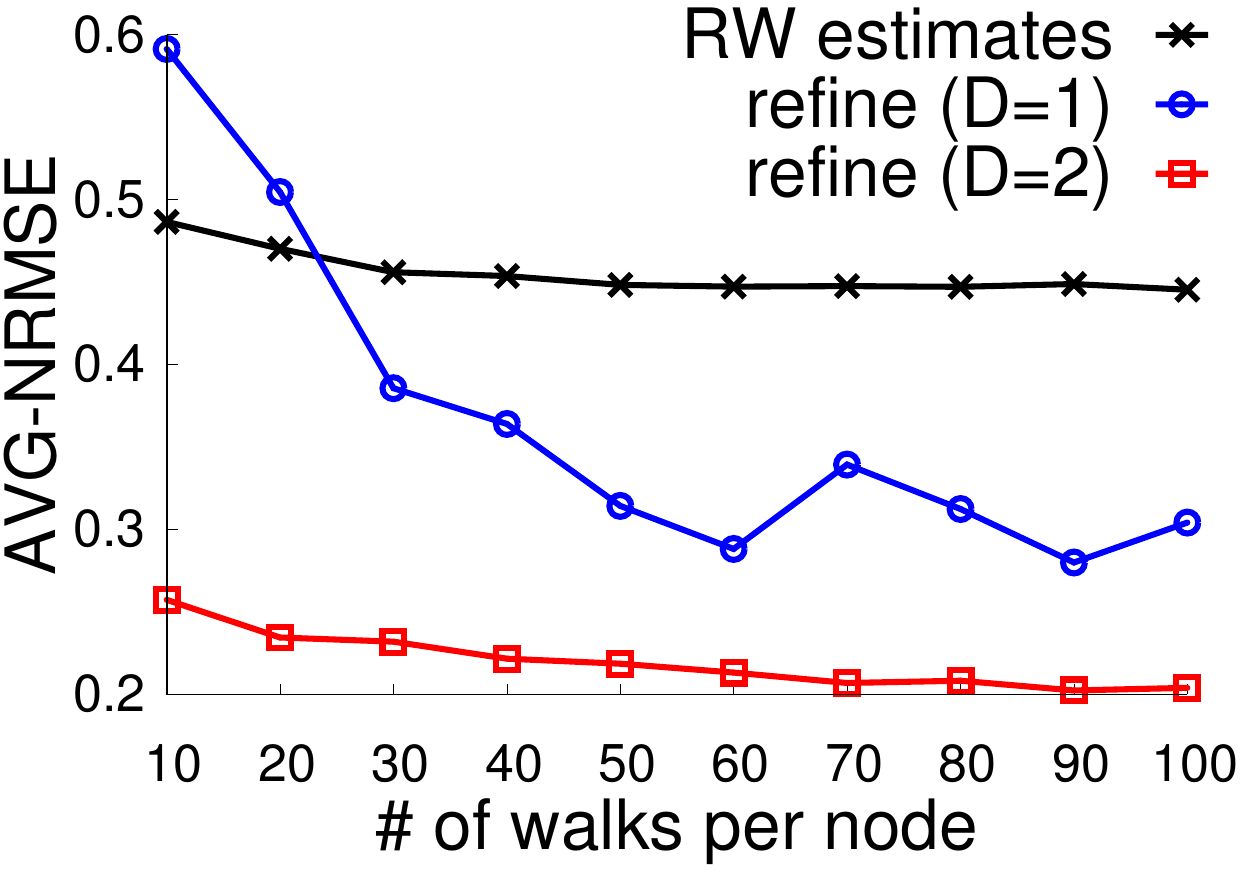}%
      \includegraphics[width=.33\linewidth]{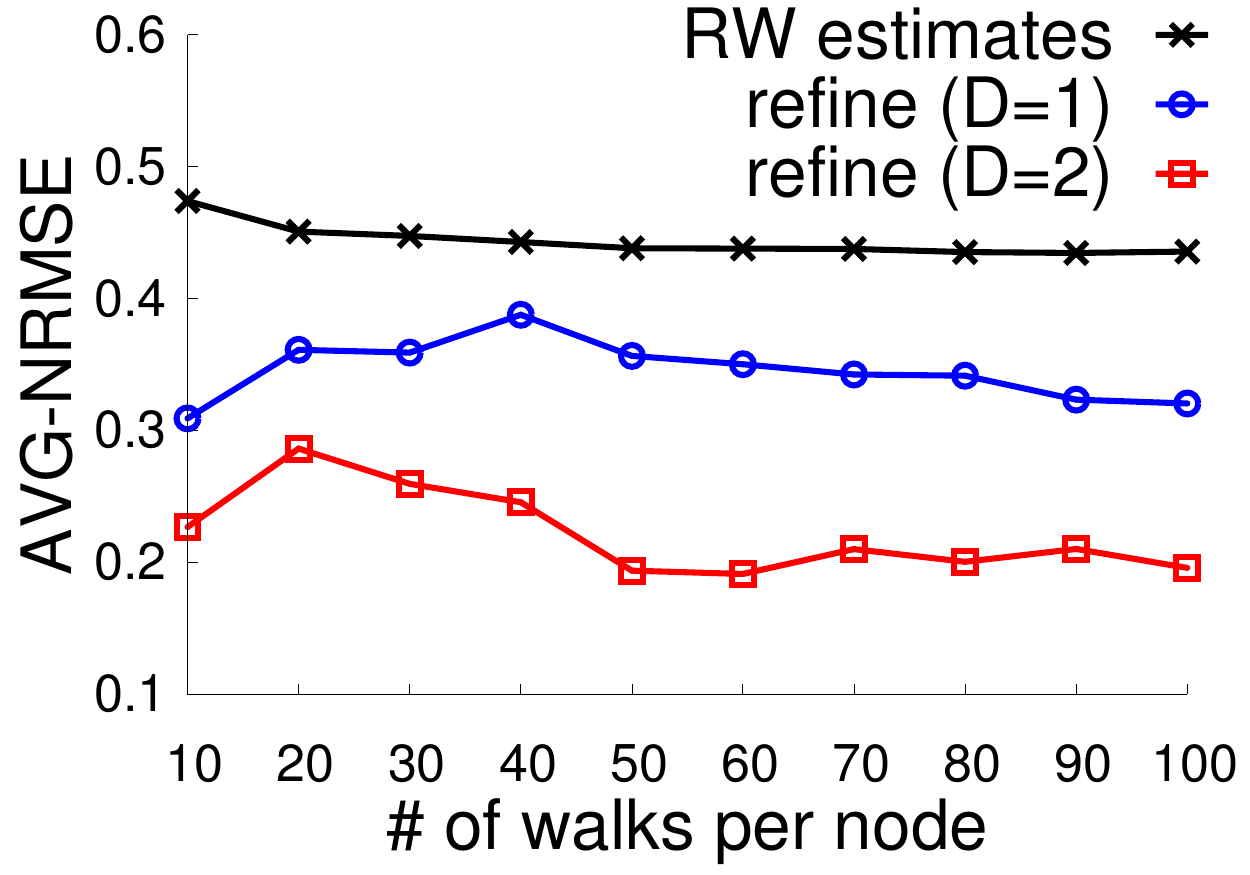}}
    } \\
    \multicolumn{3}{@{}c@{}}{
    \subfloat[Speedup]{%
      \includegraphics[width=.33\linewidth]{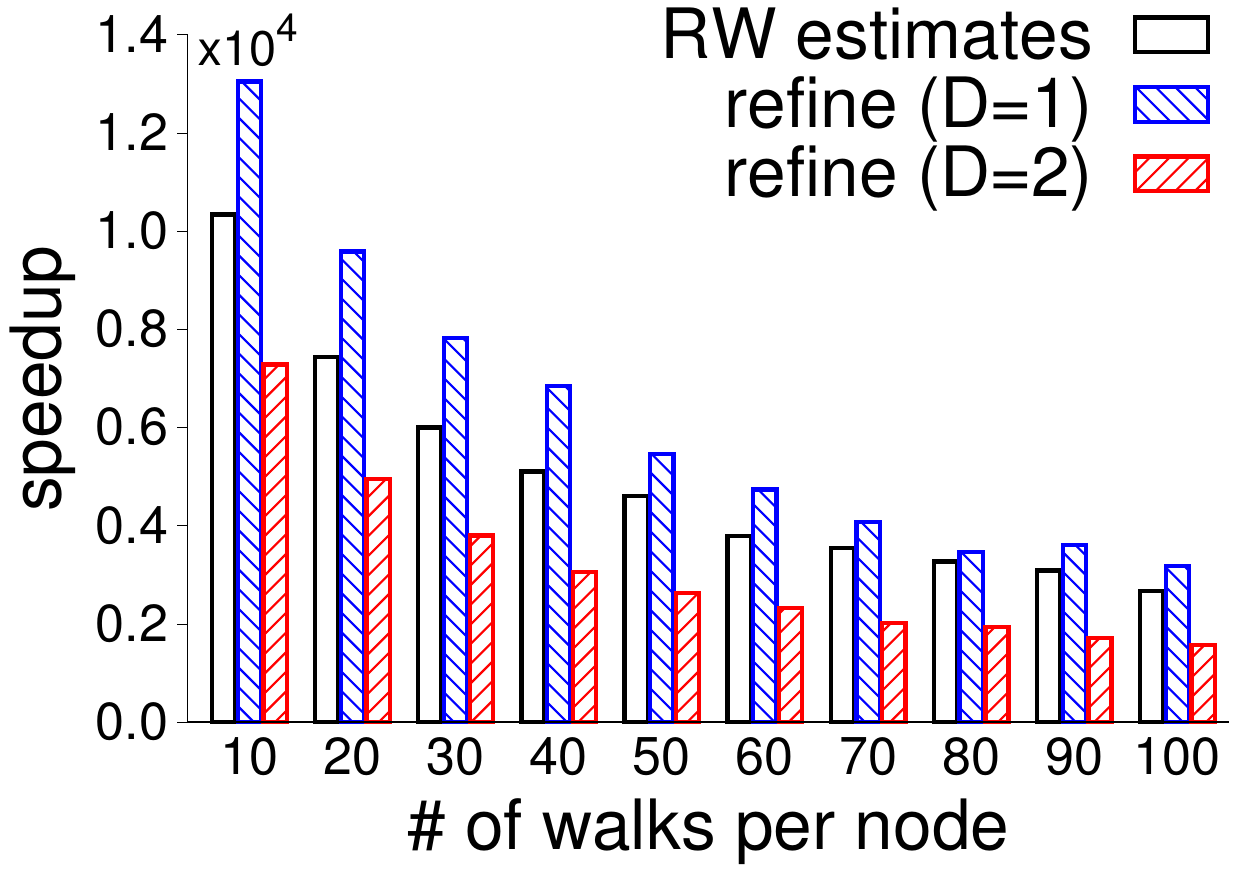}%
      \includegraphics[width=.33\linewidth]{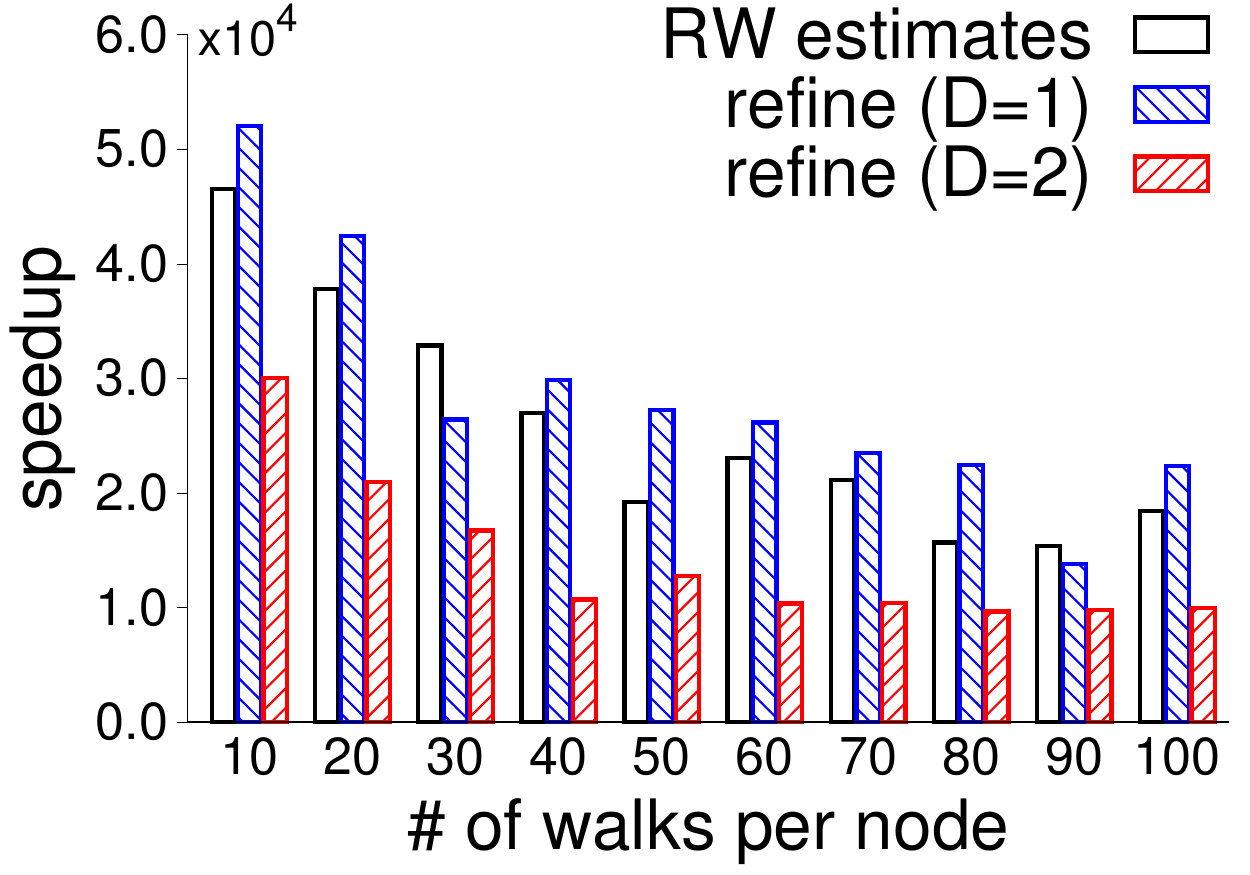}%
      \includegraphics[width=.33\linewidth]{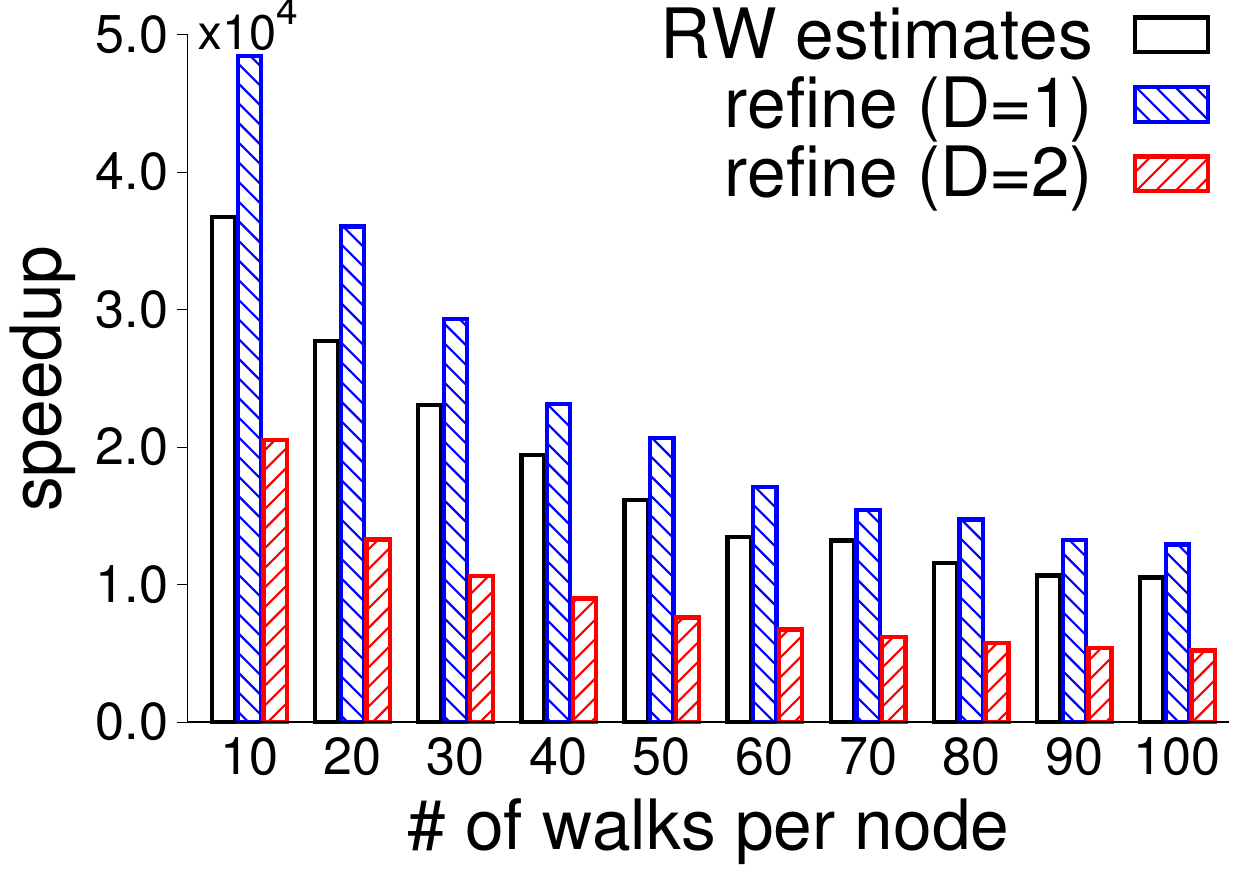}}
    }
  \end{tabular}
  \caption{Hitting time oracle call accuracy and efficiency ($T=10$)}
    \label{fig:gain_HT}
\end{figure}

From the NRMSE curves, we observe similar results as in the previous experiment:
In general, (1) when the number of walks per node increases, every method obtains
more accurate estimates; (2) the estimation-and-refinement approach can obtain
more accurate estimates than the RW estimation approach, and the estimation
accuracy improves when refinement depth $D$ increases.
Note that we also observe some exceptions, e.g., on some graphs, the
estimation-and-refinement method with $D=1$ exhibits larger NRMSE, however, for
$D\geq 2$ or with larger number of walks, the estimation-and-refinement approach
is significantly more accurate than the RW approach.

From the speedup curves, we can observe that both the RW estimation approach and
the estimation-and-refinement approach are significantly more efficient than DP.
On average, the two estimation approaches are at least thousands of times faster
than DP.
We also observe something interesting: When we increase the refinement depth, the
oracle call efficiency decreases in general, as expected; however, we observe that
the estimation-and-refinement approach with $D=1$ is actually more efficient than
the RW estimation approach.
This is because that when we use the estimation-and-refinement approach, we
simulate shorter walks, and this could slightly improve the oracle call
efficiency.
As we further increase refinement depth to $D=2$, because we need to explore a
large part of a node's neighborhood, the estimation-and-refinement approach
becomes slower than the RW estimation method.

\subsection{Comparing Greedy Algorithm with Baseline Methods}
\label{ss:greedy}

Equipped with the verified oracle call implementations, we are now ready to solve
the node discoverability optimization problem using the greedy algorithm.
In the third experiment, we run the greedy algorithm on each graphs, and choose a
subset of connection sources $S$ to optimize the target node's discoverability,
i.e., maximizing D-AP, and minimizing D-HT.
For each graph, we simulate $100$ walks from each node, and we use the
estimation-and-refinement approach with $D=2$ to implement the oracle call.
We set edge weight $w_{sn}=10$ if node $s$ is chosen to connect to target node $n$.
We also set $c_s\equiv 1$.
To better understand the performance of the greedy algorithm, we compare the
results with two baseline methods:
\begin{itemize}
\item \textbf{Random}: randomly pick nodes from the graph as connection sources;
\item \textbf{Degree}: always choose the top-$K$ largest degree nodes from the
  graph as connection sources.
\end{itemize}
The random approach is expected to have the poorest performance, and the
performance improvement of a method against the random approach reflects the
advantage of the method.
The performance of the degree approach is not clear.
One may think that nodes with large degrees represent high discoverability nodes
of a network, and connecting to high discoverability nodes could improve the
discoverability of target node.
We will study its performance through experiments.
The results are depicted in Figures~\ref{fig:greedy_ap} and~\ref{fig:greedy_ht}.

\begin{figure}[htp]
  \centering
  \subfloat[HepTh]{\includegraphics[width=.333\linewidth]{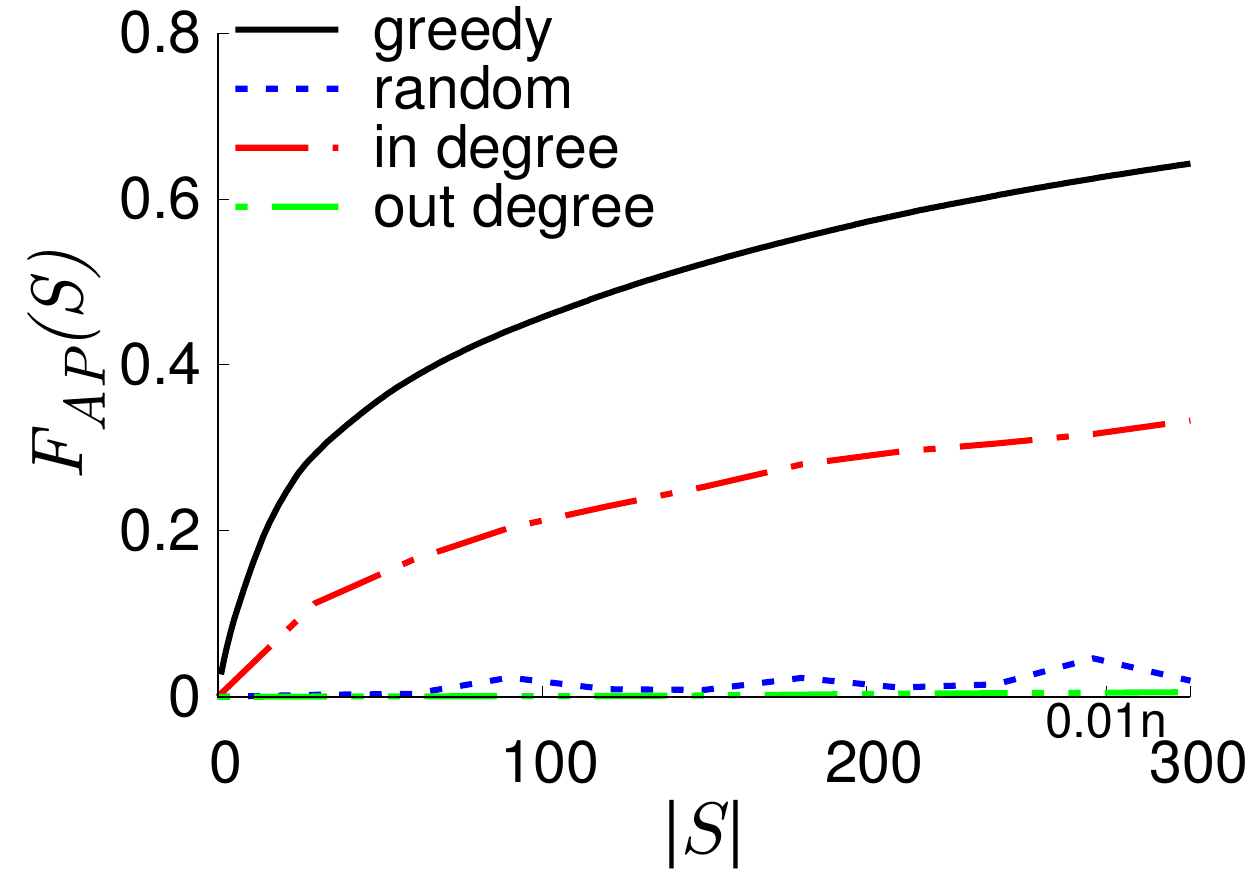}}
  \subfloat[Gowalla]{\includegraphics[width=.333\linewidth]{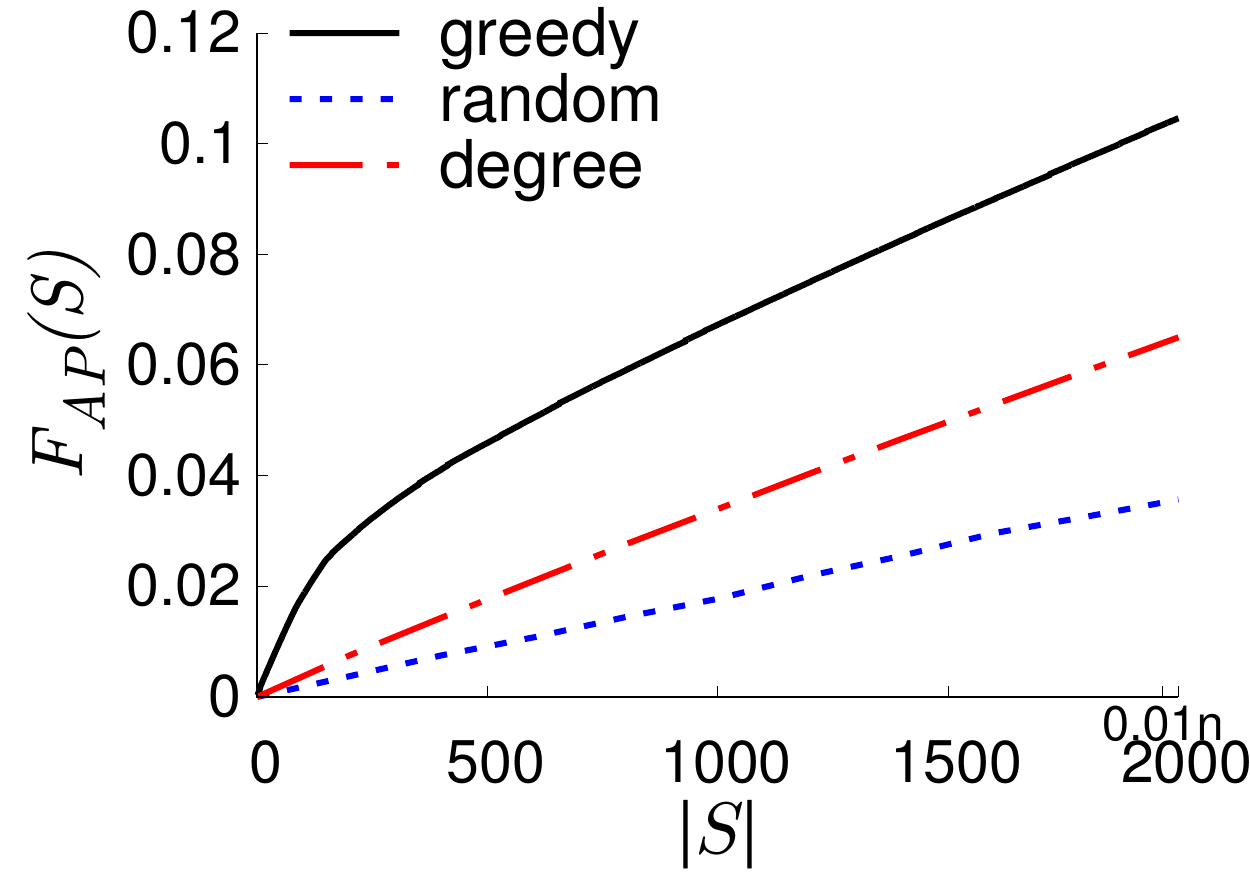}}
  \subfloat[DBLP]{\includegraphics[width=.333\linewidth]{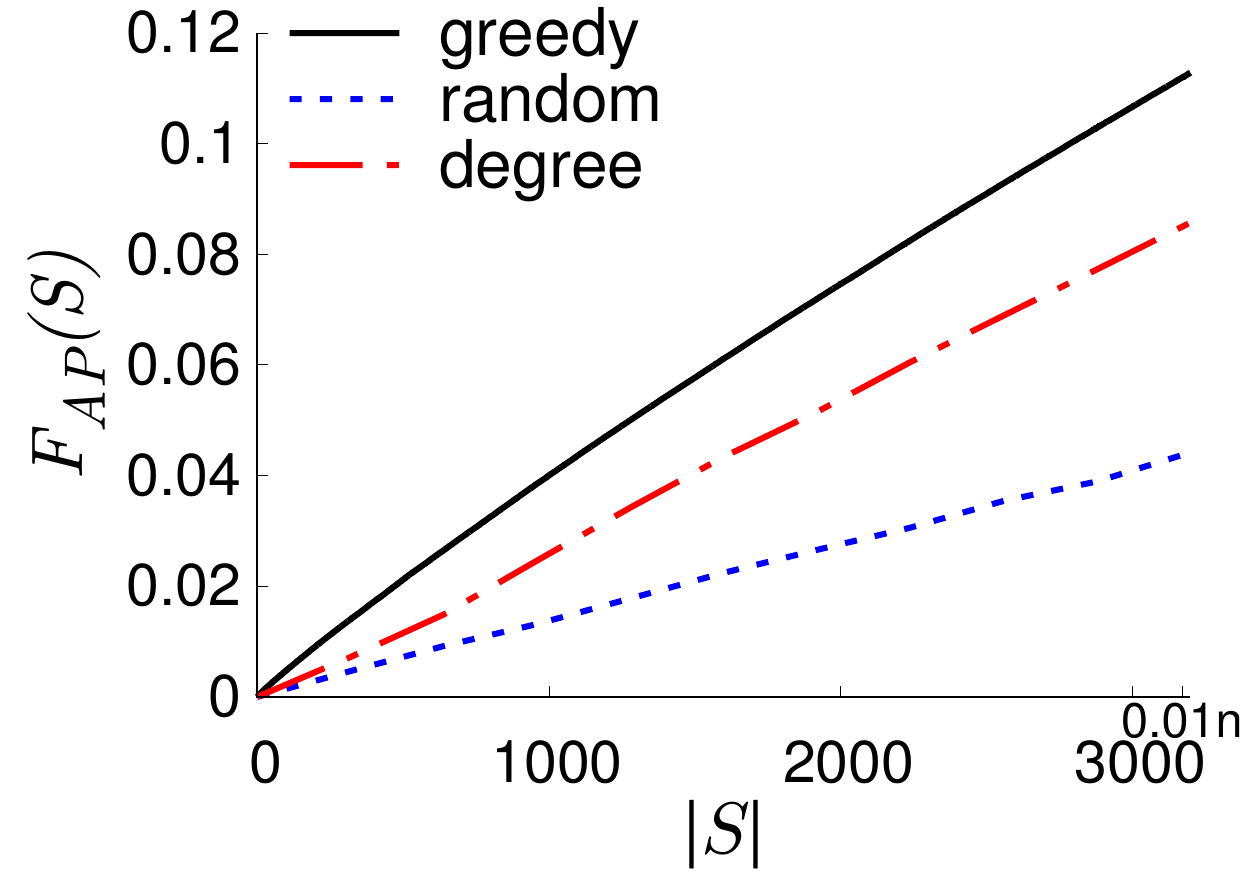}} \\
  \subfloat[Amazon]{\includegraphics[width=.333\linewidth]{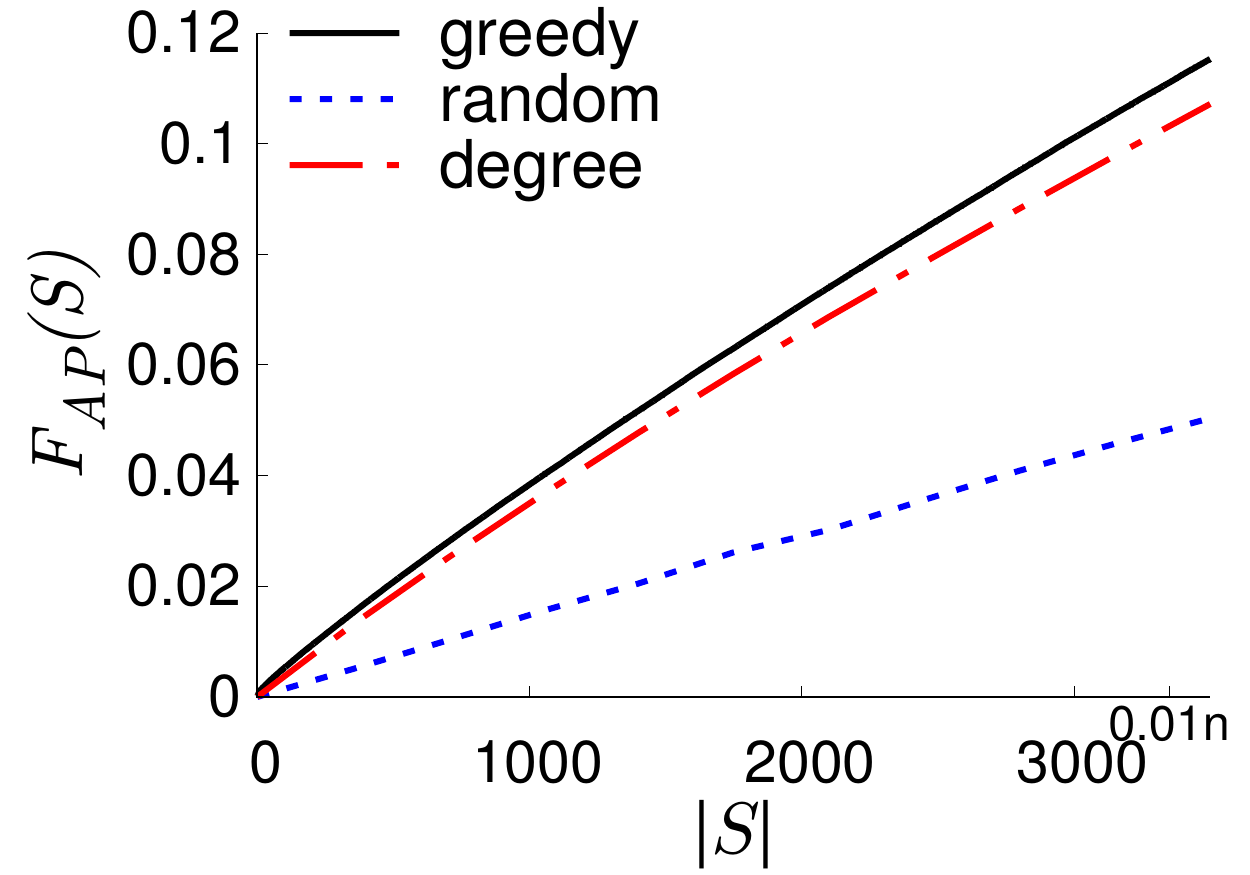}}
  \subfloat[YouTube]{\includegraphics[width=.333\linewidth]{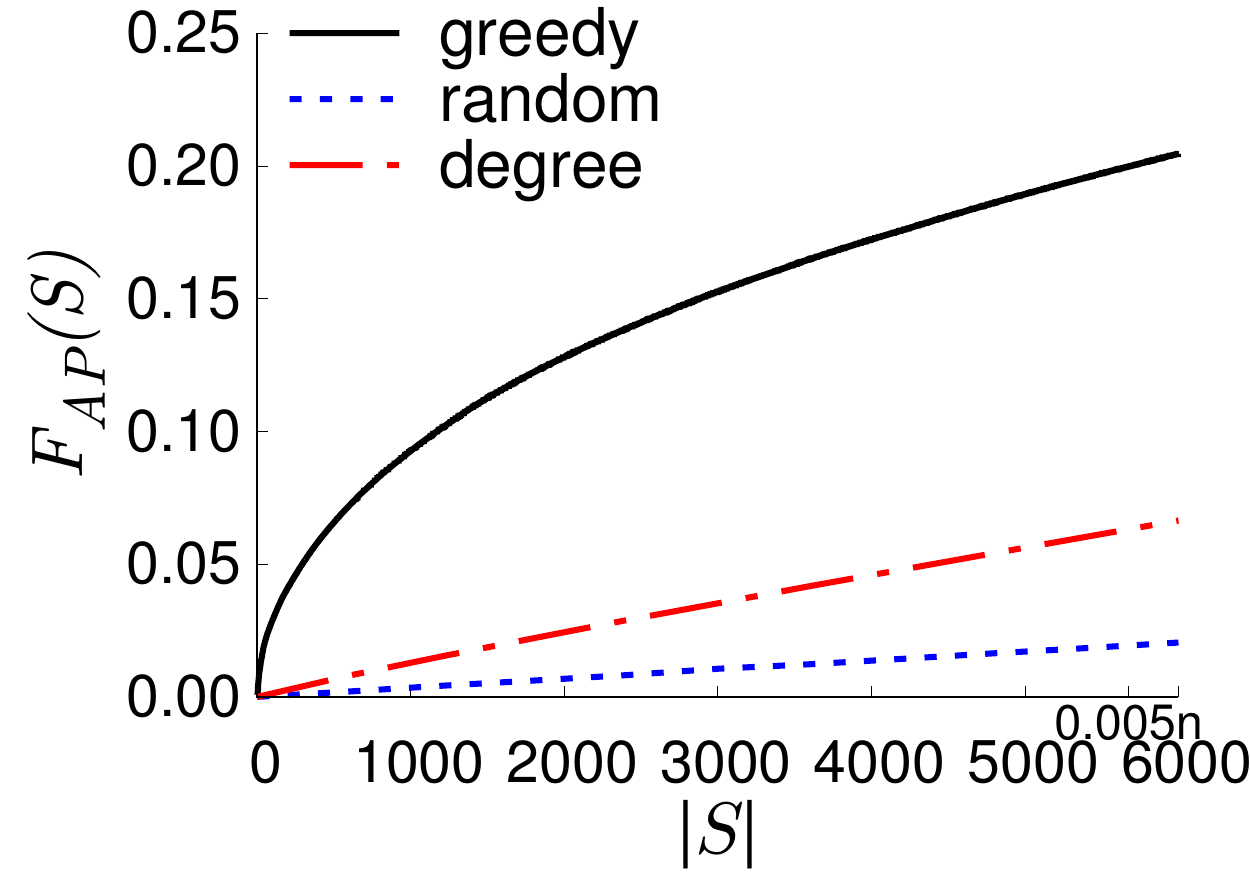}}
  \subfloat[Patents]{\includegraphics[width=.333\linewidth]{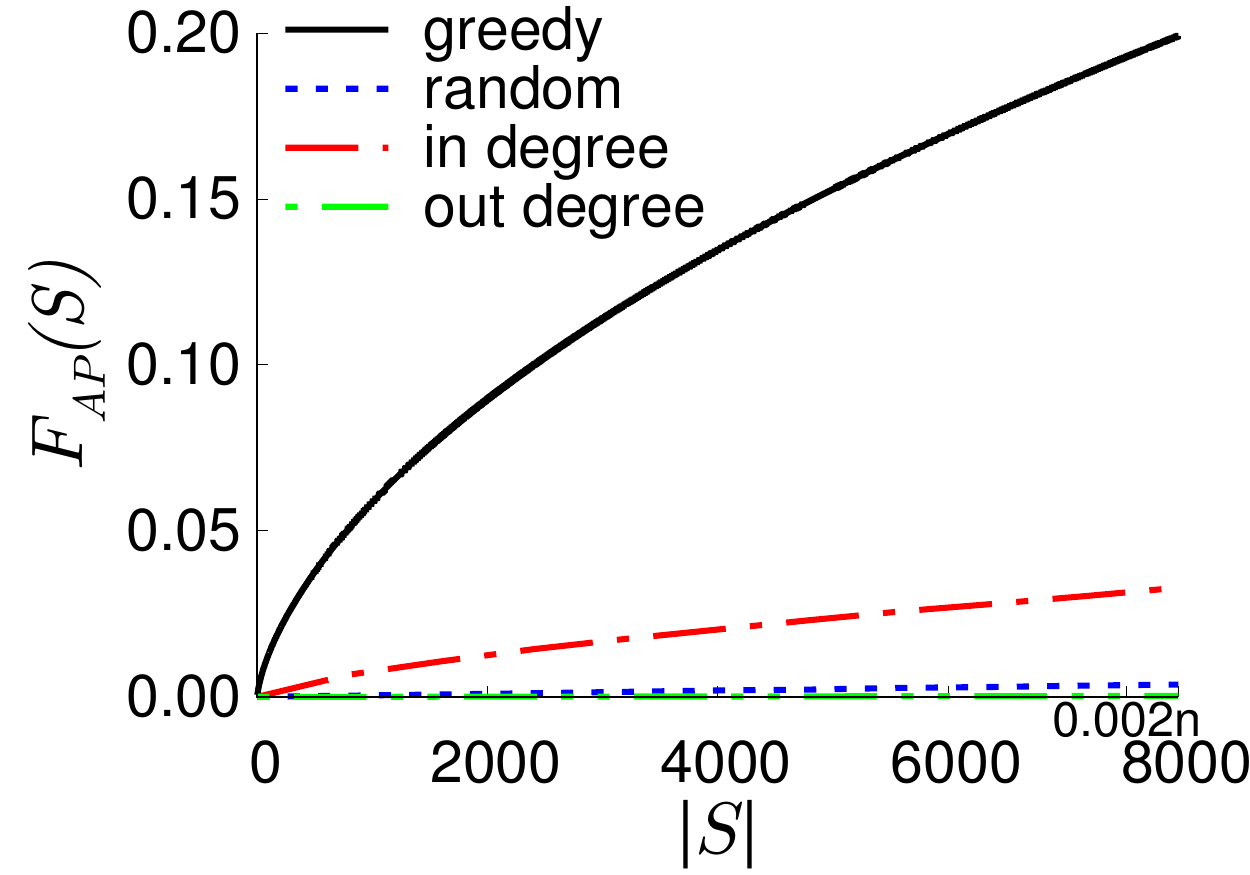}}
  \caption{D-AP maximization ($T=10$)}
  \label{fig:greedy_ap}
\end{figure}

\begin{figure}[htp]
  \centering
  \subfloat[HepTh]{\includegraphics[width=.333\linewidth]{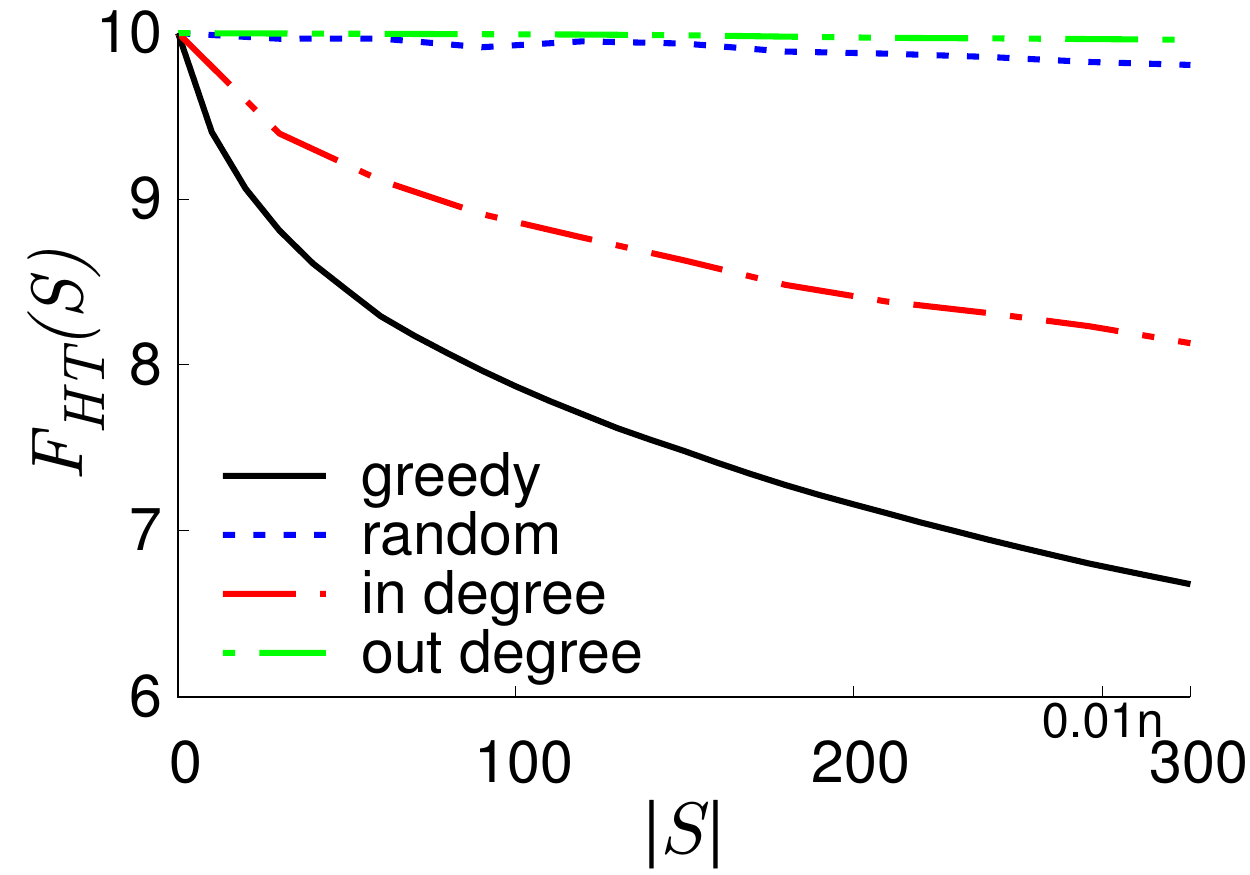}}
  \subfloat[Gowalla]{\includegraphics[width=.333\linewidth]{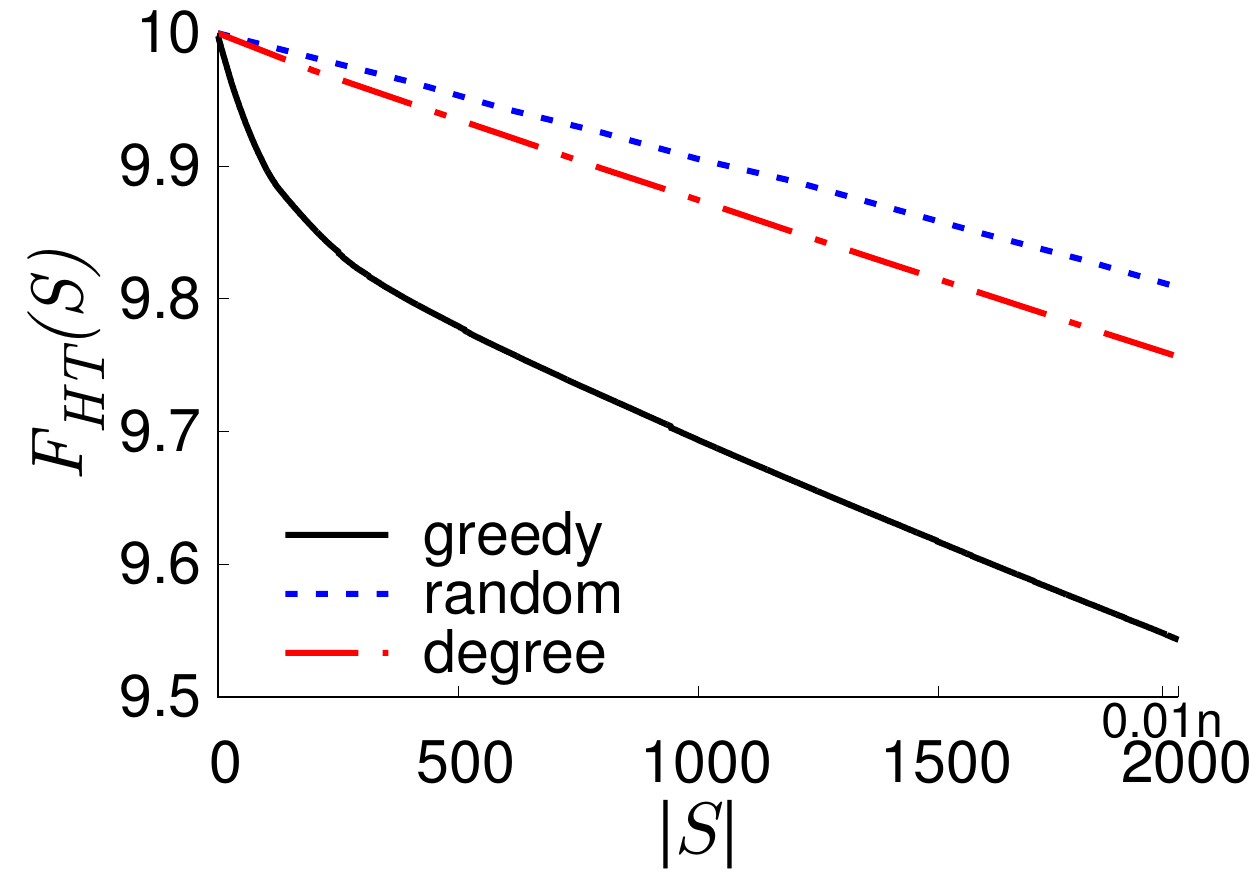}}
  \subfloat[DBLP]{\includegraphics[width=.333\linewidth]{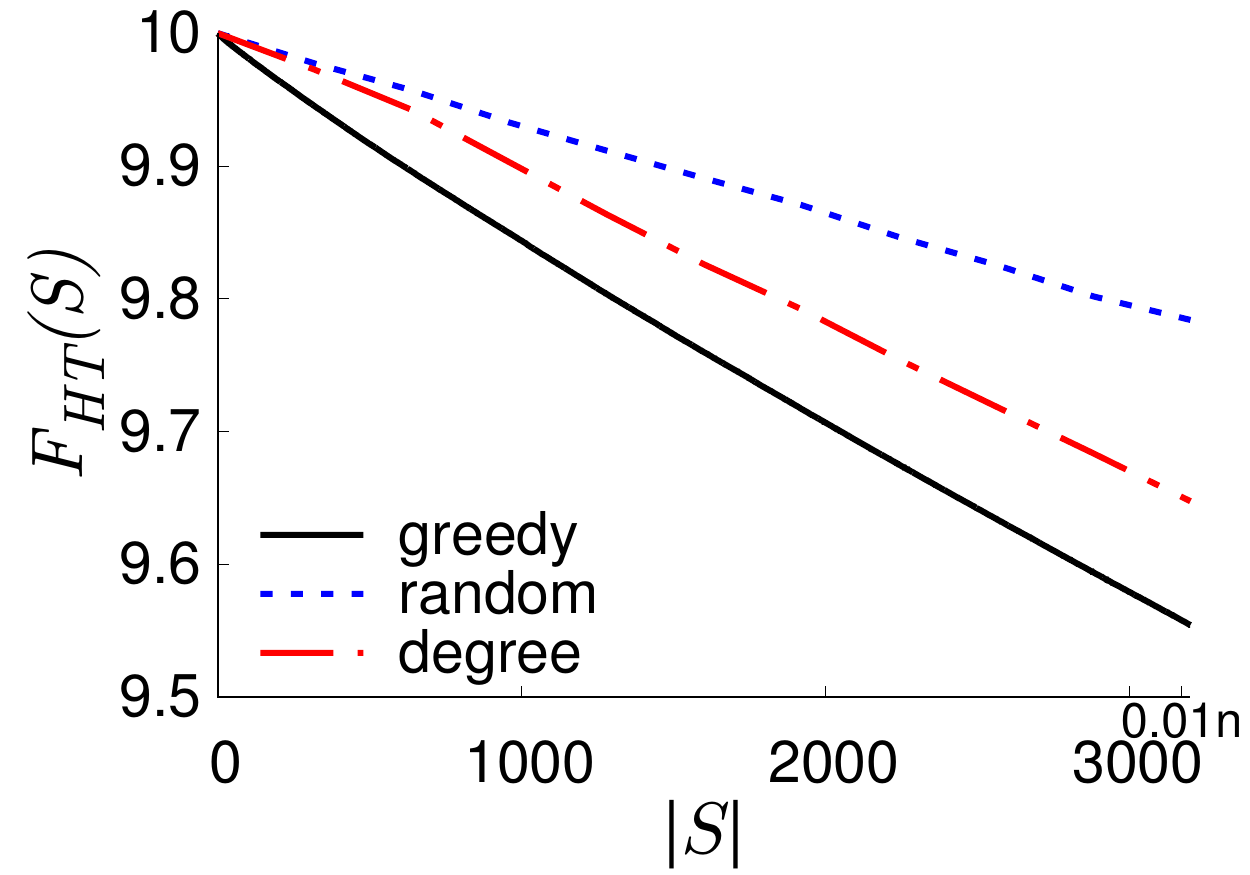}} \\
  \subfloat[Amazon]{\includegraphics[width=.333\linewidth]{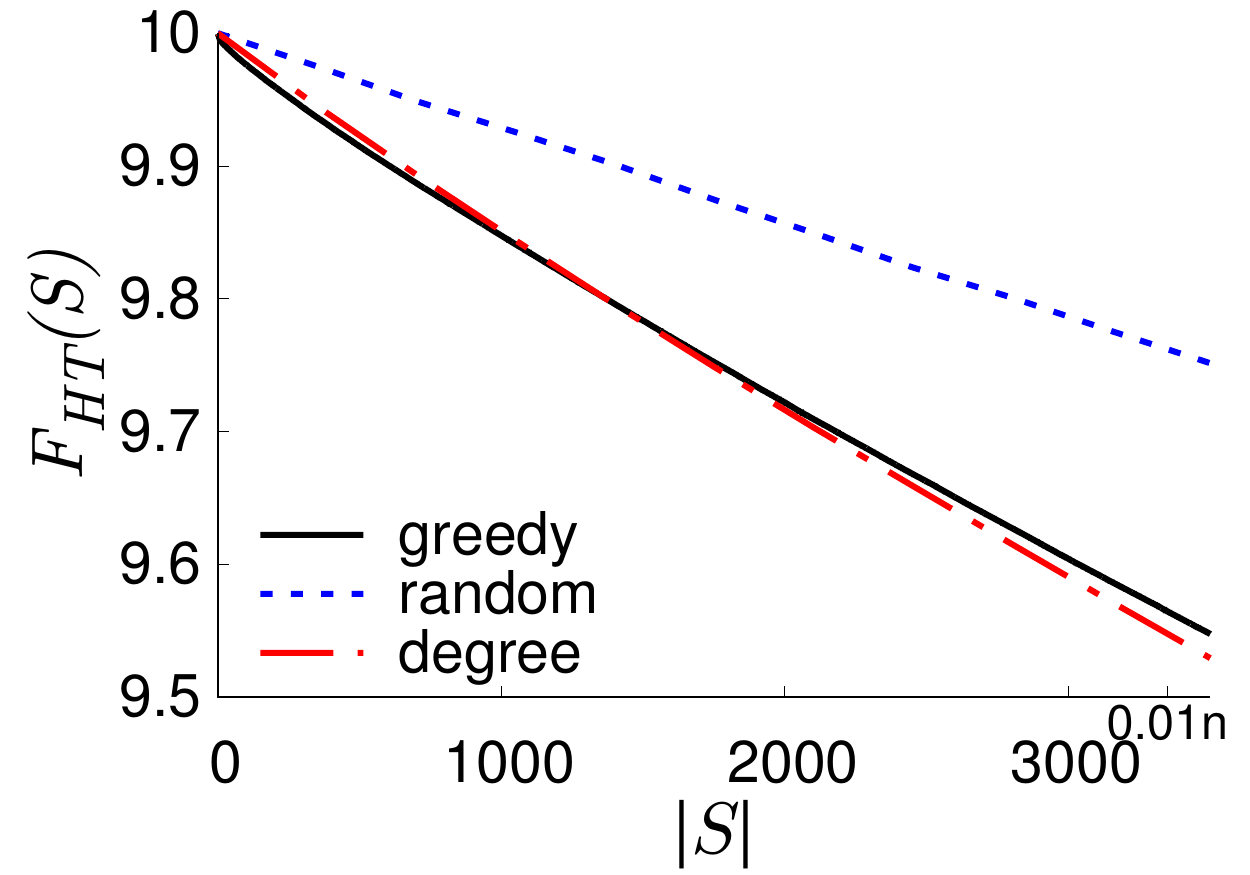}}
  \subfloat[YouTube]{\includegraphics[width=.333\linewidth]{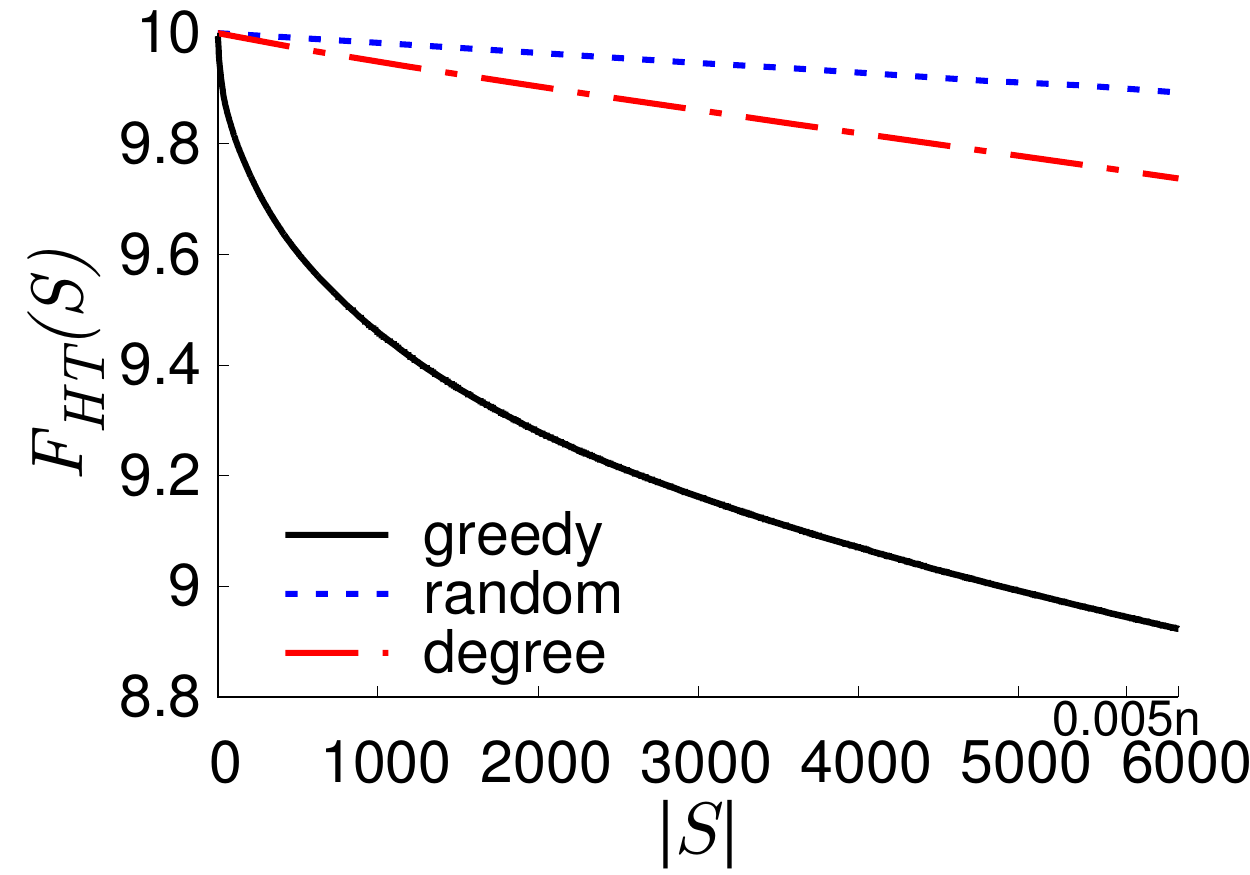}}
  \subfloat[Patents]{\includegraphics[width=.333\linewidth]{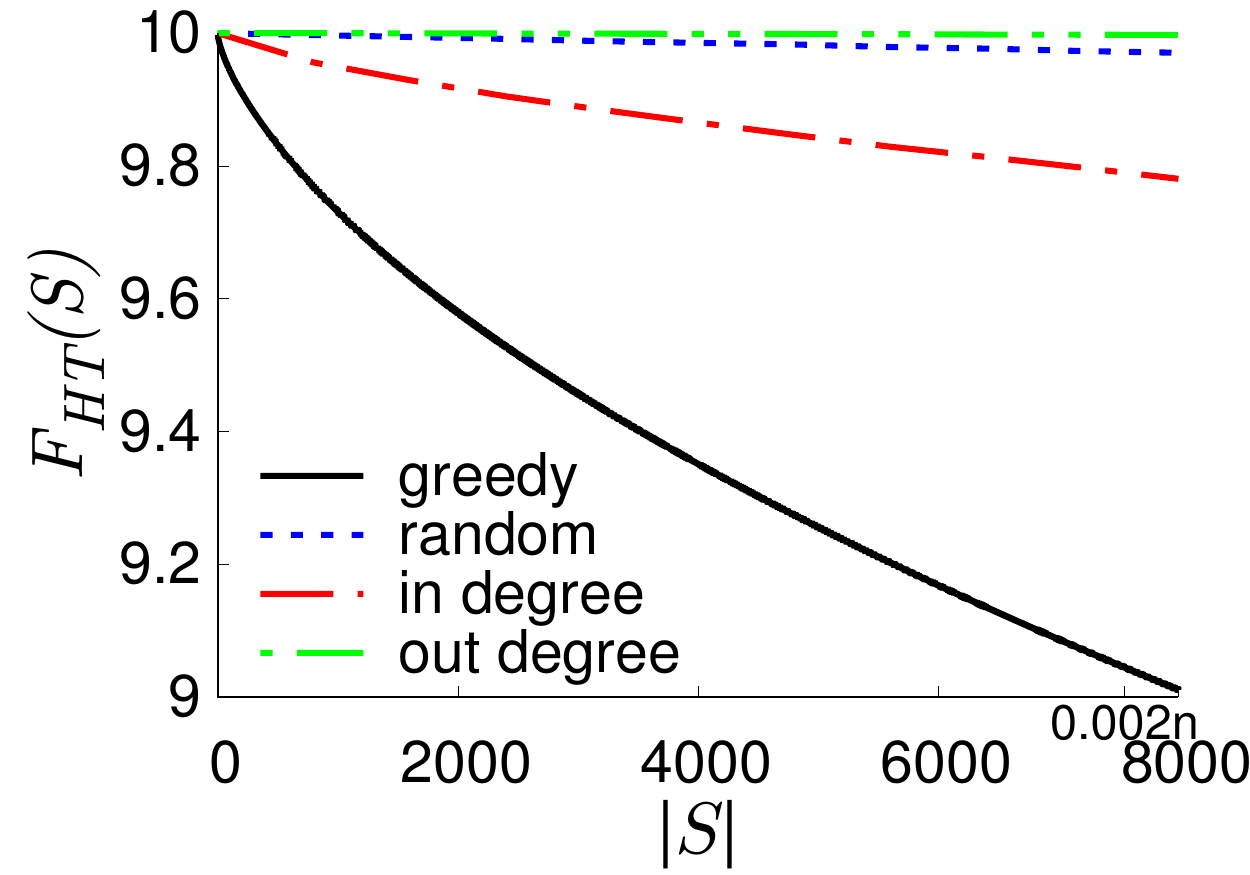}}
  \caption{D-HT minimization ($T=10$)}
  \label{fig:greedy_ht}
\end{figure}

We can clearly see that the greedy algorithm indeed performs much better than the
two baseline methods on all the graphs: the greedy algorithm could choose
connection sources with larger D-AP, and smaller D-HT.
We also note that on the Amazon product network, the greedy algorithm and degree
approach have competitive performance when minimizing D-HT.
In general, the degree approach is better than random approach.
However, on directed graphs HepTh and Patents, the random approach is actually
slightly better than choosing connections by top largest out-degrees.
These results hence show that choosing connection sources using the greedy
approach is more stable than the other baseline methods.

\section{Applications}
\label{sec:applications}

In this section, we study the node discoverability optimization in some real-world
applications and show some interesting observations of the patterns of nodes
maximizing D-AP and minimizing D-HT.

\subsection{Measurements and Observations on Real Networks}

People may argue that nodes maximizing D-AP may also minimize D-HT simultaneously.
Indeed, if this hypothesis is true, then it is not necessary to differentiate the
D-AP maximization problem and D-HT minimization problem, and studying any one of
them is enough.
To investigate this problem, we calculate the overlap of the two sets of nodes
maximizing D-AP and minimization D-HT respectively, under the same cardinality
constraint (using the same settings as in~\autoref{ss:greedy}).
The results are depicted in Figure~\ref{fig:overlap}.

\begin{figure}[htp]
  \centering
  \subfloat[HepTh]{\includegraphics[width=.33\linewidth]{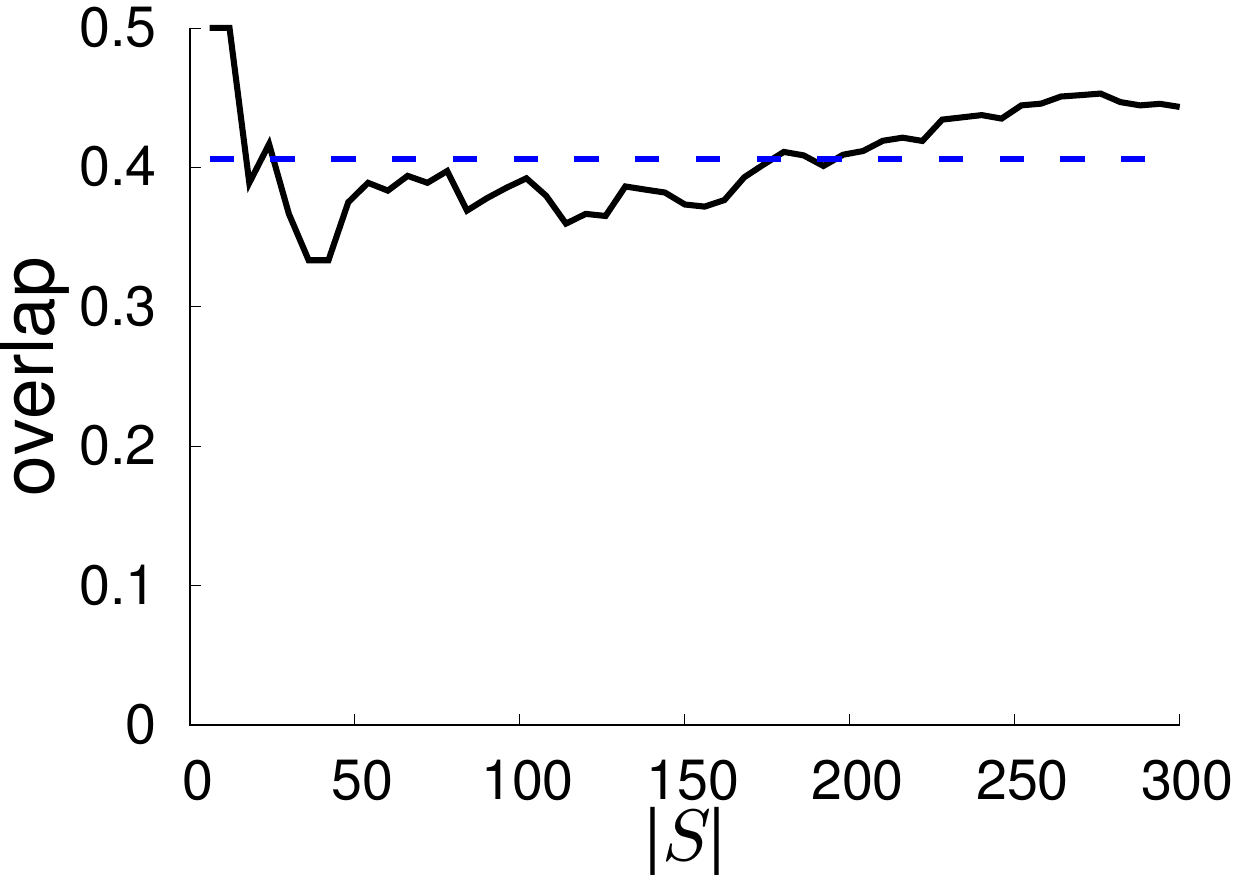}}
  \subfloat[Gowalla]{\includegraphics[width=.33\linewidth]{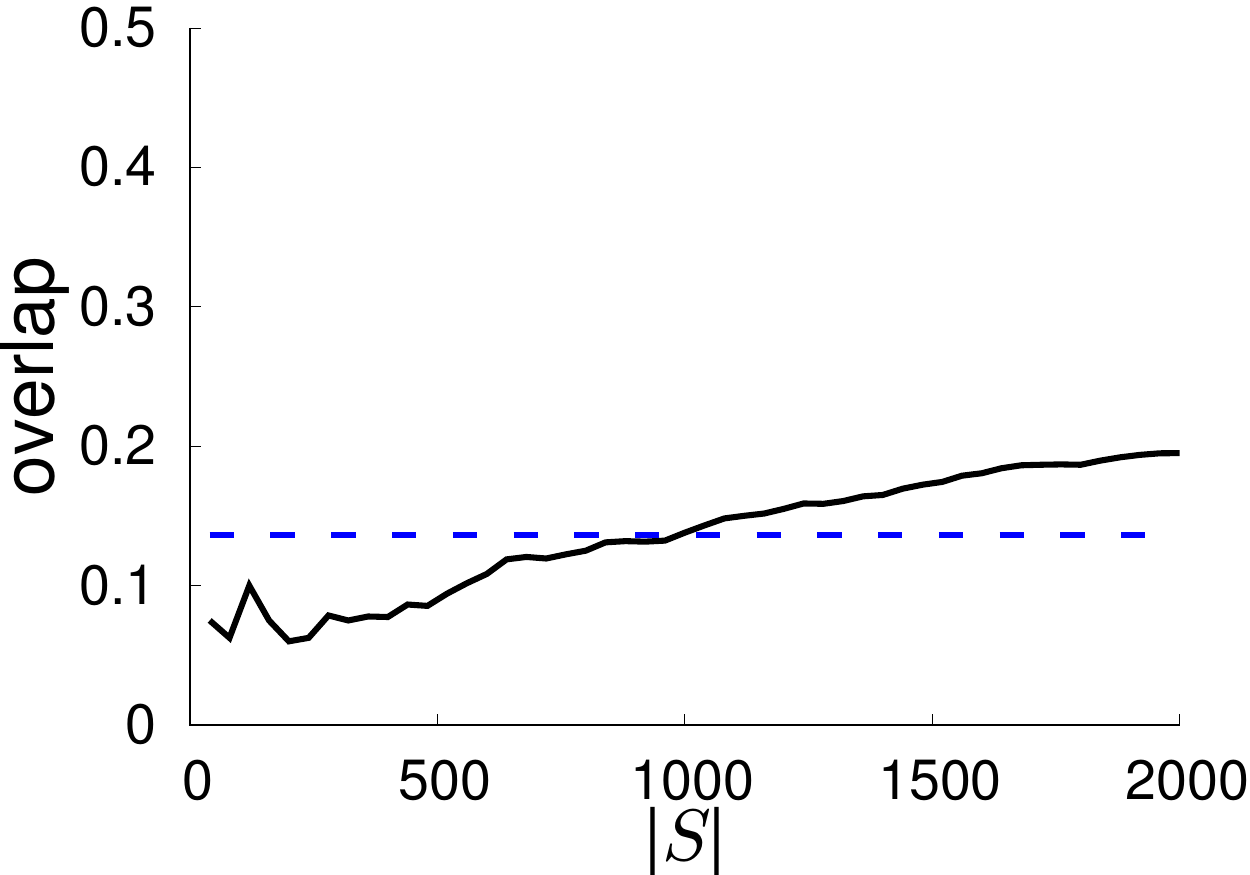}}
  \subfloat[DBLP]{\includegraphics[width=.33\linewidth]{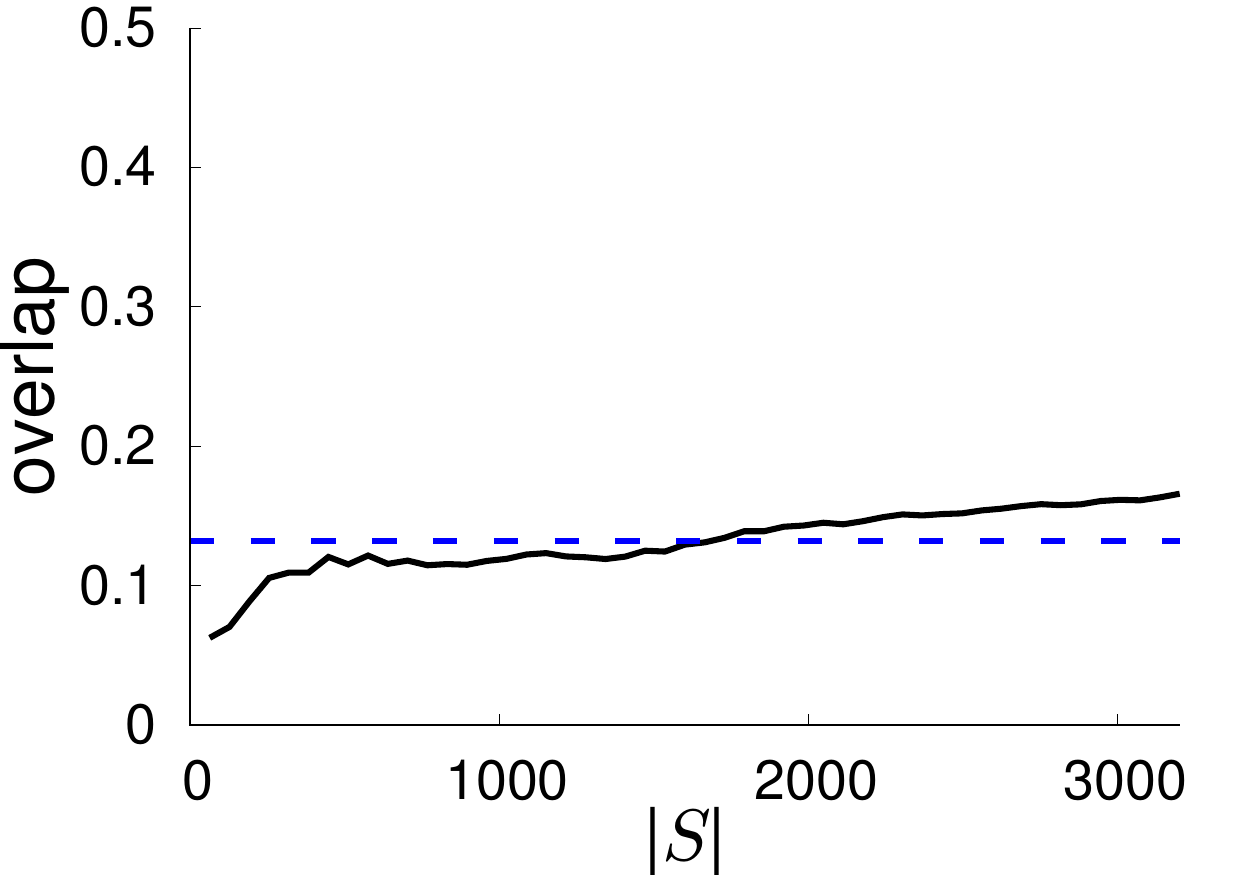}} \\
  \subfloat[Amazon]{\includegraphics[width=.33\linewidth]{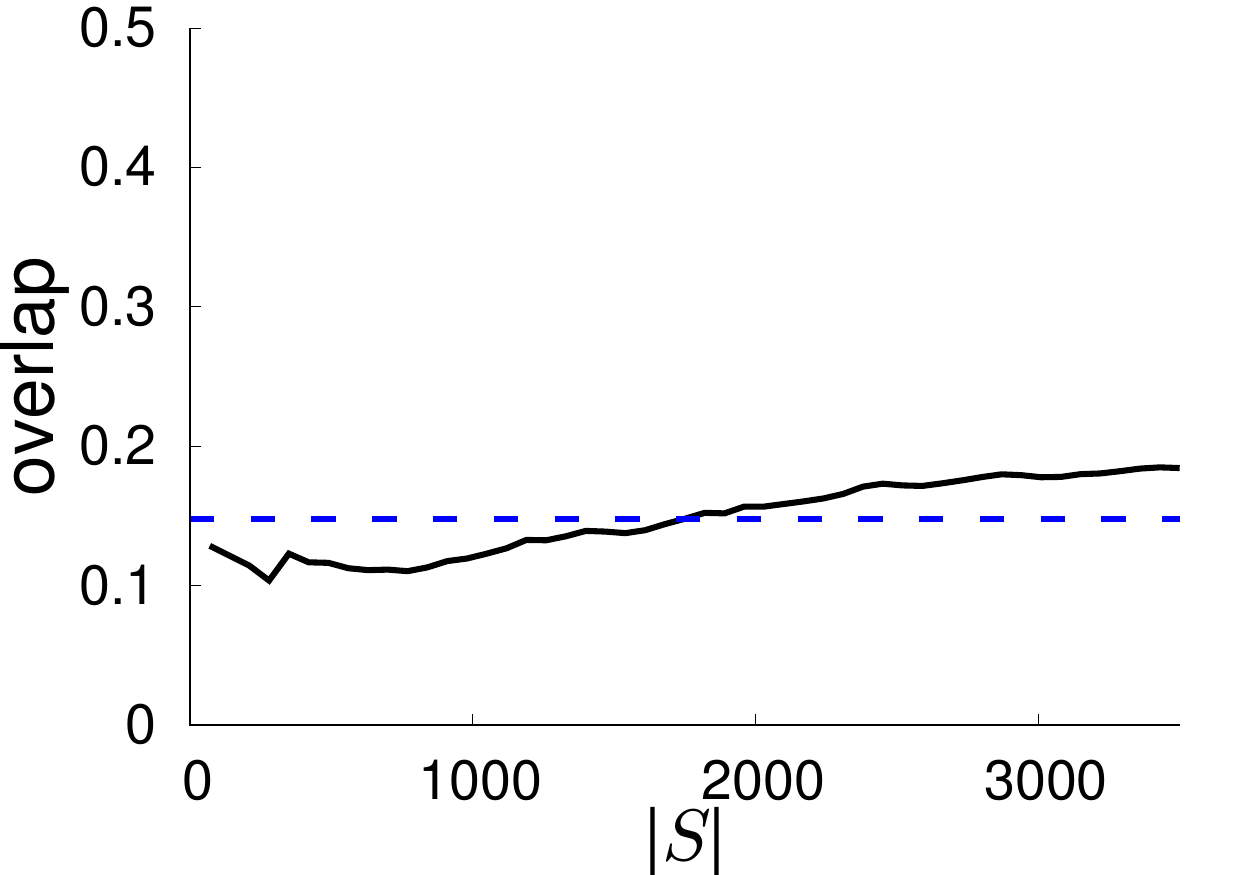}}
  \subfloat[YouTube]{\includegraphics[width=.33\linewidth]{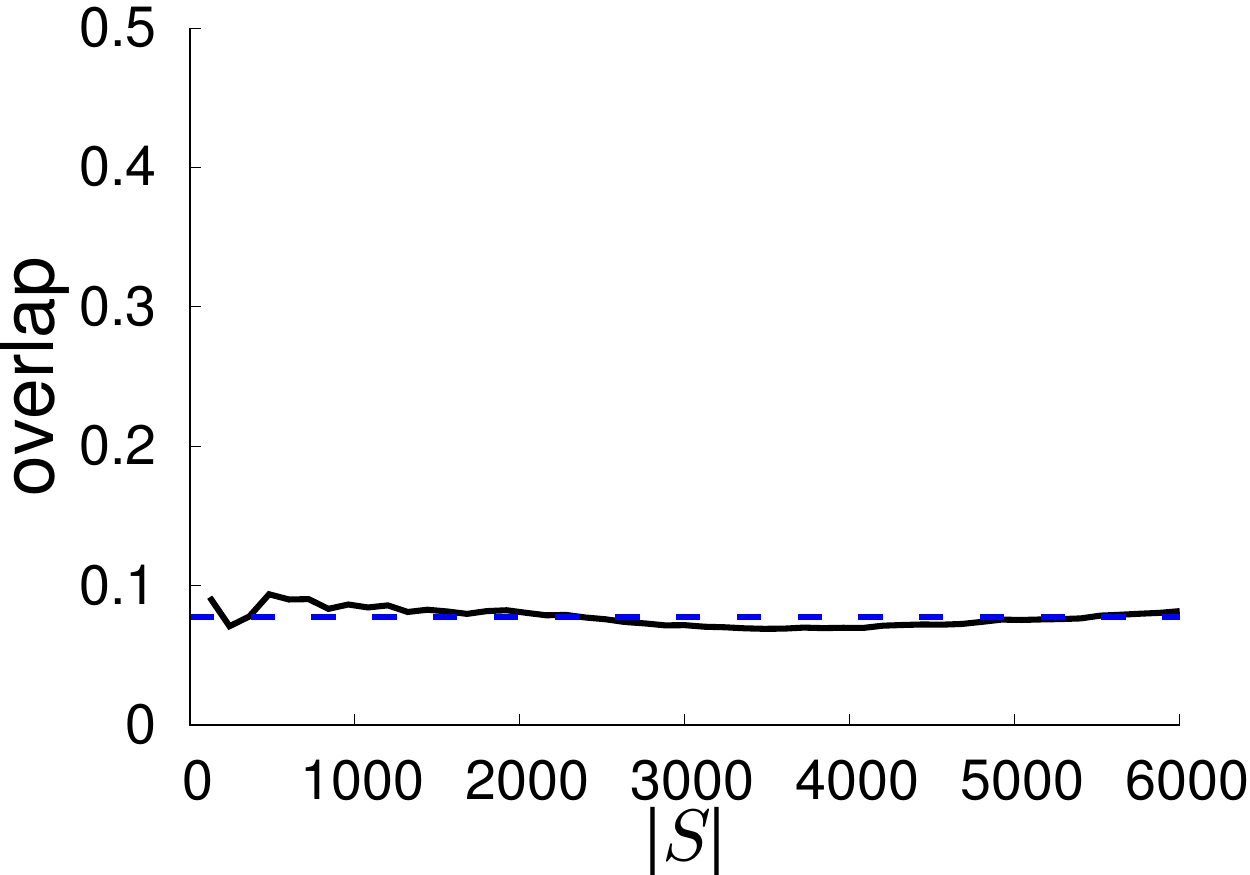}}
  \subfloat[Patents]{\includegraphics[width=.33\linewidth]{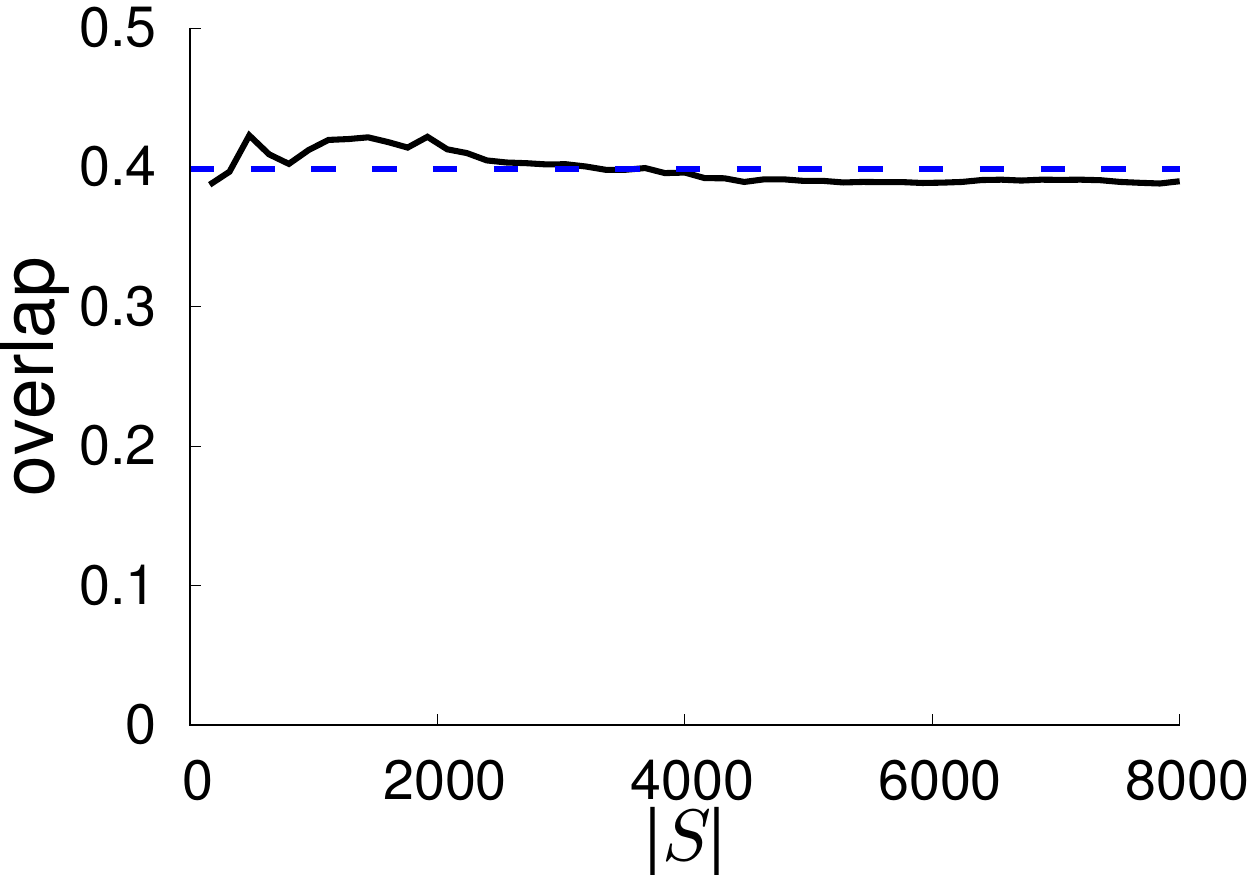}}
  \caption{Overlap between two sets of nodes maximizing D-AP and minimizing D-HT}
  \label{fig:overlap}
\end{figure}

We observe that the overlap is actually small.
On all of these tested graphs, the overlap is less than $50\%$, and on some
graphs, e.g., YouTube, the overlap could be as low as less than $10\%$.
Hence, we demonstrate that the previous hypothesis is actually not true, and it is
necessary to study the two problems separately.
It also makes sense to study the composite optimization
problem~\eqref{eq:composite_opt} as we discussed in~\autoref{ss:objectives}.

\subsection{Cascades Detection on Real Follower Networks}

We next show the usefulness of node discoverability optimization problem in
cascades detection.
The cascades detection problem has been extensively studied in the
literature~\cite{Leskovec2007a,Wilson2009a,Christakis2010,Garcia-Herranz2014,Sun2014,Zhao2014a,Mahmoody2016},
and the goal is to pick a few nodes as sensors from a network so that these
sensors can detect information diffusions in the network as many as possible and
also with time delay as small as possible.
In practice, the cascades detection problem has application in recommending users
(or information sources) that a new user should follow so that the new user will
have maximum information coverage and also minimum time delay of receiving
information in a follower network such as Twitter and Sina Weibo.
As we discussed in Introduction, this problem can also be formulated as a node
discoverability optimization problem.
In the following discussion, we evaluate the quality of nodes obtained by solving
node discoverability optimization from the perspective of maximizing information
coverage and minimizing time delay.

We use two real-world follower networks from Weibo and Douban, which are two
popular OSNs in China, and the graph statistics are summarized in
Table~\ref{tab:data}.
In a follower network, an edge has direction from a user to another user she
follows (i.e., from a follower to its followee).
However, the the direction of information diffusion on a follower network is in a
reverse direction, i.e., from a followee to its followers.
Hence, we actually need to solve the node discoverability problem on a reversed
follower network where each edge direction is reversed.

We consider two types of information diffusion on a follower network:
\begin{itemize}
\item \textbf{Random walk (RW) diffusion:} A piece of information spreads on a
  follower network in the way of random walk.
  That is, at each step of diffusion, the information cascade randomly picks a
  neighbor of current resident node to infect.
  The RW diffusion model is inspired from the letter forwarding process in
  Milgram's experiment~\cite{Travers1969}.
\item \textbf{Independent cascade (IC) diffusion:} Each information cascade starts
  from a seed node.
  When a node $i$ first becomes active at step $t$, it is given only one chance to
  infect each of its neighbors $j$ with success probability $p_{ij}$.
  If a neighbor $j$ is infected at $t$, then $j$ becomes active at next step
  $t+1$; but whether $i$ succeeds in infecting its neighbors at step $t$, it
  cannot make any further attempts to infect its
  neighbors~\cite{Bikhchandani1992}.
\end{itemize}

We simulate $100,000$ and $200,000$ cascades on Weibo and Douban respectively, and
measure the fraction of cascades detected by a set of nodes (referred to as the
{\em coverage}), and also the average minimum time delay of detecting a cascade
(referred to as the {\em delay}).
We set cardinality budgets to be $0.1\%,0.2\%$ and $0.3\%$ of graph size, and
depict the performance of different sets of nodes in
Figures~\ref{fig:weibo_cascades} and~\ref{fig:douban_cascades}.

\begin{figure}[h]
  \centering
  \subfloat[RW]{\includegraphics[width=.4\linewidth]{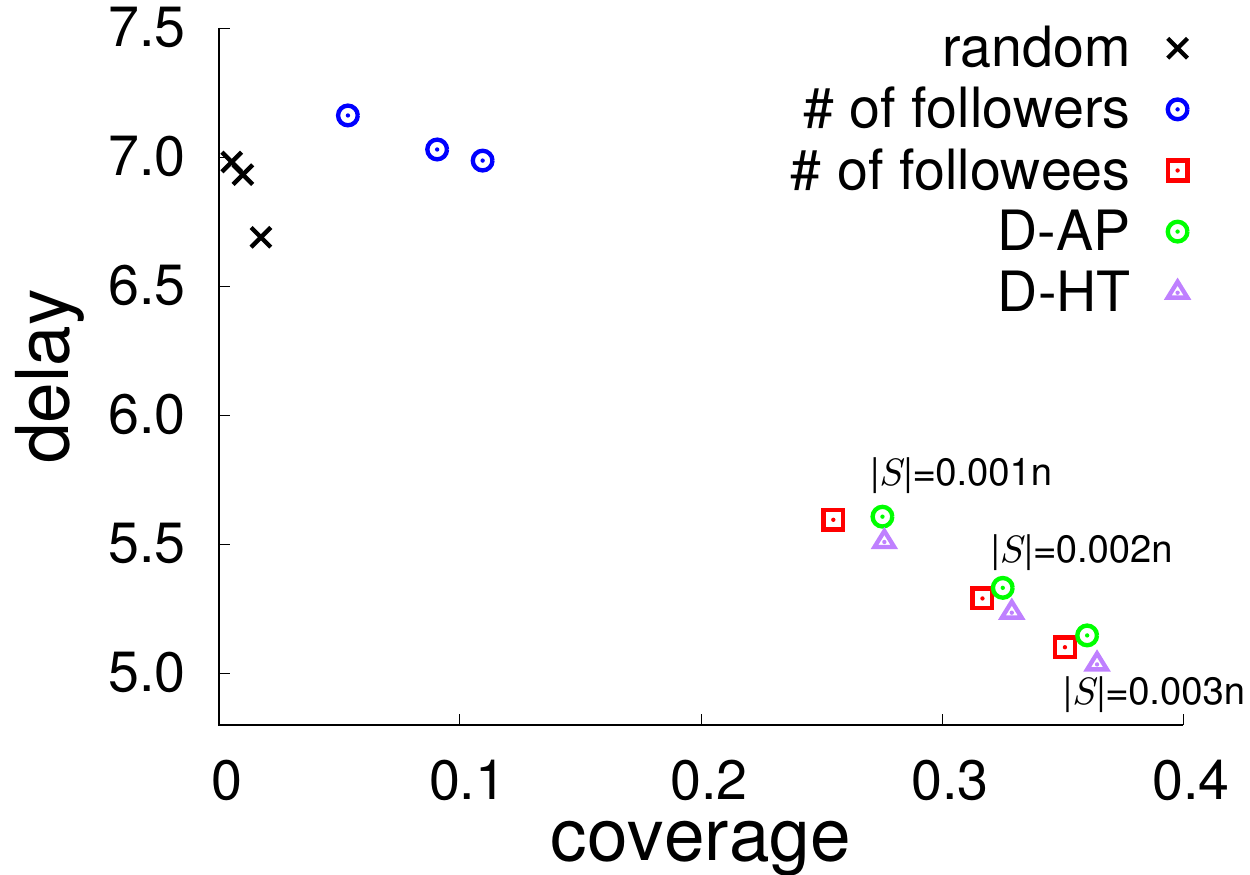}}\quad
  \subfloat[IC]{\includegraphics[width=.4\linewidth]{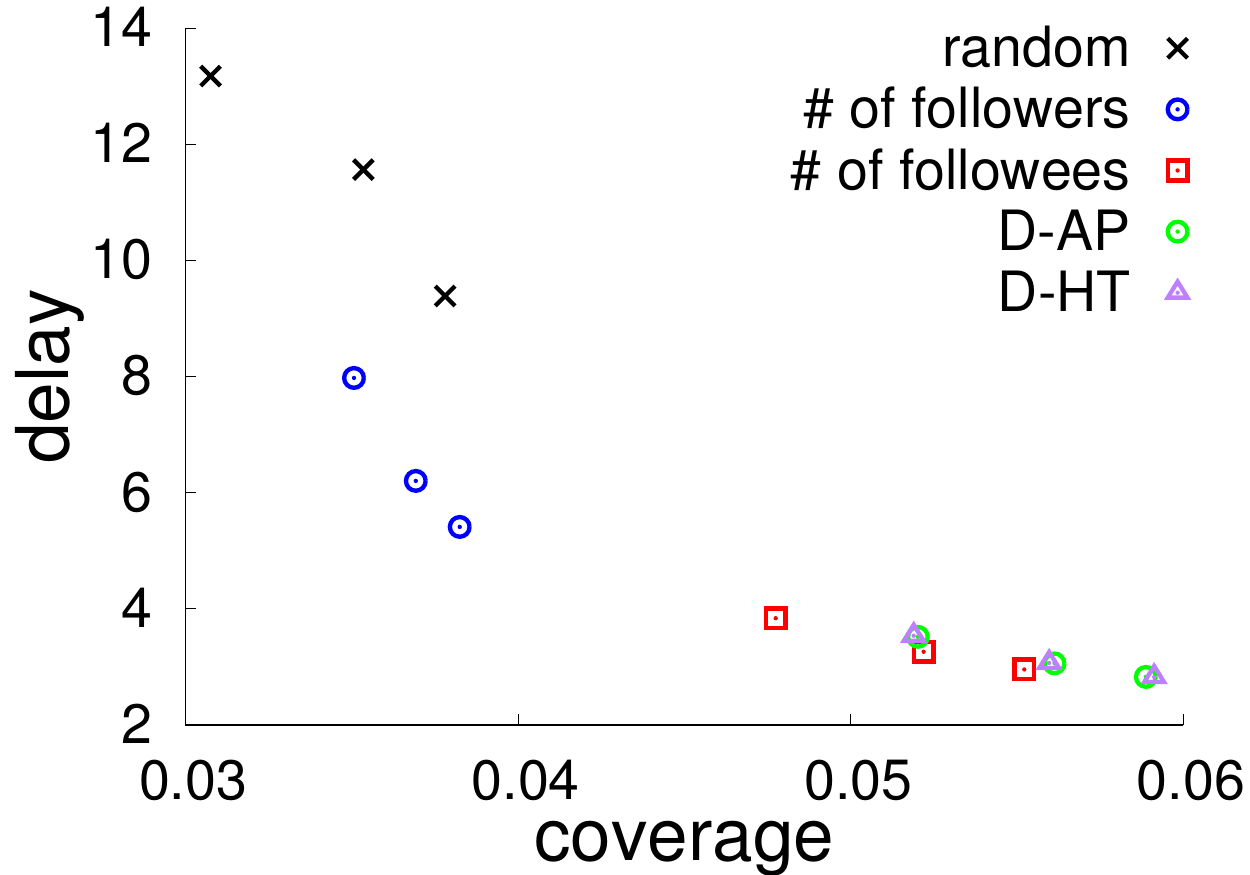}}
  \caption{Cascades detection on Weibo}
  \label{fig:weibo_cascades}
\end{figure}

\begin{figure}[h]
  \centering
  \subfloat[RW]{\includegraphics[width=.4\linewidth]{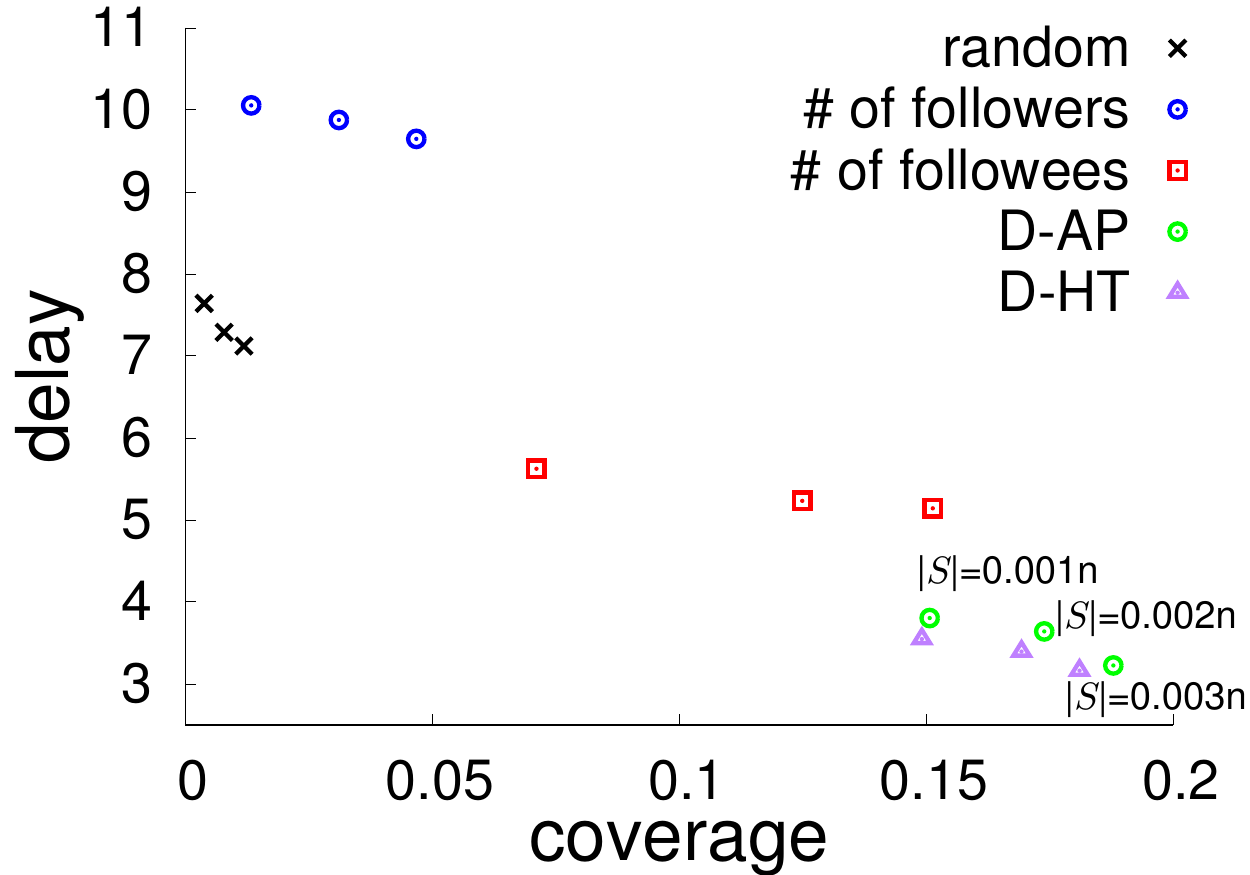}}\quad
  \subfloat[IC]{\includegraphics[width=.4\linewidth]{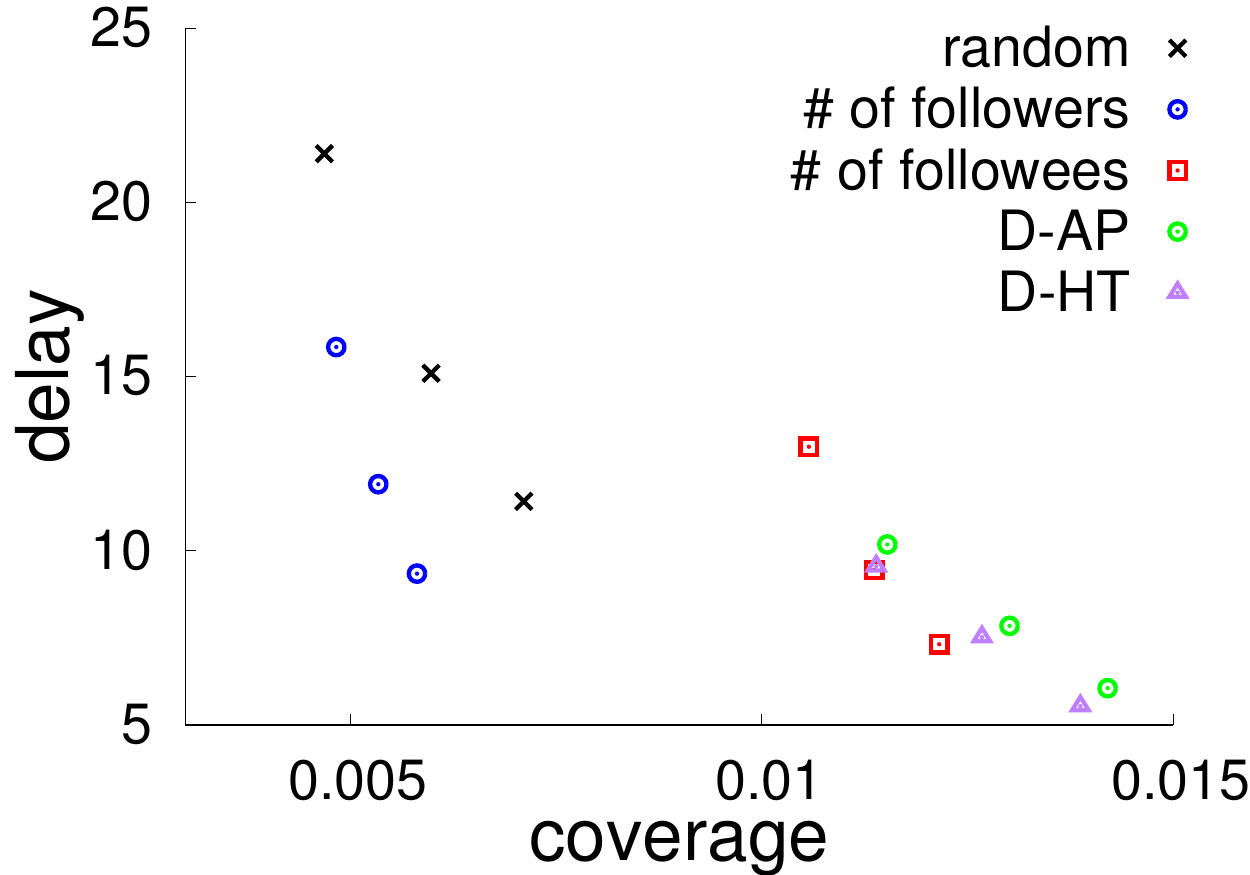}}
  \caption{Cascades detection on Douban}
  \label{fig:douban_cascades}
\end{figure}

In the plot, points lay on the bottom right corner imply good performance as these
nodes detect cascades with large coverage and small delay; while points lay on the
top left corner imply poor performance as these nodes detect cascades with small
coverage and large delay.
We observe that, for both diffusion models, nodes obtained by solving node
discoverability optimization problems are close to the bottom right corner,
indicating good performance; nodes obtained by the other methods, e.g., random and
top largest number of followers, are close to the top left corner, indicating the
poor performance.
We also observe that nodes minimizing D-HT usually have smaller delay than nodes
maximizing D-AP, except the case of IC model on Weibo which is indistinguishable.
In conclusion, the results show the usefulness of node discoverability
optimization problem on cascades detection.

\section{Related Work}
\label{sec:relatedwork}

This section is devoted to review some related literature.

Node discoverability is related to the concept of node
centrality~\cite{Freeman1978,Boldi2014}, which captures the importance of a node
in analyzing complex networks, such as closeness~\cite{Cohen2014} and
betweenness~\cite{MahmoodyEtAl2016}.
The classic closeness centrality~\cite{Cohen2014} characterizes how close a node
is to other nodes in a graph, and can be easily modified to measure how close the
other nodes to the target node.
If we use this modified closeness centrality to measure the target node's
discoverability, we will bear the burden of solving the shortest path problem,
which is a notorious difficulty on large scale weighted graphs.
So it is not scalable to use closeness or other shortest path based centrality
measures to quantify a node's discoverability.

Two recent work~\cite{Antikacioglu2015} and~\cite{Rosenfeld2016} shed some light
on defining proper node discoverability.
Antikacioglu et al.~\cite{Antikacioglu2015} study the web discovery optimization
problem in an e-commerce website, and their goal is to add links from a small set
of popular pages to new pages to make as many new pages discoverable as possible.
They define a page is discoverable if the page has at least $a\geq 1$ links from
popular pages.
However, such a definition may be too strict, as it actually assumes that a user
only browses a site for one hop.
In fact, a user could browse the site for several hops, and finally discover a
page, even though the page may have no link from popular pages.
Rosenfeld and Globerson~\cite{Rosenfeld2016} study the optimal tagging problem in
a network consisting of tags and items, and their goal is to pick $k$ tags for a
new item in order to maximize the new item's incoming traffic.
This problem is formulated as maximizing the absorbing probability of the new item
in an absorbing Markov chain.
Measuring a node's discoverability by absorbing probability relieves the
restriction of~\cite{Antikacioglu2015}, but it implicitly assumes that a user has
infinite amount of time or patience to browse the network to discover an item,
which is, however, not the case~\cite{Simon1971,Scaria2014}.
We avoid the two extremes by taking a Middle Way, and propose two orthogonal
definitions of node discoverability based on finite length random walks.

Our proposed node discoverability definitions D-AP and D-HT leverage the theory of
absorbing Markov chains~\cite{Doyle1984,Trivedi2016}.
Recently, Mavroforakis et al.~\cite{Mavroforakis2015} propose the absorbing random
walk centrality to measure a node's importance in a graph.
Golnari et al.~\cite{Golnari2015} propose several measures based on hitting time
to measure node reachability in communication networks.
Hitting time is also used in measuring node
similarity~\cite{Sarkar2008,Sarkar2010a} in large graphs, and finding dominating
sets of a graph~\cite{Li2014}.

From the algorithmic point of view, our method leverages submodularity and
supermodularity of the defined discoverability measures, and uses the greedy
heuristic~\cite{Nemhauser1978,Krause2014} to solve the optimization problem.
There has been rich literature in scaling up the greedy algorithm in different
applications, e.g., solving the set cover problem for data residing on
disk~\cite{Cormode2010}, solving the max-$k$ cover problem using
MapReduce~\cite{Chierichetti2010}, calculating group closeness centrality by
exploiting the properties of submodular set functions~\cite{Zhao2017}, etc.
In contrast, we design an ``estimation-and-refinement'' approach for implementing
an efficient oracle call in the greedy algorithm, built on top of the contemporary
efficient random walk simulation systems~\cite{Fogaras2005,Kyrol2013,Liu2016c}.

\section{Conclusion}
\label{sec:conclusion}

This work considers a general problem of node discoverability optimization problem
on networks, that appears in a wide range of applications.
We propose two definitions of node discoverability, namely, D-AP based on the
truncated absorbing probability, and D-HT based on the truncated hitting time.
Although optimizing a target node's discoverability with regard to the two
measures is NP-hard, we find that the two measures satisfy submodularity and
supermodularity, respectively.
This enables us to use the greedy algorithm to find provably near-optimal
solutions for the optimization problem.
To scale up the greedy algorithm for handling large networks, we propose an
efficient estimation-and-refinement implementation of the oracle call.
Experiments conducted on real graphs demonstrate that our method provides a good
trade-off between estimation accuracy and computational efficiency, and it
achieves thousands of times faster than the method using dynamic programming.

\bibliographystyle{model5-names}

\appendix

\section*{Proof of Theorem~\ref{thm:np-hard}}
\label{app:proof-np-hard}

\begin{proof}
  The D-AP maximization problem can be easily reduced from the optimal tagging
  problem~\cite{Rosenfeld2016}, which has been proved to be NP-hard.
  Hence, the D-AP maximization problem is NP-hard.
  We only need to prove the NP-hardness of D-HT minimization problem.

  We prove that the decision problem of D-HT minimization problem is NP-complete
  by a reduction from the vertex cover problem.
  The decision problem asks: Given a graph $G$ and some threshold $J$, does there
  exist a solution $S$ such that $\FHT(S)\leq J$?
  We will prove that, given threshold $J(k)$, there exists a solution $S$ for the
  decision problem iff a vertex cover problem has a cover $S$ of size at most $k$.

  The vertex cover problem is defined on an undirected graph $H=(V,E)$, where
  $V=\{0,\ldots,n-1\}$, and $E\subseteq V\times V$.
  Let $S\subseteq V$ denote a subset of vertices of size $k$.
  We construct an instance of the D-HT minimization problem on directed graph
  $G=(V', E')$, where $V'=V\cup\{m,n\}$ and edge set $E'$ includes both $(i,j)$
  and $(j,i)$ for each edge $(i,j)\in E$.
  $E'$ contains additional edges: For each $i\in V$, we add an edge $(i,m)$ with
  proper weight to make the transition probabilities $p_{im}=\epsilon$; we add
  self-loop edges to vertices $m$ and $n$, and thus $m$ and $n$ become two
  absorbing vertices, i.e., transition probabilities $p_{mm}=p_{nn}=1$.
  For this particular instance of D-HT minimization problem, we need to choose
  connection sources $S$ from $V$; once a source $s$ is selected, we set
  transition probability $p_{sn}=1$, which is equivalent to set edge weight
  $w_{sn}=\infty$.

  Assume $S$ is a vertex cover on graph $H$.
  Then, for each vertex $i\in S$, a walker starting from $i$ hits $n$ using one
  step with probability $1$.
  For each vertex $i\in V\backslash S$, a walker starting from $i$ hits $m$ and
  becomes absorbed on $m$ with probability $\epsilon$ (the corresponding hitting
  time is $T$); the walker passes a neighbor in $V$, which must be in $S$, and
  then hits $n$, with probability $1-\epsilon$ (the corresponding hitting time is
  $2$).
  This achieves the minimum D-HT, denoted by $J(k)\triangleq \FHT(S)=\frac{k}{n} +
  \frac{n-k}{n}[2(1-\epsilon) + T\epsilon]$.

  If a solution $S$ satisfies $\FHT(S)\leq J(k)$ on graph $G$, then $S$ must be a
  vertex cover on graph $H$.
  Otherwise, assume $S$ is not a vertex cover on graph $H$.
  Then there must be an edge $(i,j)$ such that $i,j\notin S$.
  The probability that a walker starting from $i$ and becoming absorbed at vertex
  $m$ will be strictly larger than $\epsilon$, and becomes absorbed at vertex $n$
  using two steps will be strictly smaller than $1-\epsilon$.
  As a result, the hitting time from $i$ will be strictly larger than
  $2(1-\epsilon)+T\epsilon$ whenever $T\geq 3$.
  Thus, $\FHT(S)>J(k)$.

  The above analysis indicates that if there exists an efficient algorithm for
  deciding whether there exists a set $S, |S|=k$ such that $\FHT(S) \geq J(k)$ on
  graph $G$, we could use the algorithm to decide whether graph $H$ has a vertex
  cover of size at most $k$, thereby demonstrating the NP-hardness of the D-HT
  minimization problem.
\end{proof}

\section*{Proof of Theorem~\ref{thm:FAP}}
\label{app:proof-submodular}

\begin{proof}
  The monotonicity and submodularity of a set function is both closed under
  non-negative linear combinations.
  Hence, for $\FAP(S)=1/n\sum_{i\in V}p_i^T(S)$, we only need to prove that
  $p_i^T(S)$ is non-decreasing and submodular.

  \header{Monotonicity.}
  To show that $p_i^T(S)$ is non-decreasing $\forall i\in V$, we use induction on
  $T$.
  Let $S_1\subseteq S_2\subseteq V$, and $i\in V$.
  For $T=0$, it holds that $p_i^0(S_1)=p_i^0(S_2)=0$.
  (Also notice that $p_n^t(S)\equiv 1,\forall S,\forall t$.)

  Assume the conclusion holds for $T=t$, i.e., $p_i^t(S_1)\leq p_i^t(S_2)$.
  Consider the case when $T=t+1$,
  \begin{align*}
    p_i^{t+1}(S_1) - p_i^{t+1}(S_2)
    &= \sum_k\bigl[ p_{ik}(S_1)p_k^t(S_1) - p_{ik}(S_2)p_k^t(S_2) \bigr] \\
    &\leq \sum_k\bigl[ p_{ik}(S_1) - p_{ik}(S_2) \bigr]p_k^t(S_2) \\
    &= \sum_{k\neq n}\bigl[ p_{ik}(S_1) - p_{ik}(S_2) \bigr]p_k^t(S_2) \\
    &\hspace{30pt} + \bigl[ p_{in}(S_1) - p_{in}(S_2) \bigr]p_n^t(S_2) \\
    &\leq\sum_{k\neq n}\bigl[ p_{ik}(S_1) - p_{ik}(S_2) \bigr]
      + p_{in}(S_1) - p_{in}(S_2) \\
    &= \sum_k\bigl[ p_{ik}(S_1) - p_{ik}(S_2) \bigr] \\
    &= 0.
  \end{align*}
  The first inequality holds due to the induction assumption, and the last
  inequality holds because $p_{ik}(S_1)\geq p_{ik}(S_2)$ for $k\neq n$,
  $p_k^t(S_2)\leq 1$, and $p_n^t(S_2)=1$.
  Thus, by induction, we conclude that $p_i^T(S)$ is non-decreasing.

  \header{Submodularity.}
  To show that $p_i^T(S)$ is submodular $\forall i\in V$, we also use induction.
  Let $S_1\subseteq S_2\subseteq V, s\in V\backslash S_2, S_1'\triangleq
  S_1\cup\{s\}, S_2'\triangleq S_2\cup\{s\}$, and $\delta_i^t(s;S)\triangleq
  p_i^t(S\cup\{s\})-p_i^t(S)$.
  Notice that $\delta_n^t(s;S)\equiv 0, \forall S,\forall t$.
  For $T=0$, because $p_i^0(S)=0, \forall S\subseteq V$, then $\delta_i^0(s;S_1) =
  \delta_i^0(s;S_2)$.
  Assuming $\delta_i^t(s;S_1)\geq\delta_i^t(s;S_2)$ holds for $T=t$, we consider
  the case when $T=t+1$.

  \noindent\textbullet\, $i\in V\backslash S_2'\cup S_1$.
  In this case, probability transitions $\{p_{ik}\}_{k\in V}$ are all constants,
  i.e., $p_{ik}(S_1') = p_{ik}(S_1) = p_{ik}(S_2) = p_{ik}(S_2')\triangleq
  p_{ik}$.
  So,
  \begin{align*}
    \delta_i^{t+1}(s;S_1)
    &= \sum_kp_{ik}\bigl[ p_k^t(S_1')-p_k^t(S_1) \bigr] \\
    &= \sum_kp_{ik}\delta_k^t(s;S_1) \\
    &\geq \sum_kp_{ik}\delta_k^t(s;S_2) \\
    &= \delta_i^{t+1}(s;S_2).
  \end{align*}

  \noindent\textbullet\, $i\in S_1\backslash S_2$.
  In this case, probability transitions have relation $p_{ik}(S_1') =
  p_{ik}(S_1)\geq p_{ik}(S_2) = p_{ik}(S_2')$, for $k\neq n$.
  Hence,
  \begin{align*}
    \delta_i^{t+1}(s;S_1) - \delta_i^{t+1}(s;S_2)
    &= \sum_k\biggl\{ p_{ik}(S_1)
      \bigl[ p_k^t(S_1') - p_k^t(S_1) \bigr]\\
    &\hspace{40pt} -p_{ik}(S_2)
      \bigl[ p_k^t(S_2')-p_k^t(S_2) \bigr] \biggr\} \\
    &= \sum_{k\neq n}
      \bigl[ p_{ik}(S_1) \delta_k^t(s;S_1) - p_{ik}(S_2)\delta_k^t(s;S_2) \bigr] \\
    &\geq\sum_{k\neq n}p_{ik}(S_2)
      \bigl[ \delta_k^t(s;S_1)-\delta_k^t(s;S_2) \bigr] \\
    &\geq 0.
  \end{align*}

  \noindent\textbullet\, $i=s$.
  In this case, probability transitions have relation $p_{ik}(S_2') =
  p_{ik}(S_1')\leq p_{ik}(S_1) = p_{ik}(S_2)$, for $k\neq n$.
  So,
  \begin{align*}
    \delta_i^{t+1}(s;S_1) -\delta_i^{t+1}(s;S_2)
    &=\sum_k\biggl\{ p_{ik}(S_1) \bigl[ p_k^t(S_2) - p_k^t(S_1) \bigr]\\
    &\hspace{40pt}
      -p_{ik}(S_1') \bigl[ p_k^t(S_2') - p_k^t(S_1') \bigr]\biggr\} \\
    &\geq\sum_{k\neq n}p_{ik}(S_1)
      \bigl[ \delta_k^t(s;S_1) - \delta_k^t(s;S_2) \bigr]\\
    &\geq 0.
  \end{align*}
  The three cases above have covered each $i\in V$.
  By induction, we then conclude that $p_i^T(S)$ is a submodular set function, and
  this completes the proof of Theorem~\ref{thm:FAP}.
\end{proof}

\section*{Proof of Theorem~\ref{thm:FHT}}
\label{app:proof-supermodular}

\begin{proof}
  The monotonicity and supermodularity of a set function is both closed under
  non-negative linear combinations.
  Hence, for $\FHT(S)=1/n\sum_{i\in V}h_i^T(S)$, we only need to prove that
  $h_i^T(S)$ is non-increasing and supermodular.

  \header{Monotonicity.}
  To show that $h_i^T(S)$ in is non-increasing $\forall i\in V$, we use induction.
  Let $S_1\subseteq S_2\subseteq V$.
  According to the definition of hitting time given in Definition~\ref{def:ht}, we
  find that, for $T=0$, $h_i^0(S_1)=h_i^0(S_2)=0, \forall i\in V$.

  Now we assume that the conclusion holds for $T=t$, i.e., $h_i^t(S_1)\geq
  h_i^t(S_2)$ holds for every $i\in V$.
  (Notice that $h_n^t(S)\equiv 0, \forall S,\forall t$.)
  Consider the case when $T=t+1$,
  \begin{align*}
    h_i^{t+1}(S_1)
    &= 1+\sum_{k\neq n}p_{ik}(S_1)h_k^t(S_1) \\
    &\geq 1+\sum_{k\neq n}p_{ik}(S_2)h_k^t(S_2) \\
    &= 1+\sum_kp_{ik}(S_2)h_k^t(S_2) \\
    &= h_i^{t+1}(S_2).
  \end{align*}
  The inequality holds because $h_k^t(S_1)\geq h_k^t(S_2)$ and $p_{ik}(S_1)\geq
  p_{ik}(S_2)$ for $k\neq n$ both hold.
  The first holds due to the induction assumption, and the second holds because
  that the transition probability from a transit state $i$ to transit state $k$ is
  impossible to increase when more nodes in $S_2\backslash S_1$ are connected to
  the absorbing state $n$, i.e., $p_{ik}(S_1)\geq p_{ik}(S_2)$ for $k\neq n$.

  By induction, we conclude that $h_i^T(S)$ is non-increasing.

  \header{Supermodularity.}
  We use induction to show that $h_i^T(S)$ is a supermodular set function.
  Let $S_1'\triangleq S_1\cup\{s\}$ and $S_2'\triangleq S_2\cup\{s\}$, where $s\in
  V\backslash S_2$.
  Let $\delta_i^t(s;S)\triangleq h_i^t(S\cup\{s\})-h_i^t(S)\leq 0$ denote the
  marginal gain.
  (Notice that $\delta_n^t(s;S)\equiv 0, \forall S,\forall t$.)
  For $T=0$, $\delta_i^0(s;S_1) = \delta_i^0(s;S_2) = 0$.
  Assume the conclusion holds for $T=t$, i.e., $\delta_i^t(s;S_1)\leq
  \delta_i^t(s;S_2)$.
  To show that the conclusion holds for $T=t+1$, we need to consider three cases:

  \noindent\textbullet\, $i\in V\backslash S_2'\cup S_1$.
  In this case, probability transitions $\{p_{ik}\}_{k\in V}$ are constants, i.e.,
  $p_{ik}(S')=p_{ik}(S)=p_{ik}(S_2)=p_{ik}(T')\triangleq p_{ik}$, for $k\neq n$.
  So,
  \begin{align*}
    \delta_i^{t+1}(s;S_1)
    &= \sum_k p_{ik}\bigl[ h_k^t(S_1')-h_k^t(S_1) \bigr] \\
    &= \sum_{k\neq n} p_{ik}\delta_k^t(s;S_1) \\
    &\leq \sum_{k\neq n} p_{ik}\delta_k^t(s;S_2) \\
    &= \delta_i^{t+1}(s;S_2).
  \end{align*}

  \noindent\textbullet\, $i\in S_2\backslash S_1$.
  In this case, probability transitions satisfy relation $p_{ik}(S_1') =
  p_{ik}(S_1)\geq p_{ik}(S_2) = p_{ik}(S_2')$.
  So,
  \begin{align*}
    \delta_i^{t+1}(s;S_1)-\delta_i^{t+1}(s;S_2)
    &= \sum_k\bigl[
      p_{ik}(S_1)\delta_k^t(s;S_1) - p_{ik}(S_2)\delta_k^t(s;S_2) \bigr] \\
    &\leq \sum_kp_{ik}(S_2)\bigl[ \delta_k^t(s;S_1) - \delta_k^t(s;S_2) \bigr] \\
    &\leq 0.
  \end{align*}
  (Note that $\delta_k^t(s;S_1)\leq 0$ due to monotonicity.)

  \noindent\textbullet\, $i=s$.
  In this case, probability transitions have relation $p_{ik}(T') = p_{ik}(S')\leq
  p_{ik}(S) = p_{ik}(S_2)$, for $k\neq n$.
  So,
  \begin{align*}
    \delta_i^{t+1}(s;S_1) -\delta_i^{t+1}(s;S_2)
    &=\sum_k\biggl\{p_{ik}(S_1') \bigl[h_k^t(S_1') - h_k^t(S_2')\bigr]\\
    &\hspace{40pt}
      -p_{ik}(S_1) \bigl[ h_k^t(S_1) - h_k^t(S_2) \bigr]\biggr\} \\
    &\leq\sum_{k\neq n} p_{ik}(S)\bigl[\delta_k^t(s;S_1) - \delta_k^t(s;S_2)\bigr] \\
    &\leq 0.
  \end{align*}
  The three cases above have covered each $i\in V$.
  By induction, we conclude that $h_i^T(S)$ is a supermodular set function, and
  this completes the proof of Theorem~\ref{thm:FHT}.
\end{proof}

\section*{Proof of Theorem~\ref{thm:bound1}}
\label{app:proof-bound1}

\begin{proof}
  Define random variable $X_{ir}\triangleq b_{ir}/(nR)\in [0,(nR)^{-1}]$, and note
  that $\hFAP = 1/n\sum_{i\in V}\sum_{r=1}^Rb_{ir}/R = \sum_{i,r}b_{ir}/(nR) =
  \sum_{i,r}X_{ir}$.
  The Hoeffding inequality yields $P(|\hFAP-\FAP|\geq\delta) \leq
  2\exp(-2nR\delta^2)$.
  Letting the probability be less than $\epsilon$, we obtain $R\geq
  \frac{1}{2n\delta^2}\ln(\frac{2}{\epsilon})$.

  Similarly, to show the bound of $R$ in estimating D-HT, we can define another
  random variable $Y_{ir}\triangleq t_{ir}/(nR)\in [0,T/(nR)]$.
  Applying the Hoeffding inequality again yields $R\geq\frac{1}{2n\delta^2}
  \ln(\frac{2}{\epsilon})$.
\end{proof}

\section*{Proof of Theorem~\ref{thm:bound2}}
\label{app:proof-bound2}

\begin{proof}
  Given $S\subseteq V$, for a node $s\in V\backslash S$, and $S'\triangleq
  S\cup\{s\}$, we have
  \begin{align*}
     & P(|\hdAP(s;S)-\dAP(s;S)|\geq \delta/c_s) \\
    =\,& P(|[\hFAP(S')-\FAP(S')] - [\hFAP(S)-\FAP(S)]|\geq\delta) \\
    \leq\,& P(|\hFAP(S')-\FAP(S')| + |\hFAP(S)-\FAP(S)|\geq\delta) \\
    \leq\,& P(|\hFAP(S')-\FAP(S')|\geq\delta/2)+P(|\hFAP(S)-\FAP(S)|\geq\delta/2).
  \end{align*}
  Now we directly apply the conclusion in the proof of Theorem~\ref{thm:bound1}.
  The first probability of the right hand side satisfies
  \begin{align*}
    P(|\hFAP(S')-\FAP(S')|\geq\delta/2) \leq 2\exp(-nR\delta^2/2).
  \end{align*}
  The second probability of the right hand side satisfies
  \[
    P(|\hFAP(S)-\FAP(S)|\geq\delta/2)\leq 2\exp(-nR\delta^2/2).
  \]
  Together, we have
  \[
    P(|\hdAP(s;S)-\dAP(s;S)|\geq \delta/c_s)\leq 4\exp(-nR\delta^2/2).
  \]
  Applying the union bound, we obtain
  \begin{align*}
    P(\exists s\in V\backslash S,|\hdAP(s;S)-\dAP(s;S)|\geq \delta/c_s)
    &\leq 4(n-|S|)\exp(-nR\delta^2/2) \\
    &\leq 4n\exp(-nR\delta^2/2).
  \end{align*}
  Letting the upper bound be less than $\epsilon$, we get
  $R\geq \frac{2}{n\delta^2} \ln\frac{4n}{\epsilon}$.

  By exactly parallel reasoning, we can obtain that when
  $R\geq \frac{2}{n\delta^2} \ln\frac{4n}{\epsilon}$,
  then $P(\exists s\in V\backslash S,
  |\hdHT(s;S)-\dHT(s;S)| \geq \delta T/c_s) \leq \epsilon$.
\end{proof}

\end{document}